\definecolor{OliveGreen}{rgb}{0,0.6,0}
\def\lV{\left\lVert }
\def\rV{\right\lVert }
\def\lv{\lvert }
\def\l{\left\langle}
\def\r{\right\rangle}
\def\H{\mathcal{H} }
\def\D{\mathcal{D} }
\def\G{\mathcal{G} }
\def\I{\mathcal{I} }
\def\L{{\mathcal{L}} }
\def\R{{\mathcal{R}} }
\def\B{\mathcal{B} }
\def\P{\mathcal{P} }
\def\E{\mathcal{E} }
\def\Z{\mathcal{Z} }
\def\Y{\mathcal{Y} }
\def\Y{\mathcal{Y} }
\def\C{\mathbb{C} }
\def\bz{\bm{z} }
\def\bx{\bm{x} }
\def\by{\bm{y} }
\def\bw{\bm{w} }
\def\bs{\bm{s} }
\def\bf{\bm{f} }
\def\bv{\bm{v} }
\def\cO{\mathcal{O} }
\def\BL{{\bm{L}} }
\def\BR{{\bm{R}} }
\def\BU{{\bm{U}} }
\def\BV{{\bm{V}} }
\def\BSigma{\bm{\Sigma} }
\def\BQ{{\bm{Q}} }
\def\BX{{\bm{X}} }
\def\BDeltaR{\bm{\Delta_R}}
\def\BDeltaL{\bm{\Delta_L}}
\def\SVD{\mathrm{SVD} }
\def\fro{\mathrm{F}}
\def\rank{\mathrm{rank}}
\DeclareMathOperator*{\argmin}{\mathrm{arg\,min}}
\DeclareMathOperator*{\minimize}{\mathrm{minimize}}
\DeclareMathOperator*{\subjectto}{\mathrm{subject~to}}
\newtheorem{theorem}{Theorem}
\newtheorem{lemma}{Lemma}
\newtheorem{definition}{Definition}
\newtheorem{assumption}{Assumption}
\newtheorem{corollary}{Corollary}
\newtheorem{remark}{Remark}
\Crefname{assumption}{Assumption}{Assumptions}
\Crefname{example}{Example}{Examples}
\Crefname{remark}{Remark}{Remarks}
\begin{document}

\title{Structured Gradient Descent for Fast Robust Low-Rank Hankel Matrix Completion}

\author{
HanQin Cai\thanks{Department of Mathematics, University of California, Los Angeles, Los Angeles, CA 90095, USA. (email: hqcai@math.ucla.edu).}
\and Jian-Feng Cai\thanks{Department of Mathematics, The Hong Kong University of Science and Technology, Clear Water Bay, Kowloon, Hong Kong SAR, China. (email:jfcai@ust.hk, jyouab@connect.ust.hk) .}
\and Juntao You\footnotemark[2] $^,$\thanks{Corresponding author.}
}


\date{}
\maketitle

\begin{abstract}
We study the robust matrix completion problem for the low-rank Hankel matrix, which detects the sparse corruptions caused by extreme outliers while we try to recover the original Hankel matrix from partial observation.
In this paper, we explore the convenient Hankel structure and propose a novel non-convex algorithm, coined Hankel Structured Gradient Descent (HSGD), for large-scale robust Hankel matrix completion problems.
HSGD is highly computing- and sample-efficient compared to the state-of-the-arts. The recovery guarantee with a linear convergence rate has been established for HSGD under some mild assumptions.
The empirical advantages of HSGD are verified on both synthetic datasets and real-world nuclear magnetic resonance signals.
\end{abstract}

%

\section{Introduction} \label{sec:introduction}
Recently, the problems of Hankel matrix have received much attention. A complex-valued Hankel matrix has identical values on every antidiagonal:
\begin{align} \label{eq:Hankel definition}
\H\bx =
\begin{bmatrix}
x_1 & x_2  & \cdots & x_{n_2}\\
x_2 & x_3  & \cdots & x_{n_2+1}  \\
\vdots &  \vdots & \cdots  &\vdots\\
x_{n_1} & x_{n_1+1}  & \cdots   &x_{n}
\end{bmatrix}\in\C^{n_1\times n_2},
\end{align}
where $\bx=[x_1,x_2,\cdots,x_n]^\top\in\C^n$ is a  vector consisting of the distinct entries of the Hankel matrix, and the linear operator $\H:\C^n\rightarrow\C^{n_1\times n_2}$ is called Hankel mapping. Note that we always have $n=n_1+n_2-1$ regardless of the shape of the Hankel matrix. Hence, a Hankel matrix can always be efficiently mapped from a much smaller vector.

The low-rank Hankel matrix has arisen in various applications; for instance, nuclear magnetic resonance (NMR) spectroscopy \cite{holland2011fast,nguyen2013denoising,qu2015accelerated}, medical imaging \cite{jin2016general,jacob2020structured}, seismic imaging \cite{oropeza2011simultaneous,chen2021exact}, and autoregression \cite{lee2020non,lobos2020autoregression}. In real-world applications, there are often two major challenges for data analysis:
\begin{enumerate}
    \item \textit{Missing data.} Due to hardware limitations or time constraints, only partial data can be obtained \cite{tropp2009beyond}. By the structure of the Hankel matrix, we naturally know all the values on an antidiagonal once we observe one entry on it. In this sense, the partial observation of a Hankel matrix must also be Hankel structured.
    \item \textit{Extreme outliers.} Due to hardware malfunction, the recorded data are often sparsely corrupted by extreme outliers \cite{xi2008baseline}. Note that the outliers are usually Hankel structured as well due to the nature of these applications. In fact, the outliers who corrupt only a few entries on an antidiagonal can be easily detected by comparing all values on that antidiagonal. Thus, we are only interested in the Hankel structured outliers that destroy some antidiagonals completely.
\end{enumerate}
In the NMR spectroscopy example, observing the full signal is very time-consuming and can take up to months. Thus, the researchers often deal with the partially observed signal due to time constraints. On the other hand, impulse noise often causes sparse corruptions on the observed signal, which can be viewed as additive outliers.

In this paper, we aim to handle both these challenges simultaneously by solving the \textbf{R}obust Low-rank \textbf{H}ankel Matrix \textbf{C}ompletion (RHC) problem:
Given the sparsely corrupted partial observations on the Hankel matrix, we want to recover the original low-rank Hankel matrix. Note that the observations are Hankel structured since observing one entry gives the information for an entire antidiagonal. In addition, we are only interested in the sparse corruptions with Hankel structured support since (i)  unstructured outliers can be easily removed when we sample multiple entries on the same antidiagonal,\footnote{Applying a simple median filter will remove the outlier(s) as long as less than half of the observations on this antidiagonal were corrupted.
}
or (ii) corrupting very few entries is equivalent to corrupting the entire antidiagonal if only very few entries were sampled. Moreover, in many applications, e.g., NMR spectroscopy, only one entry (viz.~one copy) may be observed for each antidiagonal. Thus, adding outliers to such observations is the same as adding Hankel structured outliers.

With all the participants in the Hankel structure, we can write the RHC problem in the following form: Consider
$$\H\by=\H\bm{x}^{\natural}+\H\bm{w}^{\natural},$$
where  $\H \bm{x}^{\natural} \in \C ^{n_1 \times n_2}$ is the underlying low-rank Hankel matrix and $\H \bm{w}^{\natural}\in \C^{n_1 \times n_2}$ is
the Hankel structured sparse corruption matrix. The problem is to recover $\H \bm{x}^{\natural}$ (or equivalently $\bm{x}^{\natural}\in \C^n$) from the sparsely corrupted partial observation
\begin{equation}  \label{problem1}
 \H\Pi_{\Omega}\by=\H\Pi_{\Omega}(\bx^\natural+\bw^\natural),
\end{equation}
where $\Omega\subseteq [n]$ is the index set of the observed entries in corresponding vector form.  $\Pi_{\Omega}:\C^n\rightarrow\C^n$ is the sampling operator defined by $[\Pi_{\Omega}\by]_i = y_i$ if $i\in\Omega$, or otherwise $[\Pi_{\Omega}\by]_i=0$.
Naturally, we consider the loss function:
\begin{align} \label{eq:matrix loss}
  \frac{1}{2p}\|\H\Pi_{\Omega}\by-\H\Pi_{\Omega}(\bx+\bw) \|_\fro^2,
\end{align}
where $p$ is the probability of an entry being observed. One can see that the RHC problem can be efficiently described in a more straightforward vector form  $\Pi_{\Omega}\by=\Pi_{\Omega}(\bx^\natural+\bw^\natural)$.
However, the corresponding vector loss function
$ \frac{1}{2}\|\Pi_{\Omega}\by-\Pi_{\Omega}(\bx+\bw) \|_2^2$
is not equivalent to the original Hankel matrix loss function \cref{eq:matrix loss}, because antidiagonals contain different numbers of repeated entries. Thus, we must adjust the weights of the entries in the vector loss function. To this end, we introduce the reweighting operator $\D:\C^n\rightarrow\C^n$ defined as $[\D\bx]_i=\sqrt{\varsigma_i}x_i$, where $\varsigma_i$ is the number of entries on the $i$-th antidiagonal of an $n_1\times n_2$ matrix, so that
\begin{align}
   \frac{1}{2p}\|\Pi_{\Omega}\D\by-\Pi_{\Omega}\D(\bx+\bw) \|_2^2
\end{align}
is equivalent to \cref{eq:matrix loss}. For the interested reader, we extend further discussion on the necessity of the reweighting operator in the supplementary material. For ease of presentation, we introduce the reweighted notations:
\begin{align}
    \bm{f}:=\D \by, \quad \bm{z}:=\D \bm{x}, \quad \bm{s}:=\D \bm{w}, \quad \G:=\H\D^{-1}.
\end{align}
Note that $\G\bf=\H\by$, $\G\bz=\H\bx$, and $\G\bs=\H\bw$. Combined with the low-rank constraints on $\bx$ and the sparse constraints on $\bw$, we have the vector formula that is equivalent to the original Hankel matrix formula:
\begin{equation} \label{eq:RHC1}
    \begin{split}
        \minimize_{\bz,\bs\in\C^n}\ & ~\frac{1}{2p}\|\Pi_{\Omega}\bf-\Pi_{\Omega}(\bz+\bs) \|_2^2 \cr
        \subjectto & ~ \rank\left(\G \bm{z}\right)=r,  \quad \Pi_{\Omega}\bm{s} \textnormal{ is }\alpha p\textnormal{-sparse},
    \end{split}
\end{equation}
where $\alpha p$-sparsity will be formally defined as \cref{amp:sparsity} later. Note that $\D ^2=\H^*\H$ and $\G^*\G=\I$ where $(\cdot)^*$ denotes the adjoint operator. Hence, RHC can be viewed as robust matrix completion on an orthonormal basis.

\subsection{Assumptions and notation}
In this subsection, we present the problem assumptions and notation that will be used for the rest of the paper. We start with the sampling model for the observations.

\begin{assumption}[Bernoulli sampling] \label{amp:Bernoulli}
The set of sampling index $\Omega$ is drawn by the Bernoulli model  with probability $p$. 
That is, the $i$-th entry of $\by$ is observed with probability $p$ independent of all others.  
\end{assumption}

Note that the entries on an antidiagonal are repeated. Thus, for sampling efficiency, we will sample no more than once on each antidiagonal for sampling efficiency. Hence, our sampling model is presented on the vector that consists of the distinct elements of the Hankel matrix. In practice, the actual probability $p$ is usually unknown, and it is common to take $p=|\Omega|/n$.
For the reader's interest, we highlight that a similar method and theorem can also be developed for the uniform sampling model with no extra difficulty. 

\begin{assumption}[$\mu$-incoherence] \label{amp:incoherence}
The rank-$r$ Hankel matrix $\H\bx^\natural=\G \bz^{\natural}\in\C^{n_1\times n_2}$ is $\mu$-incoherent, i.e., there exists a constant $\mu$ such that
\begin{equation*}
\| \BU^{\natural} \|_{2,\infty}\le \sqrt{\mu c_s r/n} \quad \textnormal{ and } \quad \| \BV^{\natural} \|_{2,\infty}\le \sqrt{\mu c_s r/n},
\end{equation*}
where $\BU^{\natural}\BSigma^{\natural}\BV^{\natural *}$ is the compact SVD of $\H\bx^\natural$ and $c_s=\max\{{n}/{n_1},{n}/{n_2}\}$.
\end{assumption}

The assumption of incoherence was first introduced in \cite{candes2009exact} and has been widely used in RPCA and matrix completion studies. The parameter $\mu$ describes how evenly the information is distributed among the entries of the matrix, and its value is small in a well-conditioned dataset. Empirically, many applications related to low-rank matrices, such as face recognition \cite{wright2008robust} and video background subtraction \cite{cai2019accaltproj}, satisfy the incoherence condition with small $\mu$.
In this paper, \cref{amp:incoherence} is a Hankel variation of the standard incoherence. In the applications of Hankel matrix, the parameter $\mu$ is often very small. For example, in the application of spectrally sparse signal, $\mu=\cO(1)$ if the signal is well separated in the frequency domain \cite[Theorem~2]{liao2016music}. 
Note that the constant $c_s$ describes how square the Hankel matrix is. Ideally, when $n$ is fixed, $n_1\approx n_2\approx n/2$ gives a relatively easier problem.

\begin{assumption}[$\alpha p$-sparsity] \label{amp:sparsity}
Under the Bernoulli model with probability $p$, i.e., \cref{amp:Bernoulli}, the vector of observed outliers is $\alpha p$-sparse. That is,
\begin{equation*}
    \|\Pi_\Omega \bs^\natural\|_0=\|\Pi_\Omega \bw^\natural\|_0\leq \alpha p n.
\end{equation*}
\end{assumption}

Following the same argument for \cref{amp:Bernoulli}, the observed outliers are also presented in the vector form. Note that \cref{amp:sparsity} holds with high probability provided $\bs^\natural$ is $(\alpha/2)$-sparse \cite[Proposition~4.5]{cai2021rcur}. This assumption also implies that $\H\Pi_\Omega \bs^\natural$ has no more than $\alpha p n$ non-zero entries in each row and column.

In the rest of this subsection, we describe the notation that we will use throughout the paper.
We use regular lowercase letters for scalars, bold lowercase letters for vectors, bold capital letters for matrices, and calligraphic letters for operators. For a vector $\bm{v}$, let $\|\bm{v}\|_0$, $\|\bm{v}\|_2$, and $\|\bm{v}\|_\infty$ denote $\ell_0$-, $\ell_2$-, and $\ell_\infty$-norms, respectively.
For a matrix $\bm{M}$, $\|\bm{M}\|_{2,\infty}$ denotes the largest $\ell_2$-norm of the rows, $\|\bm{M}\|_\infty$ denotes the largest magnitude in the entries, $\sigma_i(\bm{M})$ denotes the $i$-th singular value, $\|\bm{M}\|_2$ denotes its spectral norm, and $\|\bm{M}\|_\fro$ denotes its Frobenius norm. $\|\P\|$ denotes the operator norm of the linear map $\P$. For both vectors and matrices, $\overline{(\cdot)}$, $(\cdot)^\top$, $(\cdot)^*$, and $\langle\cdot,\cdot\rangle$ denote conjugate, transpose, conjugate transpose, and inner product, respectively. Moreover, $\sigma_i^\natural$ denotes the $i$-th singular value of the underlying Hankel matrix $\H\bx^\natural$ and $\kappa=\sigma_1^\natural/\sigma_r^\natural$ denotes its condition number. Later in \cref{sec:proofs}, we introduce some additional notations used only in the analysis.

\subsection{Related work and main contributions}
The decomposition problem for generic low-rank and sparse matrices is known as robust principal component analysis (full observation) or robust matrix completion (partial observation), which has been widely studied in both theoretical and empirical aspects \cite{candes2011robust,chandrasekaran2011rank,chen2013low,netrapalli2014non,yi2016fast,cherapanamjeri2017nearly,cai2019accaltproj,cai2021ircur,cai2021rcur,cai2021lrpca,cai2021rtcur,hamm2022riemannian}. However, without utilizing the convenient Hankel structure, the generic matrix approaches are sub-optimal on RHC problems, in terms of robustness, sample complexity, and computational efficiency.

Recently, dedicated methods have been proposed for low-rank Hankel matrix problems. Robust Enhanced Matrix Completion (Robust-EMaC) \cite{chen2014robust} relaxes the non-convex RHC problem with a convex formula where the nuclear norm and $\ell_1$-norm are used to enforce low-rank and sparsity constraints, respectively. While Robust-EMaC has the state-of-the-art sampling complexity, requiring merely $\cO(c_s^2\mu^2r^2\log n)$ samples, it does not provide an efficient numerical solver.\footnote{The original paper uses semidefinite programming to solve the convex model, which is as expensive as $\cO(n^6)$. The first-order method can improve this to $\cO(n^3)$ flops per iteration, which is still too expensive.} Note that Robust-EMaC tolerates a small constant portion of outliers if the support of outliers is randomly distributed, which is more restrictive than \cref{amp:sparsity}.
Other convex approaches \cite{tang2013compressed,bhaskar2013atomic,chen2021exact} have similar computational challenging when the problem scale is large.
Later, more provable non-convex methods with linear convergence are introduced. \cite{cai2018spectral,cai2019fast} propose two fast algorithms that both solve a Hankel matrix completion problem in $\cO(r^2n +rn\log n)$ flops per iteration. Unfortunately, they are not designed to handle extreme outliers.  Structured Alternating Projection (SAP) \cite{zhang2018correction} is an alternating projection-based algorithm that efficiently removes outliers from the Hankel matrix, but the theoretical guarantee is only established under full observation. The computational complexity of SAP is $\cO(r^2n\log n)$ per iteration, which is later improved to $\cO(r^2n +rn\log n)$ by its accelerated version, namely ASAP \cite{cai2021asap}; however, ASAP only focuses on the fully observed cases. In the follow-up work \cite{zhang2019correction}, SAP is modified for exploring the setting of partial observation, which we call PartialSAP. PartialSAP has the same computational complexity as SAP, requiring $\cO(c_s^2\mu^2r^3\log^2 n \log \varepsilon^{-1})$ samples due to its iterative re-sampling requirement in theory. PartialSAP can tolerate $\alpha\lesssim \cO(1/(c_s \mu r))$ fraction outliers\footnote{$a \lesssim \cO(b)$ means there exists an absolute constant $C > 0$ such that $a\leq C\cdot b$.}, under the same \cref{amp:sparsity}.

In this work, we propose a novel non-convex algorithm, coined Hankel Structured Gradient Descent (HSGD), for solving large-scale RHC problems. Our main contributions are:
\begin{enumerate}
    \item HSGD is computing-efficient. Its computational complexity is $\cO(r^2 n+rn\log n)$ flops per iteration---better than the current state-of-the-art PartialSAP.
    \item HSGD is sample-efficient. Without the requirement of iterative re-sampling, its overall sample complexity is $\cO(\max\{c_s^2\mu^2r^2 \log n, c_s\mu \kappa^3r^2 \log n\})$---tied with the state-of-the-art Robust-EMaC if the problem is well-conditioned.
    \item The recovery guarantee has been established. Under some mild assumptions, we show HSGD converges linearly to the ground truth. In particular, the theoretical outlier tolerance is $\alpha\lesssim \cO(1/\max\{c_s \mu\kappa^{3/2}r^{3/2}, c_s \mu r\kappa^2\} )$.
    \item The empirical advantages of HSGD are verified by both synthetic datasets and  real-world NMR signals. We observe that HSGD outperforms Robust-EMaC and PartialSAP for both speed and recoverability.
\end{enumerate}

\section{Proposed method}
In this section, we propose a highly efficient non-convex algorithm for the RHC problem \eqref{eq:RHC1}, coined Hankel Structured Gradient Descent (HSGD). The first major challenge in algorithm design is how to enforce the low-rank constraint on $\G(\bz)$. We reformulate the objective function so that the low-rank constraint can be avoided. Following \cite{yi2016fast}, we rewrite the rank-$r$ matrix on factorized space: $\G(\bz)=\BL\BR^*$ where $\BL\in\C^{n_1\times r}$ and $\BR\in\C^{n_2\times r}$. Thus, the low-rank constraint is automatically coded by the shapes of $\BL$ and $\BR$. Moreover, we add a penalty term $\frac{1}{2}\lV\left(\I-\G\G^*\right)\left(\BL \BR^*\right)\rV_\fro^2$ to make sure $\BL\BR^*$ is a Hankel matrix, because $\BL\BR^*$ is a Hankel matrix if and only if
$
    \left(\I-\G\G^*\right)\left(\bm{L} \bm{R}^*\right)=\bm{0}.
$
By replacing $\bz=\G^*(\BL\BR^*)$, we have
\begin{equation} \label{eq:def-F}
    \psi:=\psi\left(\bm{L},\bm{R};\bm{s}\right):=\frac{1}{2p}\lV \Pi_{\Omega}\left(\G^*\left(\bm{L} \bm{R}^*\right)+\bm{s}-\bm{f}\right)\rV_2^2 +\frac{1}{2}\lV\left(\I-\G\G^*\right)\left(\bm{L} \bm{R}^*\right)\rV_\fro^2.
\end{equation}
We also add another balance regularization
\begin{equation}
    \phi:=\phi\left(\BL,\BR\right):=\frac{1}{4}\lV \BL^*\BL-\BR^*\BR  \rV_\fro^2
\end{equation}
to encourage that $\BL$ and $\BR$ have the same scale, which is a common technique for factorized gradient descent under the asymmetric setting. \cite{ma2021beyond} suggests that the balance regularization $\phi$ may be removed in the matrix sensing problem if we have a very good initialization. However, we decide to keep this term since the initialization is usually more challenging in RHC problems. Putting all the pieces together, we have the non-convex loss function:
\begin{equation}\label{eq:def-L}
\ell:=\ell\left(\bm{L},\bm{R};\bm{s}\right):=\psi\left(\bm{L},\bm{R};\bm{s}\right)+\lambda \phi\left(\bm{L},\bm{R}\right),
\end{equation}
where $\lambda> 0$ is a regularization parameter.

Based on the reformulated loss function \eqref{eq:def-L}, we detail the proposed HSGD method in three steps: (a) initialization, (b) iterative updates on outliers, and (c) iterative updates on the low-rank Hankel matrix. 
For ease of presentation, we start the discussion with iterative updates, followed by initialization.

\vspace{0.05in}
\textbf{Iterative updates on outliers.}
With the sparsity assumption on the outlier vector $\bs$, we design a \textit{sparsification operator} to filter large-magnitude entries:
\begin{equation*}
    [\Gamma_{\tilde\alpha}(\bv)]_i=
    \begin{cases}
    v_i & \quad \textnormal{$|v_i|$ is one of the largest $\tilde\alpha n$ element in $|\bv|$} \\
    0 & \quad \textnormal{otherwise}\\
    \end{cases}
\end{equation*}
for any vector $\bv\in\C^{n}$ and $\tilde\alpha\in[0,1)$.
At the $(k+1)$-th iteration, we first compute the residues over the observed entries $\Pi_\Omega(\bf-\bz^{(k)})$ where $\bz^{(k)}=\G^*(\BL^{(k)}\BR^{(k)*})$ is the reweighted vector representing the estimated low-rank Hankel matrix from the previous iteration. Then, we keep the largest $\gamma_k\alpha p$-fraction of the residue as the outliers, i.e.,
\begin{equation} \label{eq:update_s}
\bs^{(k+1)}=\Gamma_{\gamma_k\alpha p}\big(\Pi_{\Omega}\big(\bf-\bz^{(k)}\big)\big),
\end{equation}
where $\gamma_k>1$ is a parameter that allows us to overestimate the outlier density a bit.

Recall that $\G^*$ is the adjoint operator of $\G=\H\D^{-1}$. For the $t$-th entry of $\G^*(\BL\BR^*)$,
\begin{equation*}
    [\G^*(\BL\BR^*)]_t= \sum_{j=1}^r [\G^*(\BL_{:,j}\BR_{:,j}^*)]_t = \sum_{j=1}^r \frac{1}{\sqrt{\varsigma_t}}\sum_{i_1+i_2=t+1} L_{i_1,j} \overline{R}_{i_2,j},
\end{equation*}
where $\varsigma_t$ is the number of entries on the $t$-th antidiagonal of the $n_1\times n_2$ Hankel matrix. This suggests that $\G^*(\BL\BR^*)$ can be computed via $r$ fast convolutions (i.e., via FFT). Thus, \eqref{eq:update_s} costs merely $\cO(rn \log n)$ flops.

\textbf{Iterative updates on low-rank Hankel matrix.}
After removing the estimated outliers, we simultaneously update the factors $\BL$ and $\BR$ of the low-rank Hankel matrix via one step of \textit{structured gradient descent}. The gradients with respect to $\BL$ and $\BR$ are calculated based on the loss function \eqref{eq:def-L}. Moreover, according to \cref{amp:incoherence}, we project the updates of $\BL$ and $\BR$ onto the convex sets
\begin{equation}\label{eq:projectionsets}
  \begin{split}
\L =\Big\{\bm{L}\,{\Big|}\,\lV \bm{L}\rV_{2,\infty}^2\le {\frac{2\mu r c_s}{n}}\big\|\tilde{\bm{L}}^{(0)}\big\|_2^2\Big\} \textnormal{ and }
\R =\Big\{\bm{R}\,{\Big|}\,\lV \bm{R}\rV_{2,\infty}^2\le {\frac{2\mu r c_s}{n}}\big\|\tilde{\bm{R}}^{(0)}\big\|_2^2\Big\}
  \end{split}
\end{equation}
respectively. The projection step ensures that the estimated low-rank matrix is always incoherent. Note that the ideal $\L$ and $\R$ should be defined with $\|\G(\bz)\|_2^{1/2}$. However, this information is usually unavailable to the user, so instead we use the initial estimations of $\BL$ and $\BR$. In summary, we have the projected gradient descent for updating the low-rank Hankel matrix:
\begin{equation} \label{eq:update_Hankel}
\begin{split}
\BL^{(k+1)}&=\Pi_\L\big(\BL^{(k)}-\eta\nabla_{\BL} \ell\big(\BL^{(k)},\BR^{(k)};\bs^{(k+1)}\big)\big),\\
\BR^{(k+1)}&=\Pi_\R\big(\BR^{(k)}-\eta\nabla_{\BR} \ell\big(\BL^{(k)},\BR^{(k)};\bs^{(k+1)}\big)\big),
\end{split}
\end{equation}
where $\eta>0$ is the step size. The complexity of \eqref{eq:update_Hankel} is dominated by computing the gradients. Notice that
\begin{align*}
    \nabla_\BL \ell = \G\big(p^{-1}\Pi_\Omega\left(\G^*\left(\BL\BR^*\right)+\bs-\bf\right)-\G^*(\BL\BR^*)\big)\BR  + \BL\left(\lambda\BL^*\BL+(1-\lambda)\BR^*\BR)\right).
\end{align*}
As discussed, computing $\G^*(\BL\BR^*)$ costs $\cO(rn \log n)$ flops, so does computing the vector $\bm{a}:=p^{-1}\Pi_\Omega\left(\G^*\left(\BL\BR^*\right)+\bs-\bf\right)-\G^*(\BL\BR^*)$. Next, $\G(\bm{a})\BR$ can be computed via another $r$ fast convolutions since a Hankel matrix can be viewed as a convolution operator.
Thus, the first term in $\nabla_\BL \ell(\BL,\BR;\bs)$ costs total $\cO(rn\log n)$ flops.
While the second term costs $\cO(r^2 n)$ flops, computing $\nabla_\BL \ell(\BL,\BR;\bs)$ costs merely $\cO(rn\log n+r^2 n)$ flops. The same argument applies to $\nabla_\BR \ell(\BL,\BR;\bs)$. Therefore, the update of the low-rank Hankel matrix is computationally efficient, in the complexity of $\cO(rn\log n+r^2 n)$.

\vspace{0.05in}
\textbf{Initialization.}
For a good initial guess, we modify the widely used spectral method \cite[Section~VIII]{chi2019nonconvex}. The first step is to detect the obvious (i.e., large) outliers from the observations:
\begin{equation*}
    \bs^{(0)}= \D\Gamma_{\alpha p}\left(\Pi_\Omega\by\right)= \D\Gamma_{\alpha p}\left(\D^{-1}\Pi_\Omega\bf\right),
\end{equation*}
where $\alpha p$ comes from \cref{amp:sparsity}. Next, we initial
\begin{align*}
\tilde{\bm{L}}^{(0)}=\BU^{(0)}\big(\BSigma^{(0)}\big)^{1/2} \qquad\textnormal{and}\qquad \tilde{\bm{R}}^{(0)}=\BV^{(0)}\big(\BSigma^{(0)}\big)^{1/2},
\end{align*}
where $\BU^{(0)}\BSigma^{(0)}\BV^{(0)*}$ is the rank-$r$ truncated SVD of $p^{-1}\G(\Pi_{\Omega}\bf-\bs^{(0)})$. Then, the convex sets $\L$ and $\R$ can be defined based the spectral norms of $\tilde{\bm{L}}^{(0)}$ and $\tilde{\bm{R}}^{(0)}$ (see \eqref{eq:projectionsets}). We immediately project $\tilde{\bm{L}}^{(0)}$ and $\tilde{\bm{R}}^{(0)}$ onto $\L$ and $\R$ to ensure their incoherence:
\begin{equation*}
\BL^{(0)}=\Pi_{\L}\tilde{\bm{L}}^{(0)} \qquad \textnormal{and} \qquad \BR^{(0)}=\Pi_{\R}\tilde{\bm{R}}^{(0)}.
\end{equation*}
This finishes the initialization.

The complexity of the initialization remains $\cO(rn\log n)$, which is dominated by the step of truncated SVD. Although a typical truncated SVD  costs $\cO(n^2 r)$ flops, it costs only $\cO(rn\log n)$ flops on a Hankel matrix since the involved Hankel matrix-vector multiplications can be computed via fast convolutions.

We summarize the proposed HSGD as \cref{algo:HSGD}. Although we are solving a matrix problem, HSGD never has to form the whole matrix due to the convenient Hankel structure---we only need to track the $n$ distinct entries in the reweighted vector form. If the user needs the recovered result in Hankel matrix form, simply apply $\G(\bz^{(K)})$ on the output vector $\bz^{(K)}$.
Therefore, we conclude HSGD is both computationally and memory efficient. In particular, the overall computational complexity is as low as $\cO(rn\log n+r^2n)$, as we discussed.


\begin{algorithm}[t]
\caption{\textbf{H}ankel \textbf{S}tructured \textbf{G}radient \textbf{D}escent (HSGD)} \label{algo:HSGD}
\begin{algorithmic}[1]
\State \textbf{Input:} 
$\Pi_\Omega\bf$: partial observation on the corrupted Hankel matrix in reweighted vector form; $r$: the rank of underlying Hankel matrix; $p$: observation rate; $\alpha$: outlier density; $\{\gamma_k\}$: parameters for sparsification operator.
\State \textcolor{OliveGreen}{// Initialization:}
\State $\bs^{(0)}=\D\Gamma_{\alpha p}(\D^{-1}\Pi_\Omega\bf)$
\State $[\BU^{(0)},\BSigma^{(0)},\BV^{(0)}]=\SVD_r(p^{-1}\G(\Pi_{\Omega}\bf-\bs^{(0)}))$
\State $\tilde{\bm{L}}^{(0)}=\BU^{(0)}(\BSigma^{(0)})^{1/2}$, \qquad $\tilde{\bm{R}}^{(0)}=\BV^{(0)}(\BSigma^{(0)})^{1/2}$
\State Define $\L$ and $\R$ by \eqref{eq:projectionsets}
\State $\BL^{(0)}=\Pi_{\L}\tilde{\bm{L}}^{(0)}, \qquad \ \BR^{(0)}=\Pi_{\R}\tilde{\bm{R}}^{(0)}$
\State \textcolor{OliveGreen}{// Iterative updates:}
\For{ $k=0,1,\ldots,K-1$ }
    \State $\bz^{(k)}=\G^*(\BL^{(k)}\BR^{(k)*})$
    \State $\bs^{(k+1)}=\Gamma_{\gamma_k\alpha p}(\Pi_{\Omega}(\bf-\bz^{(k)})))$
    \State $\BL^{(k+1)}=\Pi_\L(\BL^{(k)}-\eta\nabla_{\BL} \ell(\BL^{(k)},\BR^{(k)};\bs^{(k+1)}))$
    \State $\BR^{(k+1)}=\Pi_\R(\BR^{(k)}-\eta\nabla_{\BR} \ell(\BL^{(k)},\BR^{(k)};\bs^{(k+1)}))$
\EndFor
\State \textbf{Output:} $\bz^{(K)}=\G^*(\BL^{(K)}\BR^{(K)*})$: the recovered low-rank Hankel matrix in reweighted vector form.
\end{algorithmic}
\end{algorithm}

\subsection{Recovery guarantee}

In this section, we present the recovery guarantee of the proposed HSGD. Denote $\BL^{\natural}:=\BU^{\natural}\BSigma^{\natural \frac{1}{2}}$ and $\BR^{\natural}:=\BV^{\natural}\BSigma^{\natural \frac{1}{2}}$ where $\BU^{\natural}\bm{\Sigma}^{\natural}\BV^{\natural *}$ is the compact SVD of the underlying Hankel matrix $\G\bz^{\natural}$.
Consider the error that measures the distance between $(\BL^{(k)},\BR^{(k)})$ and $(\BL^\natural,\BR^\natural)$:
\begin{equation}\label{def-error}
d_k:=\mathrm{dist}\Big(\BL^{(k)},\BR^{(k)};\BL^{\natural},\BR^{\natural}\Big):=\min_{\BQ\in\mathbb{Q}_r}\sqrt{\| \BL^{(k)}-\BL^{\natural} \BQ \|_\fro^2+\| \BR^{(k)}-\BR^{\natural} \BQ \|_\fro^2}.
\end{equation}
Therein, $\mathbb{Q}_r$ denotes the set of $r\times r$ rotation matrices and the best rotation matrix $\BQ$ is used to align $(\BL^{(k)}, \BR^{(k)})$ and $(\BL^\natural,\BR^\natural)$ since the matrix factorization is not unique. Note that the standard reconstruction error (i.e., in Frobenius norm) is controlled by
\begin{equation} \label{eq:control dist}
    \|\H\bx^\natural - \BL^{(k)}\BR^{(k)*}\|_\fro^2 \leq \sigma_1^\natural d_k^2,
\end{equation}
provided $d_k^2 \leq\sigma_1^\natural$ \cite{yi2016fast}. Thus, it is sufficient to bound $d_k$ in our analysis.

We are ready to present our main results now. Firstly, we present the local linear convergence of HSGD as \cref{thm:convergence}, provided a sufficiently close initial guess.

\begin{theorem}\label{thm:convergence}
Suppose \cref{amp:Bernoulli,amp:incoherence,amp:sparsity} hold with $p\gtrsim \cO((c_s^2\mu^2 r^2\log n)/n)$ and  $\alpha\lesssim \cO(1/( c_s\mu r\kappa^2))$. Choose the parameters $\lambda=1/16$, $\gamma_k\in [1 + 1/b_0,2]$ with some fixed $b_0\geq 1$, and $\eta=\tilde{c}/\sigma_1^{\natural}$ with some sufficiently small $\tilde{c}$. If the initialization satisfies
$
    d_0 \lesssim \cO (\sqrt{\sigma_r^{\natural}/\kappa}),
$
then with probability at least $1-6n^{-2}$, the iterations of HSGD satisfy
\begin{equation*}
    d_{k+1}^2\le \Big(1-\frac{\eta \sigma_r^{\natural}}{64}\Big)^{k}d_0^2.
\end{equation*}
\end{theorem}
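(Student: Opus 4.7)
The plan is to prove the theorem by induction on $k$. Assuming $d_k^2 \le (1 - \eta\sigma_r^{\natural}/64)^k d_0^2$ at iteration $k$, I bound $d_{k+1}^2$ via a standard factored gradient descent argument (cf.~\cite{yi2016fast,ma2021beyond}). Let $\BQ^{(k)}\in\mathbb{Q}_r$ be the optimal rotation realizing $d_k$, and set $\BDeltaL = \BL^{(k)} - \BL^{\natural}\BQ^{(k)}$, $\BDeltaR = \BR^{(k)} - \BR^{\natural}\BQ^{(k)}$. Since $\BL^{\natural}\BQ^{(k)}\in\L$ and $\BR^{\natural}\BQ^{(k)}\in\R$ (the $\|\cdot\|_{2,\infty}$-norm is invariant under right-multiplication by a unitary, and \cref{amp:incoherence} places $\BL^{\natural},\BR^{\natural}$ well inside the sets), $\Pi_{\L}$ and $\Pi_{\R}$ are non-expansive toward these targets, giving
\begin{equation*}
d_{k+1}^2 \le d_k^2 - 2\eta\, T + \eta^2 N,
\end{equation*}
where $T := \langle\nabla_{\BL}\ell,\BDeltaL\rangle + \langle\nabla_{\BR}\ell,\BDeltaR\rangle$ and $N := \|\nabla_{\BL}\ell\|_{\fro}^2 + \|\nabla_{\BR}\ell\|_{\fro}^2$. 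The target is a regularity-type inequality $T \ge \tfrac{1}{\beta}N + \tfrac{\sigma_r^{\natural}}{c_0}d_k^2$ with $\beta \lesssim \sigma_1^{\natural}$ and $c_0$ an absolute constant, which together with $\eta = \tilde{c}/\sigma_1^{\natural}$ delivers the claimed geometric rate.

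Next I would handle the sparsification step. Writing $\Pi_{\Omega}(\bf - \bz^{(k)}) = \Pi_{\Omega}\bs^{\natural} + \Pi_{\Omega}(\bz^{\natural} - \bz^{(k)})$ and applying a hard-thresholding lemma to $\Gamma_{\gamma_k\alpha p}$ with $\gamma_k\in[1+1/b_0,2]$ (recalling that $\Pi_{\Omega}\bs^{\natural}$ has at most $\alpha p n$ non-zeros by \cref{amp:sparsity}), one obtains
\begin{equation*}
\|\bs^{(k+1)} - \Pi_{\Omega}\bs^{\natural}\|_{\infty} \le 2\|\Pi_{\Omega}(\bz^{\natural} - \bz^{(k)})\|_{\infty},
\end{equation*}
with $\bs^{(k+1)} - \Pi_{\Omega}\bs^{\natural}$ supported on at most $(1+\gamma_k)\alpha p n$ indices. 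Since $\bz^{\natural} - \bz^{(k)} = \G^*(\G\bz^{\natural} - \BL^{(k)}\BR^{(k)*})$, the incoherence of $\BL^{(k)},\BR^{(k)}$ (enforced by $\L,\R$) together with that of $\BL^{\natural},\BR^{\natural}$ yields an $\ell_{\infty}$ bound of order $\sqrt{c_s\mu r/n}\,(\sigma_1^{\natural})^{1/2} d_k$ on the entrywise error, which later feeds into the outlier-perturbation bound.

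Then I would establish the regularity by decomposing the gradient into a ``population'' part and two perturbations. The population part, obtained by replacing $p^{-1}\Pi_{\Omega}$ with the identity and $\bs^{(k+1)}$ with $\bs^{\natural}$, reduces to $(\BL\BR^* - \G\bz^{\natural})\BR + \lambda\BL(\BL^*\BL - \BR^*\BR)$, and the standard factored matrix-sensing regularity (e.g., \cite{yi2016fast}) supplies the $\sigma_r^{\natural} d_k^2$ contribution to $T$ while controlling its share of $N$ by $\sigma_1^{\natural}\sigma_r^{\natural}d_k^2$. The two perturbations are: (i) the Bernoulli-sampling error $(\tfrac{1}{p}\G\Pi_{\Omega}\G^* - \G\G^*)\bigl(\BL^{(k)}\BR^{(k)*} - \G\bz^{\natural}\bigr)$, bounded via a concentration inequality for $\tfrac{1}{p}\G\Pi_{\Omega}\G^* - \G\G^*$ acting on incoherent low-rank Hankel matrices (this is what produces the threshold $p \gtrsim c_s^2\mu^2 r^2\log n/n$), and (ii) the outlier leakage $\tfrac{1}{p}\G\Pi_{\Omega}(\bs^{(k+1)} - \Pi_{\Omega}\bs^{\natural})\BR^{(k)}$, bounded using the sparsity-and-$\ell_{\infty}$ control from the preceding step together with $\|\BR^{(k)}\|_{2,\infty}$, which produces the outlier tolerance $\alpha\lesssim 1/(c_s\mu r\kappa^2)$. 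Both perturbations are absorbed either into the $\tfrac{1}{\beta}N$ slack or into a fraction of the main $\sigma_r^{\natural}d_k^2$ term, closing the induction as $d_{k+1}$ remains within the validity radius $\cO(\sqrt{\sigma_r^{\natural}/\kappa})$.

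The hardest step will be the perturbation analysis, where three difficulties compound. First, the randomness in $\Omega$ is one-dimensional along antidiagonals rather than i.i.d.~over matrix entries, so the concentration bound for $\tfrac{1}{p}\G\Pi_{\Omega}\G^* - \G\G^*$ requires a Hankel-adapted matrix Bernstein argument rather than off-the-shelf matrix-completion results. Second, the outlier support is adversarial (and structured across antidiagonals), so only the sparsity and $\ell_{\infty}$ information extracted from the sparsification step is available, requiring a tight interplay between $\|\BR^{(k)}\|_{2,\infty}$ and $\|\bs^{(k+1)} - \Pi_{\Omega}\bs^{\natural}\|_{\infty}$ to keep the leakage term small. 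Third, the sparsification operator introduces a nonlinear, iterate-dependent coupling between $\bs^{(k+1)}$ and $(\BL^{(k)},\BR^{(k)})$; fortunately, because HSGD does not re-sample $\Omega$ across iterations, a single high-probability event on $\Omega$ suffices, and the $\ell_{2,\infty}$ control of the iterates (maintained by the projection sets $\L,\R$) can be propagated through the induction without a multiplicative $\log$ loss from iteration-wise union bounding.
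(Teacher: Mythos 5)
Your skeleton matches the paper's: induction on $k$, the non-expansive-projection recursion $d_{k+1}^2\le d_k^2-2\eta\,\mathrm{Re}(T)+\eta^2N$, a lower bound on the descent term plus an upper bound on the gradient norm (the paper's \cref{T1low,T2up} play the role of your regularity inequality, with the common positive term $\|\BL\BR^*-\bm{L}_{\E}\bm{R}_{\E}^*\|_\fro^2$ cancelling between the two), Hankel-adapted Bernstein concentration for $p^{-1}\P_\Omega$ (\cref{RIP1,projectionerr1,RIP2}), and a single uniform high-probability event so no union bound over iterations is needed. All of that is sound and is exactly what the paper does.

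The one substantive divergence is your treatment of the sparsification error, and it is where I see a risk of a quantitative gap. You control $\bs^{(k+1)}-\Pi_\Omega\bs^\natural$ purely through the hard-thresholding $\ell_\infty$ bound $\|\bs^{(k+1)}-\Pi_\Omega\bs^\natural\|_\infty\le 2\|\Pi_\Omega(\bz^\natural-\bz^{(k)})\|_\infty$ plus a sparsity count, in the style of the RPCA-GD analysis of \cite{yi2016fast}. That inequality is correct, but converting it into the inner-product bound you need ($\lesssim \alpha c_s\mu r\,\sigma_1^\natural\Delta$, linear in $r$ and quadratic in $d_k$) is delicate: pairing a worst-case entrywise bound times a sparsity count against the low-rank direction via spectral/nuclear norms typically costs an extra $\sqrt{r}$ (and care is needed with the antidiagonal weights $\sqrt{\varsigma_i}$, which range up to $\cO(n)$ and make the reweighted $\ell_\infty$ norm a slippery object). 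The paper avoids this by splitting the outlier term into $T_4$, $T_5$, $T_6$: on the \emph{selected} support it exploits the exact identity $s_i=z_i^\natural+s_i^\natural-[\G^*(\BL\BR^*)]_i$, which turns $T_4$ into the quadratic form $\|\G\G^*\P_{\Omega_s}(\bm{L}_{\E}\bm{R}_{\E}^*-\BL\BR^*)\|_\fro^2$ and then invokes the row/column-sparsity estimate of \cref{PZUZV-M}; on the \emph{missed} support it uses the averaged $\ell_2$ bound \eqref{zzs} (the residual at a missed outlier is below the $(\gamma-1)\alpha pn$-th largest, hence bounded by $\|\Pi_\Omega(\bz^\natural-\G^*(\BL\BR^*))\|_2^2/((\gamma-1)\alpha pn)$ on average) rather than the $\ell_\infty$ maximum. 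These two devices are what keep the tolerance at $\alpha\lesssim 1/(c_s\mu r\kappa^2)$. If you insist on the $\ell_\infty$ route, you should expect to prove the theorem only under a stronger sparsity assumption (e.g.\ an extra factor of $\sqrt{r}$ or worse), so to close the argument as stated you would need to import something equivalent to \cref{PZUZV-M} and the averaging bound \eqref{zzs}. Everything else in your plan goes through.
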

\begin{proof}
The proof of this theorem is deferred to \cref{subsec:prooflocalcon}.
\end{proof}


Next, we present the initialization guarantee as \cref{thm:Initialization}. Therein, we show the initial guess $(\BL^{(0)},\BR^{(0)})$ falls in the basin of attraction that specified in \cref{thm:convergence}.

\begin{theorem}\label{thm:Initialization}
Suppose\,\cref{amp:Bernoulli,amp:incoherence,amp:sparsity}\,hold with $p\geq (\varepsilon_0^{-2}\kappa^3 c_s\mu r^2\log n)/n$ and  $\alpha\leq 1/(32 c_s\mu r\kappa)$ where $\varepsilon_0\in (0,\frac{\sqrt{\kappa r}}{8c_0})$ with some constant $c_0$. Then with probability at least $1-2n^{-2}$, the initialization step of HSGD satisfies
\begin{equation*}
d_0\leq 26\alpha c_s\kappa \mu r\sqrt{r} \sqrt{\sigma_{r} ^{\natural}}+7c_0\varepsilon_0\sqrt{\sigma_{r} ^{\natural}/\kappa}.
\end{equation*}
\end{theorem}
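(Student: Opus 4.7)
The plan is to reduce the distance bound $d_0$ to a spectral perturbation bound on $\hat\BM := p^{-1}\G(\Pi_\Omega \bf - \bs^{(0)})$ versus the ground truth $\BM^\natural := \G\bz^\natural$, and then invoke an asymmetric Procrustes-type lemma in the spirit of \cite{yi2016fast}. Decompose
\begin{align*}
\hat\BM - \BM^\natural = \E_1 + \E_2, \quad \E_1 := p^{-1}\G\Pi_\Omega \bz^\natural - \G\bz^\natural, \quad \E_2 := p^{-1}\G\bigl(\Pi_\Omega \bs^\natural - \bs^{(0)}\bigr),
\end{align*}
so that $\E_1$ captures sub-sampling of the clean signal and $\E_2$ captures imperfect outlier detection. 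Since $\BU^{(0)}\BSigma^{(0)}\BV^{(0)*}$ is the best rank-$r$ approximation of $\hat\BM$ and $\BM^\natural$ itself has rank $r$, the rank-$2r$ difference satisfies $\|\BU^{(0)}\BSigma^{(0)}\BV^{(0)*} - \BM^\natural\|_\fro \lesssim \sqrt{r}\,\|\E_1+\E_2\|_2$ by the usual rank-truncation argument. The asymmetric Procrustes bound then gives
\begin{align*}
\min_{\BQ\in\mathbb{Q}_r}\Bigl(\|\tilde\BL^{(0)} - \BL^\natural\BQ\|_\fro^2 + \|\tilde\BR^{(0)} - \BR^\natural\BQ\|_\fro^2\Bigr) \,\lesssim\, \frac{r\,\|\E_1+\E_2\|_2^2}{\sigma_r^\natural},
\end{align*}
and non-expansiveness of $\Pi_\L,\Pi_\R$ (verified below) transfers this bound to $d_0$.

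The two error pieces are bounded separately. For $\E_1$, I would write $\E_1 = \sum_{i=1}^n (p^{-1}\xi_i - 1)\,z_i^\natural\,\G e_i$ with $\xi_i$ i.i.d.\ Bernoulli$(p)$, and apply matrix Bernstein to these structured rank-one Hankel summands. The key inputs are $\|\G e_i\|_2 = 1$ and the incoherence $\ell_\infty$ estimate $\|\G\bz^\natural\|_\infty \lesssim \mu c_s r\sigma_1^\natural/n$ (which follows from \cref{amp:incoherence}), which control both the per-summand spectral norm and the variance proxies. Under the stated $p \gtrsim \varepsilon_0^{-2}\kappa^3 c_s\mu r^2\log n/n$, this yields $\|\E_1\|_2 \lesssim \varepsilon_0\sigma_r^\natural/\sqrt{\kappa r}$ with probability at least $1 - n^{-2}$. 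For $\E_2$, I use a standard hard-thresholding argument: both $T := \mathrm{supp}(\Pi_\Omega\bs^\natural)$ and $T^{(0)} := \mathrm{supp}(\bs^{(0)})$ have cardinality $\leq \alpha p n$ (by \cref{amp:sparsity} and the construction of $\Gamma_{\alpha p}$), and for any index in their symmetric difference the residual entry of $\Pi_\Omega\bs^\natural - \bs^{(0)}$ is bounded by twice the largest magnitude of $\Pi_\Omega\bz^\natural$ in the un-reweighted domain. Combining the resulting $\ell_\infty$ bound with the sparse-Hankel spectral estimate (at most $\alpha p n$ non-zeros per row/column of $\G(\Pi_\Omega\bs^\natural - \bs^{(0)})$, coming from \cref{amp:sparsity}) and with the incoherence-based entry bound on $\G\bz^\natural$ gives $\|\E_2\|_2 \lesssim \alpha c_s\mu r\sigma_1^\natural$.

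Substituting these two estimates, using $\sigma_1^\natural = \kappa\sigma_r^\natural$, produces $d_0 \lesssim \varepsilon_0\sqrt{\sigma_r^\natural/\kappa} + \alpha c_s\kappa\mu r\sqrt{r}\sqrt{\sigma_r^\natural}$, which is the target. A short final check handles the projection step: once Weyl's inequality applied to $\hat\BM$ gives $\|\tilde\BL^{(0)}\|_2^2 \geq \sigma_1^\natural/2$, the factor $\BL^\natural\BQ$ lies in $\L$ by \cref{amp:incoherence} (and similarly $\BR^\natural\BQ\in\R$), so $\|\BL^{(0)} - \BL^\natural\BQ\|_\fro \leq \|\tilde\BL^{(0)} - \BL^\natural\BQ\|_\fro$ by non-expansiveness of the projection onto a convex set, and analogously for $\BR$. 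The main obstacle is the matrix-Bernstein step: sharpening the concentration to match the claimed sampling rate $\kappa^3 c_s\mu r^2\log n/n$ (rather than the coarser $c_s^2\mu^2 r^2$ rate that a naive application would give) requires carefully writing the variance proxy in the structured $\G$-basis and exploiting incoherence of $\G\bz^\natural$ tightly. A secondary subtlety is the bound on $\E_2$, where one must bookkeep the interplay between the reweighting operator $\D$ and the sparsification $\Gamma_{\alpha p}$ so as not to pick up extraneous powers of $c_s$ or $\mu$.
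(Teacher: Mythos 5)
Your proposal is correct and follows essentially the same route as the paper's proof: the same decomposition into a sub-sampling term (bounded by matrix Bernstein with incoherence, the paper's \cref{projectionerror}) and an outlier-misdetection term (bounded by the hard-thresholding $\ell_\infty$ argument plus the sparse-Hankel spectral estimate of \cref{s-matrix}), followed by Eckart--Young, the Procrustes-type bound of \cite[Lemma~5.14]{tu2016low}, and non-expansiveness of the projections onto $\L$ and $\R$ via Weyl's inequality. The one "obstacle" you flag is not actually an issue: the concentration lemma only needs $p\gtrsim (c_s\mu r\log n)/n$ to hold, and the larger $\kappa^3 r$ factor in $p$ serves merely to shrink the resulting bound to $\varepsilon_0\sigma_r^\natural/\sqrt{\kappa r}$, exactly as in the paper.
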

\begin{proof}
The proof of this theorem is deferred to \cref{subsec:proofini}.
\end{proof}



Note that the probabilities in \cref{thm:convergence,thm:Initialization} come from some analogous restricted isometry properties (see \cref{RIP1,projectionerr1,projectionerror}), which hold uniformly for our results. Thus, all our theorems hold uniformly with a probability at least $1-6n^{-2}$. By directly combining \eqref{eq:control dist}, \cref{thm:convergence,thm:Initialization}, we show HSGD has global convergence to the ground truth with high probability, provided sufficiently many samples and sufficiently sparse outliers.

\begin{corollary}
Suppose \cref{amp:Bernoulli,amp:incoherence,amp:sparsity} hold with
\begin{equation*}
\begin{split}
\small
     p \gtrsim \cO\bigg(\frac{\max\{c_s^2\mu^2 r^2\log n, c_s\mu \kappa^3 r^2\log n\}}{n} \bigg)  \textnormal{ and }
     \alpha \lesssim \cO\bigg( \frac{1}{\max\{c_s \mu\kappa^{3/2}r^{3/2}, c_s \mu r\kappa^2\} }\bigg).
\end{split}
\end{equation*}
Then, HSGD finds an $\varepsilon$-optimal solution, i.e., $\|\H\bx^\natural - \BL^{(K)}\BR^{(K)*}\|_\fro/\|\H\bx^\natural\|_\fro \leq \varepsilon$,
in $K=\cO(\kappa \log \varepsilon^{-1})$ iterations with probability at least $1-6n^{-2}$.
\end{corollary}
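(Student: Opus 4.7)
The proof is essentially a bookkeeping argument that stitches together \cref{thm:Initialization} (to place $(\BL^{(0)},\BR^{(0)})$ in the basin of attraction) and \cref{thm:convergence} (to drive the distance $d_k$ down geometrically), and then converts the bound on $d_K$ into a bound on the Frobenius reconstruction error via \eqref{eq:control dist}.

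My plan is, first, to invoke \cref{thm:Initialization} with $\varepsilon_0$ chosen as a sufficiently small absolute constant (e.g.\ $\varepsilon_0 = 1/(56 c_0)$, where $c_0$ is the constant appearing in that theorem). This choice absorbs the $7 c_0 \varepsilon_0 \sqrt{\sigma_r^\natural/\kappa}$ term into, say, $\tfrac{1}{8}\sqrt{\sigma_r^\natural/\kappa}$, and turns the sample-complexity requirement $p\gtrsim \varepsilon_0^{-2}\kappa^3 c_s\mu r^2\log n/n$ into exactly the second branch of the maximum in the corollary's hypothesis on $p$. To absorb the first term of \cref{thm:Initialization}, namely $26\alpha c_s\kappa\mu r\sqrt{r}\sqrt{\sigma_r^\natural}$, into $\tfrac{1}{8}\sqrt{\sigma_r^\natural/\kappa}$, I need $\alpha \lesssim 1/(c_s\mu\kappa^{3/2}r^{3/2})$, which is precisely the first branch of the corollary's outlier bound. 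Together these guarantee $d_0 \lesssim \sqrt{\sigma_r^\natural/\kappa}$, matching the basin-of-attraction requirement of \cref{thm:convergence}.

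Second, the sample and outlier hypotheses on $p$ and $\alpha$ in the corollary are written as maxima precisely so that the remaining conditions $p\gtrsim c_s^2\mu^2 r^2\log n/n$ and $\alpha\lesssim 1/(c_s\mu r\kappa^2)$ of \cref{thm:convergence} are simultaneously enforced. Therefore \cref{thm:convergence} applies to the iterates produced from $(\BL^{(0)},\BR^{(0)})$, and I get the geometric decay
\begin{equation*}
d_k^2 \le \Big(1-\frac{\eta\sigma_r^\natural}{64}\Big)^{k} d_0^2
\end{equation*}
with the specified $\eta=\tilde c/\sigma_1^\natural$, so the contraction factor is $1-\tilde c/(64\kappa)$.

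Third, I convert $d_K$ into the claimed reconstruction error. Since $\|\H\bx^\natural\|_\fro\ge \sigma_1^\natural$, \eqref{eq:control dist} gives
\begin{equation*}
\frac{\|\H\bx^\natural-\BL^{(K)}\BR^{(K)*}\|_\fro^2}{\|\H\bx^\natural\|_\fro^2} \le \frac{\sigma_1^\natural d_K^2}{(\sigma_1^\natural)^2} = \frac{d_K^2}{\sigma_1^\natural},
\end{equation*}
so it suffices to make $d_K^2 \le \varepsilon^2 \sigma_1^\natural$. Using $d_0^2 \lesssim \sigma_r^\natural/\kappa = \sigma_1^\natural/\kappa^2$ and the contraction rate $1-\tilde c/(64\kappa)$, a standard logarithmic computation yields $K = \cO(\kappa\log(\kappa/\varepsilon)) = \cO(\kappa\log\varepsilon^{-1})$, where the last equality uses $\log\kappa = \cO(\log\varepsilon^{-1})$ or is absorbed into the constant. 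I also need to verify that the precondition $d_k^2 \le \sigma_1^\natural$ required for \eqref{eq:control dist} is maintained throughout the iteration; this is automatic because $d_k^2$ is monotonically decreasing in $k$ and $d_0^2 \lesssim \sigma_r^\natural/\kappa \le \sigma_1^\natural$.

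The proof is not technically difficult: the only genuine thing to check is the arithmetic that both "max" branches in the hypotheses on $p$ and $\alpha$ arise naturally from pairing the respective requirements of \cref{thm:convergence} and \cref{thm:Initialization}, and that the choice $\varepsilon_0 = \Theta(1)$ is compatible with the constraint $\varepsilon_0 \in (0,\sqrt{\kappa r}/(8 c_0))$ from \cref{thm:Initialization} (which it is, since $\sqrt{\kappa r}\ge 1$). The $1-6n^{-2}$ probability follows from the remark before the corollary that the RIP-style events hold uniformly. There is no real obstacle; the statement is the direct composition of the two theorems with the deterministic bound \eqref{eq:control dist}.
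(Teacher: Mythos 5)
Your proposal is correct and follows exactly the route the paper intends: the paper offers no separate proof of this corollary beyond the remark that it follows ``by directly combining'' \eqref{eq:control dist} with \cref{thm:convergence,thm:Initialization}, and your bookkeeping (choosing $\varepsilon_0=\Theta(1)$ in \cref{thm:Initialization}, pairing the two branches of each maximum with the respective hypotheses of the two theorems, and converting $d_K$ to relative Frobenius error via $\|\H\bx^\natural\|_\fro\ge\sigma_1^\natural$) fills in precisely those details. No gaps.
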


\begin{remark}
When outliers are not appearing (i.e., $\alpha=0$), the RHC problem reduces to the vanilla Hankel matrix completion problem, and HSGD becomes projected gradient descent (PGD) introduced in \cite{cai2018spectral}. In the original paper, PGD theoretically requires $\cO(\kappa^2 \log \varepsilon^{-1})$ iterations to find a $\varepsilon$-optimal solution. With the improved proof techniques, we show that running $\cO(\kappa \log \varepsilon^{-1})$ iterations is sufficient for HSGD. Thus, as a special case of HSGD, we also theoretically improve the convergence speed of PGD to $\cO(\kappa \log \varepsilon^{-1})$ iterations.
\end{remark}

\begin{remark}
The convergence rate of HSGD suggests that the proposed algorithm runs faster on well-conditioned problems, i.e., $\kappa=\cO(1)$. In many applications, the condition number of the underlying Hankel matrix is indeed good. For example, in a NMR spectroscopy problem, $\kappa$ depends on the ratio between the largest and smallest magnitudes of the complex amplitudes \cite[Remark~1]{cai2019fast}, which typically is modest.
\end{remark}

\section{Numerical experiments}
In this section, we compare the proposed HSGD against the state-of-the-art RHC approaches, PartialSAP \cite{zhang2019correction} and RobustEMaC \cite{chen2014robust}. We demonstrate the empirical advantages of HSGD on both synthetic and real datasets. We hand tuned the parameters for their best performance. In particular, we use a iterative decaying $\gamma_k= 1.05+0.45\cdot 0.95^{k}$ for HSGD, so it starts with $\gamma_0=1.5$ and $\gamma_k \rightarrow 1.05$ as $k\rightarrow\infty$. By \cref{thm:convergence}, any $\gamma_k\in[1.05,2]$ (i.e., $b_0=20$) will work, we find the iterative decaying $\gamma_k$ provides the best empirical performance for HSGD. The reason behind this parameter choice is HSGD gets better outlier estimations in the latter iterations, so less amount of false-positive outliers will be taken. All numerical experiments were performed from Matlab on a Windows laptop equipped with Intel i7-8750H CPU and 32GB RAM. For a fair comparison, PROPACK \cite{larsen2004propack} was used for fast truncated SVD in all tested algorithms.
The Matlab implementation of HSGD is available online at \url{https://github.com/caesarcai/HSGD}.

\subsection{Synthetic examples}

We generate the rank-$r$ Hankel matrices via two steps: (i) generate a vector $\bx^\natural\in\C^n$ that is sparse in Fourier space with exact $r$ active frequencies;\footnote{We follow the same method used in \cite[section~III.A]{cai2021asap} to generate such vectors. In our tests, we ensure the active frequencies are well separated in the generated vectors.}
then (ii) generate the corresponding Hankel matrix $\H(\bx^\natural)\in\C^{n_1\times n_2}$ with $n_1\approx n_2\approx n/2$.\footnote{If $n$ is odd, we use $n_1=n_2=(n+1)/2$. If $n$ is even, we use $n_1=n_2-1=n/2$.}
Such a Hankel matrix must be rank-$r$
\cite{liao2016music}. For the suitable algorithms, we also generate the reweighted vector $\bz^\natural=\D\bx^\natural$ so that $\H(\bx^\natural)=\G(\bz^\natural)$.
We uniformly (without replacement) observe $m:=pn$ entries from $\bz^\natural$, then we uniformly choose $\alpha m$ entries among the observed ones to be corrupted. The corruption is done by adding complex outliers whose real parts and imaginary parts are drawn \textit{i.i.d.}~from the uniform distribution over the intervals $[-10\mathbb{E}(|\mathrm{Re}(z_i^\natural)|),10\mathbb{E}(|\mathrm{Re}(z_i^\natural)|)]$ and $[-10\mathbb{E}(|\mathrm{Im}(z_i^\natural)|),10\mathbb{E}(|\mathrm{Im}(z_i^\natural)|)]$, respectively. In the experiments, we use a uniform sampling model instead of Bernoulli sampling model since the former is easier to control the number of samples and outliers. We emphasize that the empirical behaviors are not much different between these two sampling models.

\vspace{0.05in}
\textbf{Empirical phase transition.}
In this section, we present the recoverability of the tested algorithms under various settings. We fix $n=125$ for all experiments in the section.
The pixels on the phase transition plots represent different problem settings. For each pixel, we conduct $50$ testing problems, then a white pixel means all $50$ cases were recovered and a black pixel means all $50$ cases were failed. Specifically speaking, the output of a testing problem is considered a successful recovery if $\|\G\bz^{(K)}-\G\bz^\natural\|_\fro/\|\G\bz^\natural\|_\fro\leq10^{-3}$ while the stopping criteria is $\|\G\Pi_{\Omega}\bz^{(k)}+\G\Pi_{\Omega}\bs^{(k)}-\G\Pi_{\Omega}\bf \|_\fro/\|\G\Pi_{\Omega}\bf\|_\fro\leq 10^{-5}$.\footnote{When suitable, we calculate the empirical residue in the equivalent vector form to save runtime.}

\begin{figure}[t]
    \centering
    \includegraphics[width = 0.30\linewidth]{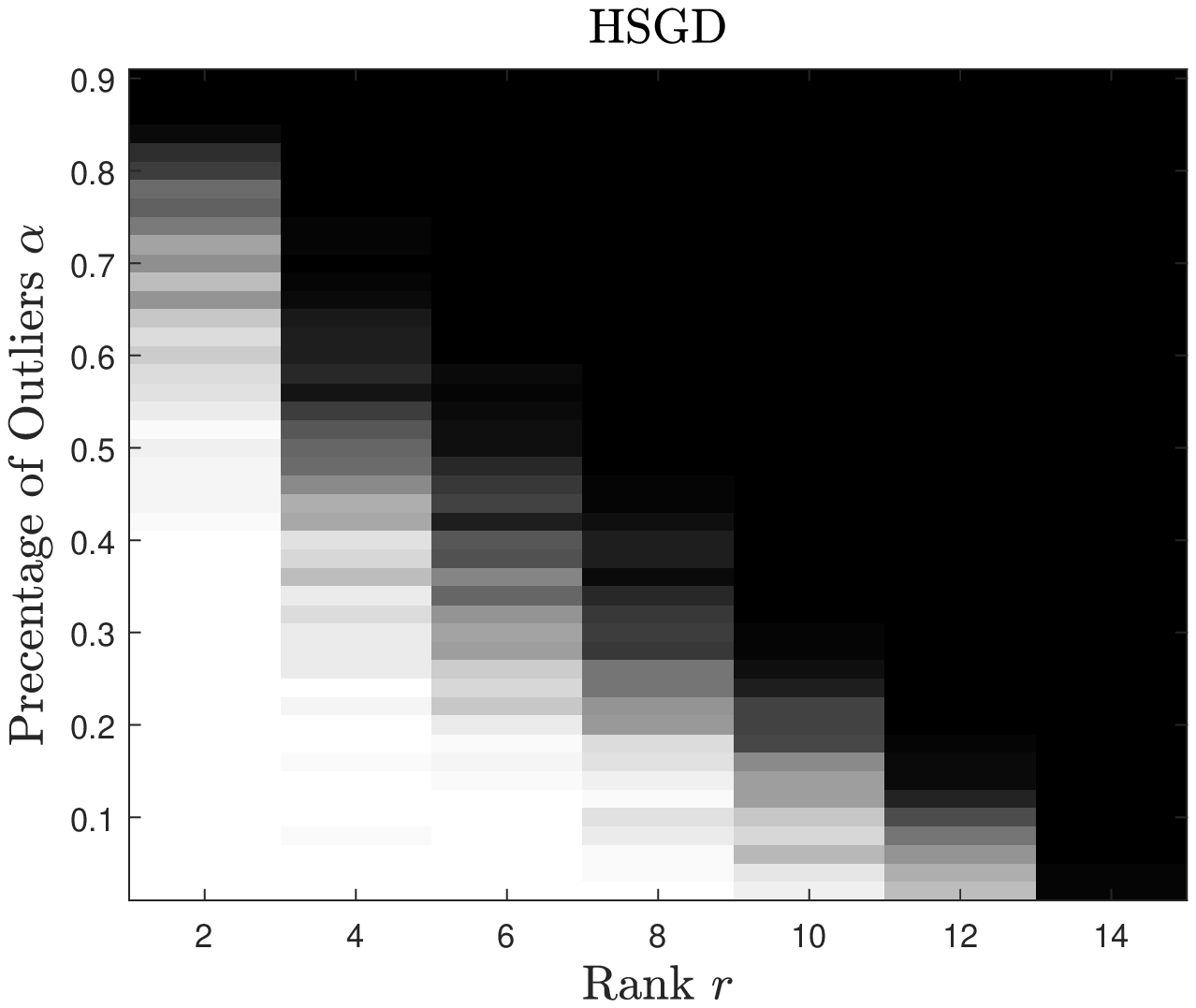}
    \hfill
    \includegraphics[width = 0.30\linewidth]{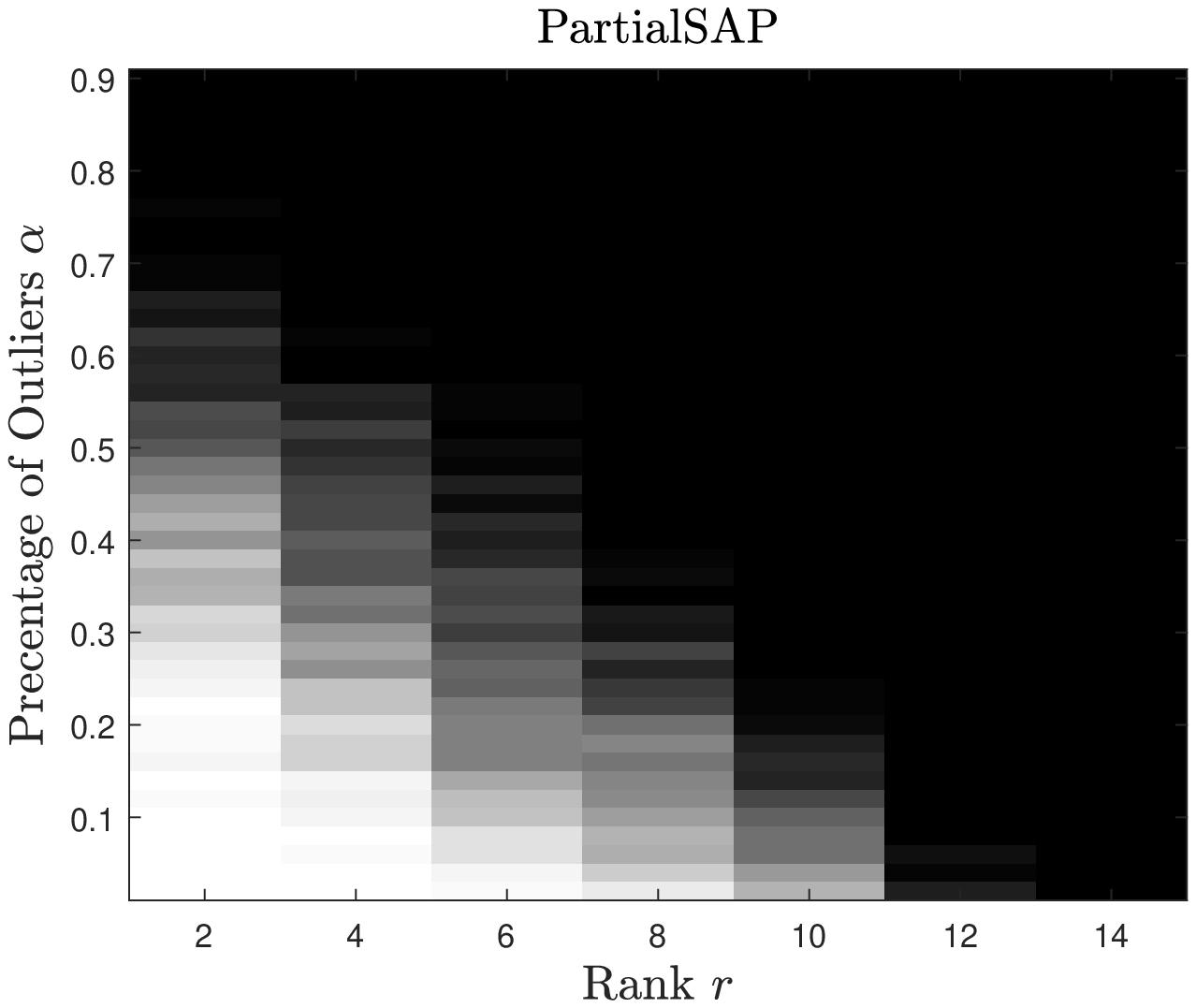}
    \hfill
    \includegraphics[width = 0.30\linewidth]{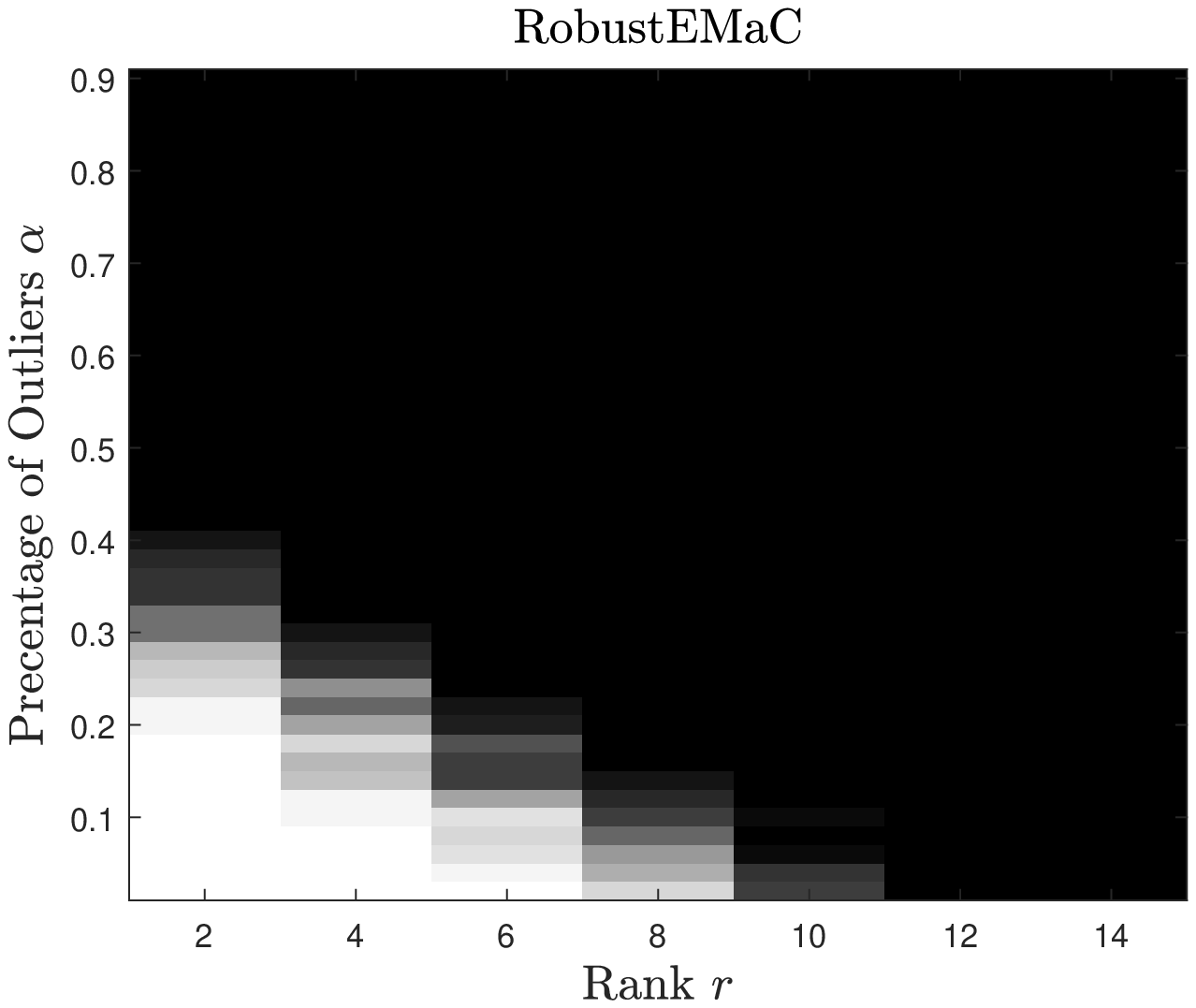}
    \caption{Empirical phase transition for HSGD, PartialSAP, and RobustEMaC: Rank \textit{vs.} rate of outliers. $50$ entries are sampled in all testing problems.}
    \label{fig:rank_vs_outlier}
\end{figure}

\begin{figure}[t]
    \centering
    \includegraphics[width = 0.30\linewidth]{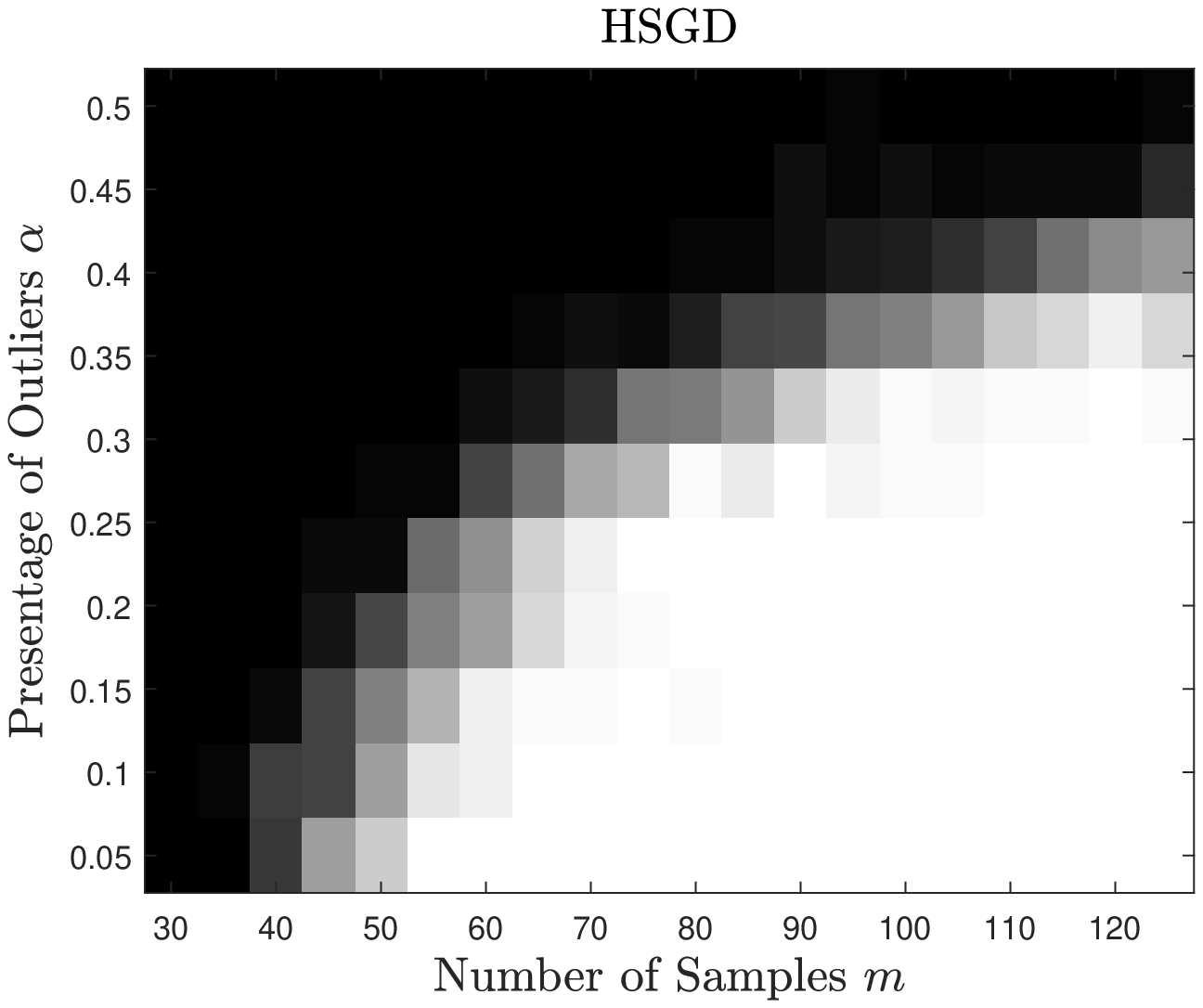}
    \hfill
    \includegraphics[width = 0.30\linewidth]{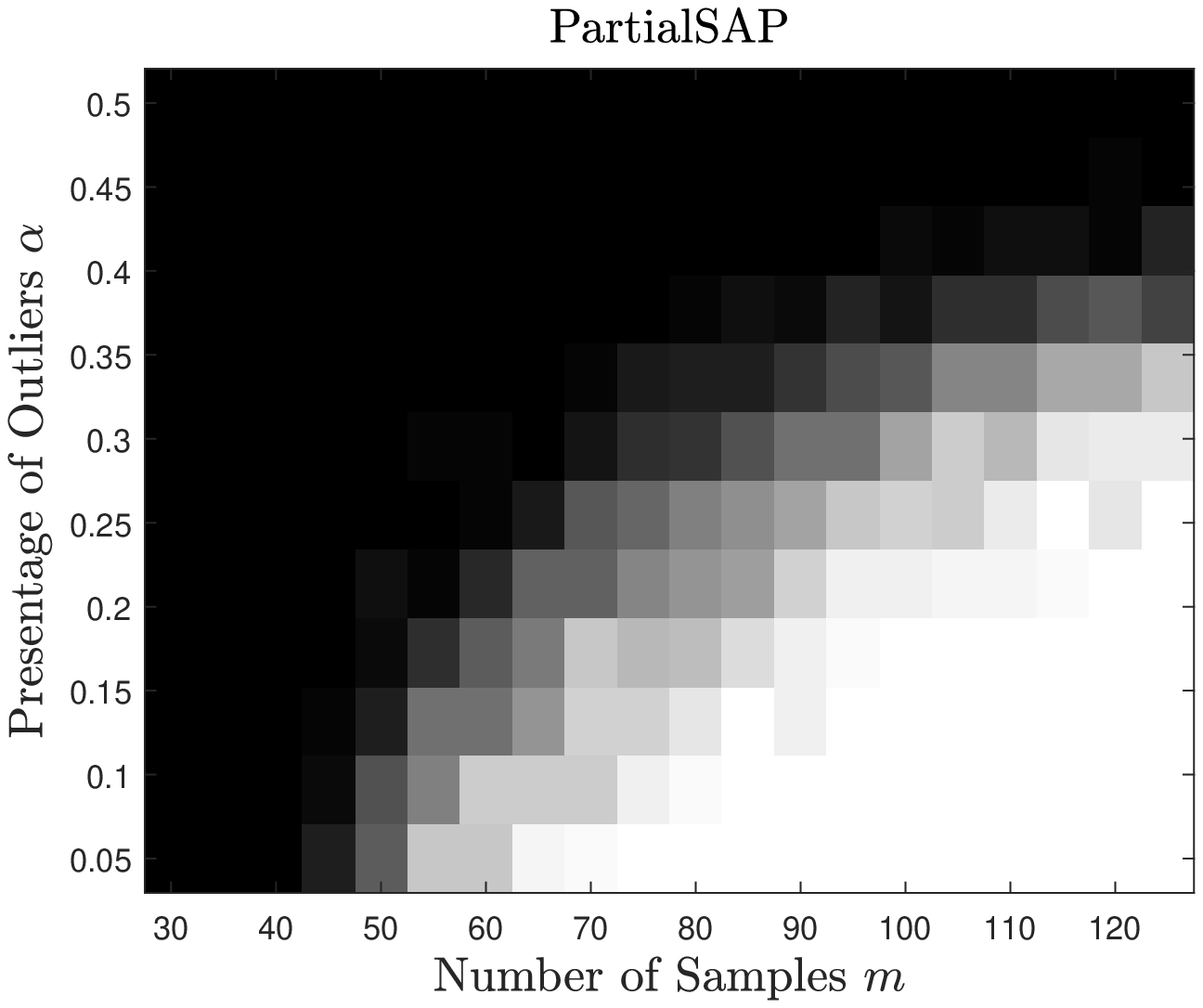}
    \hfill
    \includegraphics[width = 0.30\linewidth]{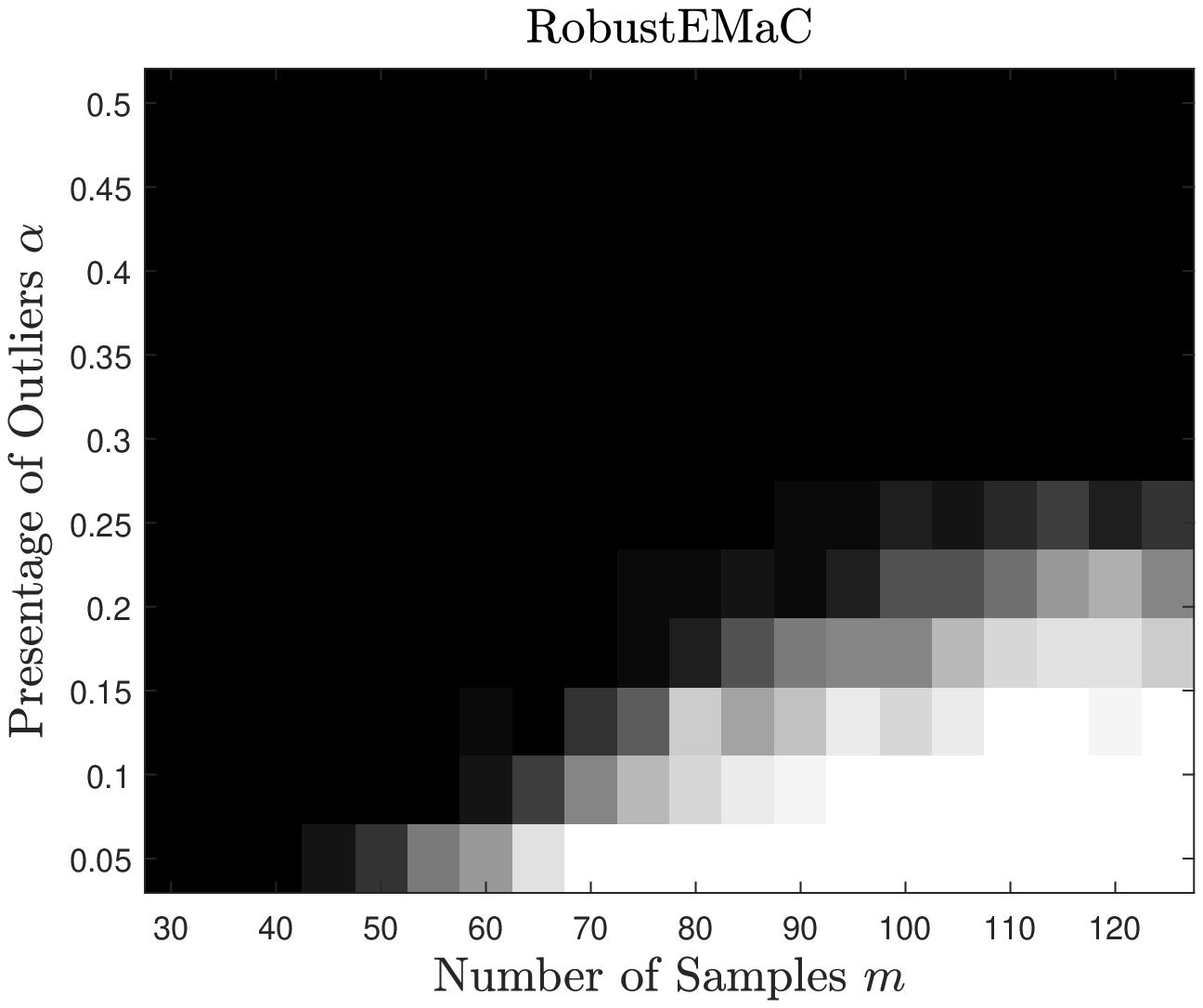}
    \caption{Empirical phase transition for HSGD, PartialSAP, and RobustEMaC: Number of samples \textit{vs.} rate of outliers. All testing problems have rank $10$.}
    \label{fig:sample_rate_vs_outlier}
\end{figure}

\begin{figure}[t]
    \centering
    \includegraphics[width = 0.30\linewidth]{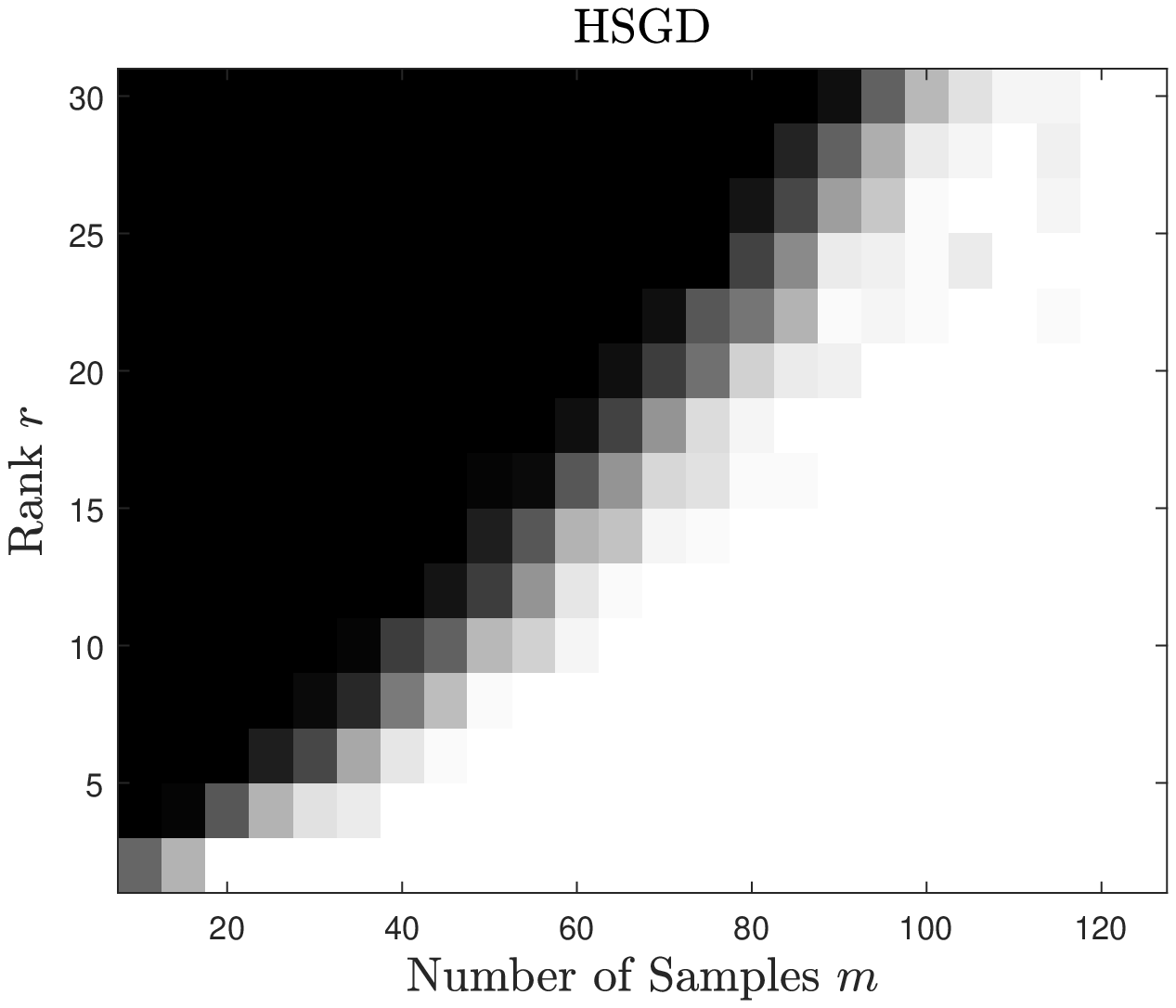}
    \hfill
    \includegraphics[width = 0.30\linewidth]{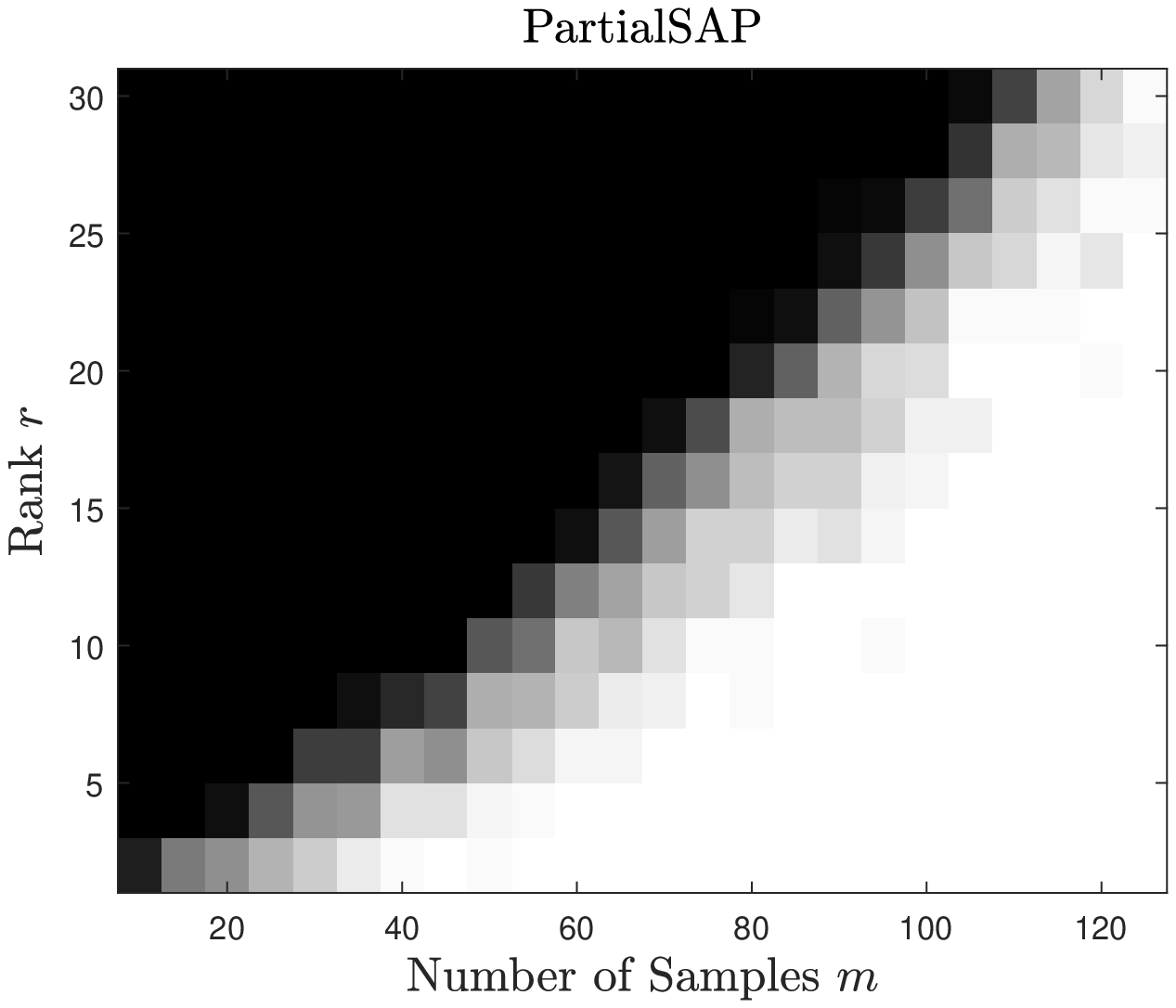}
    \hfill
    \includegraphics[width = 0.30\linewidth]{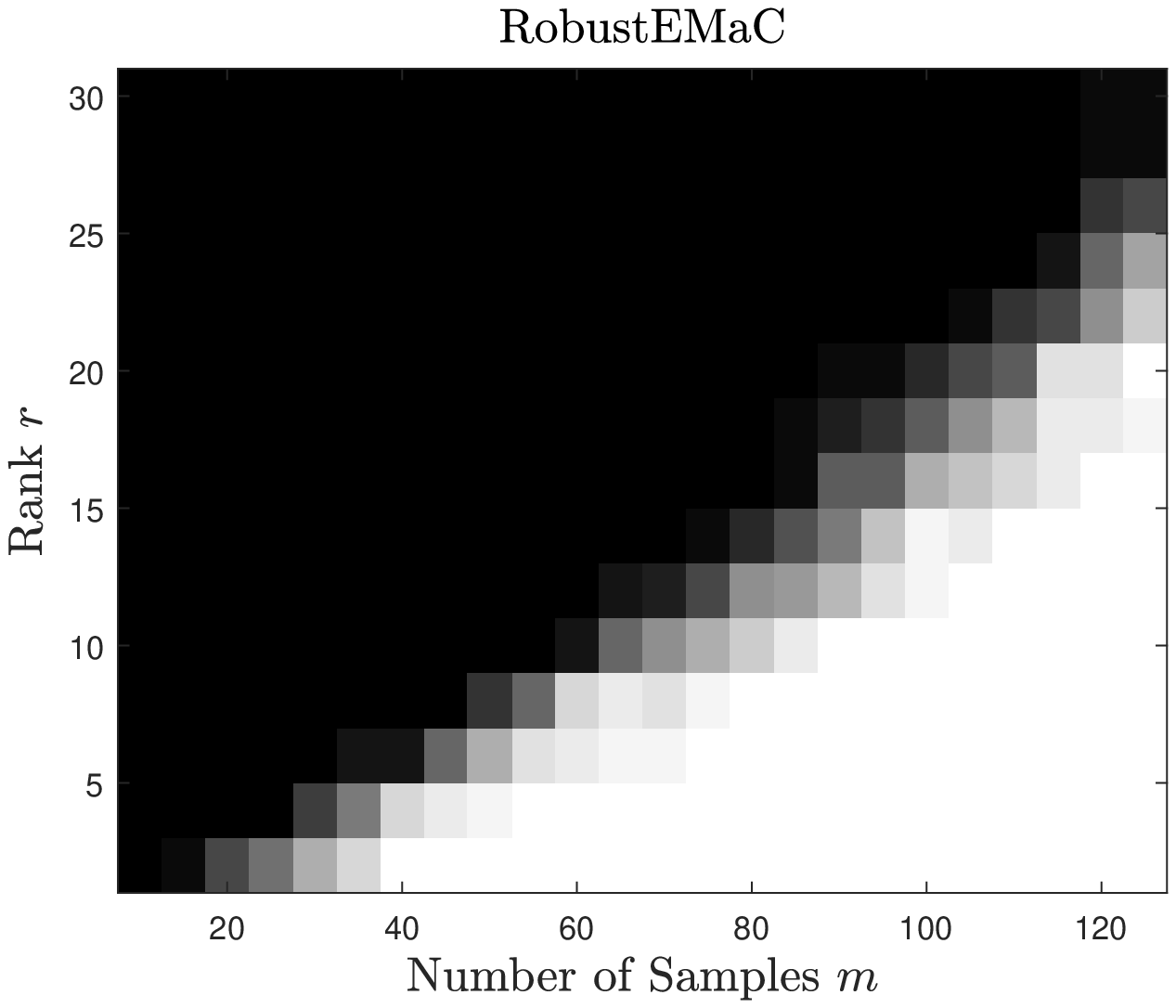}
    \caption{Empirical phase transition for HSGD, PartialSAP, and RobustEMaC: Number of samples \textit{vs.} rank. $10\%$ of samples are corrupted by outliers in all testing problems.}
    \label{fig:rank_vs_sample_rate}
\end{figure}

In \cref{fig:rank_vs_outlier}, we fix the number of samples $m=50$ and study the empirical phase transition with varying rank $r$ and outlier sparsity $\alpha$.
In \cref{fig:sample_rate_vs_outlier}, we fix rank $r=10$ and study the empirical phase transition with varying outlier sparsity $\alpha$ and number of samples $m$.
In \cref{fig:rank_vs_sample_rate}, we fix the outlier sparsity $\alpha=0.1$ and study the empirical phase transition with varying rank $r$ and number of samples $m$. In all three comparisons, we find HSGD has the best recoverability and the other non-convex algorithm, i.e., PartialSAP, is competitive.

\vspace{0.05in}
\textbf{Computational efficiency.}
We demonstrate the computational efficiency of the tested algorithms. All testing problems is this experiment have rank $r=10$, observation rate $p=40\%$, and outlier sparsity $\alpha=10\%$. The reported runtime is averaged over 20 trials. Since the large-scale problems are prohibitive for the convex method Robust-EMaC, so we only compare HSGD against PartialSAP in this section. While both non-convex algorithms have similar complexity orders, the leading constant for HSGD is expected to be much smaller. In the left subfigure of \cref{fig:speed_tests}, we exponentially increase the problem dimension $n$ and record the runtime. We observe that HSGD is $10\times$ faster than PartialSAP when the problem dimension is large. In the middle subfigure of \cref{fig:speed_tests}, we run the same dimension vs. runtime experiment with only HSGD and even larger dimensions. In this plot, we match the logarithmic base for $x$- and $y$-axis in the log-log plot and also include error bars. The slope of this log-log plot is approximately $1$ when $n$ is large, this verifies the claimed computational complexity for HSGD, i.e., the dependence on problem dimension is merely $\cO(n\log n)$. Moreover, the narrow error bar in the plot shows the runtime of HSGD is stable. In the right subfigure of \cref{fig:speed_tests}, we present the convergence behavior of the tested algorithms where we fix $n=2^{12}$. One can see both algorithms have linear convergence as the theorems indicated, and HSGD has a more sharp convergence rate with respect to runtime. Overall, we conclude HSGD is a highly efficient algorithm, compared to the state-of-the-art RHC approaches.

\begin{figure}[t]
    \centering
    \includegraphics[width = 0.3\linewidth]{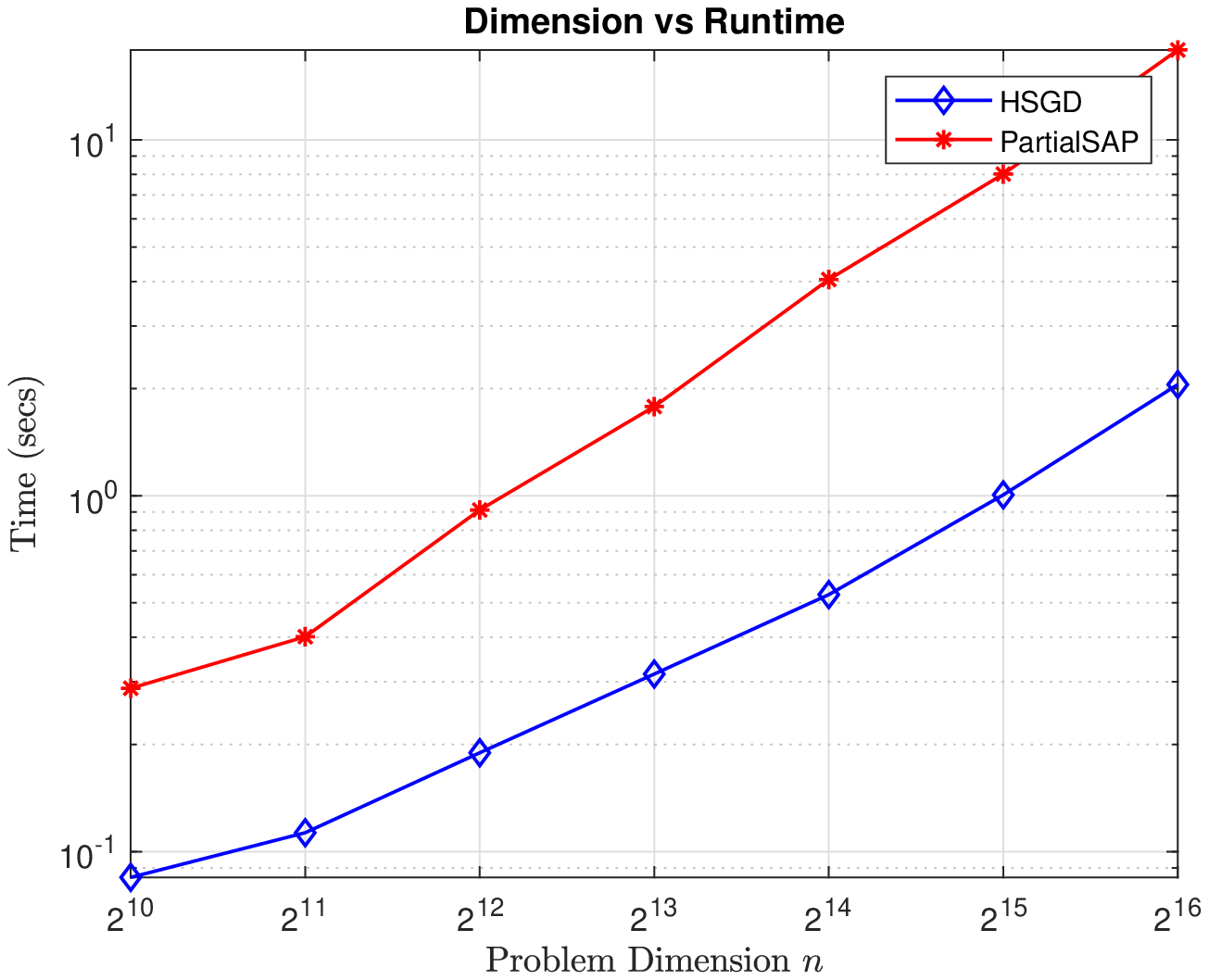}
    \hfill
    \includegraphics[width = 0.3\linewidth]{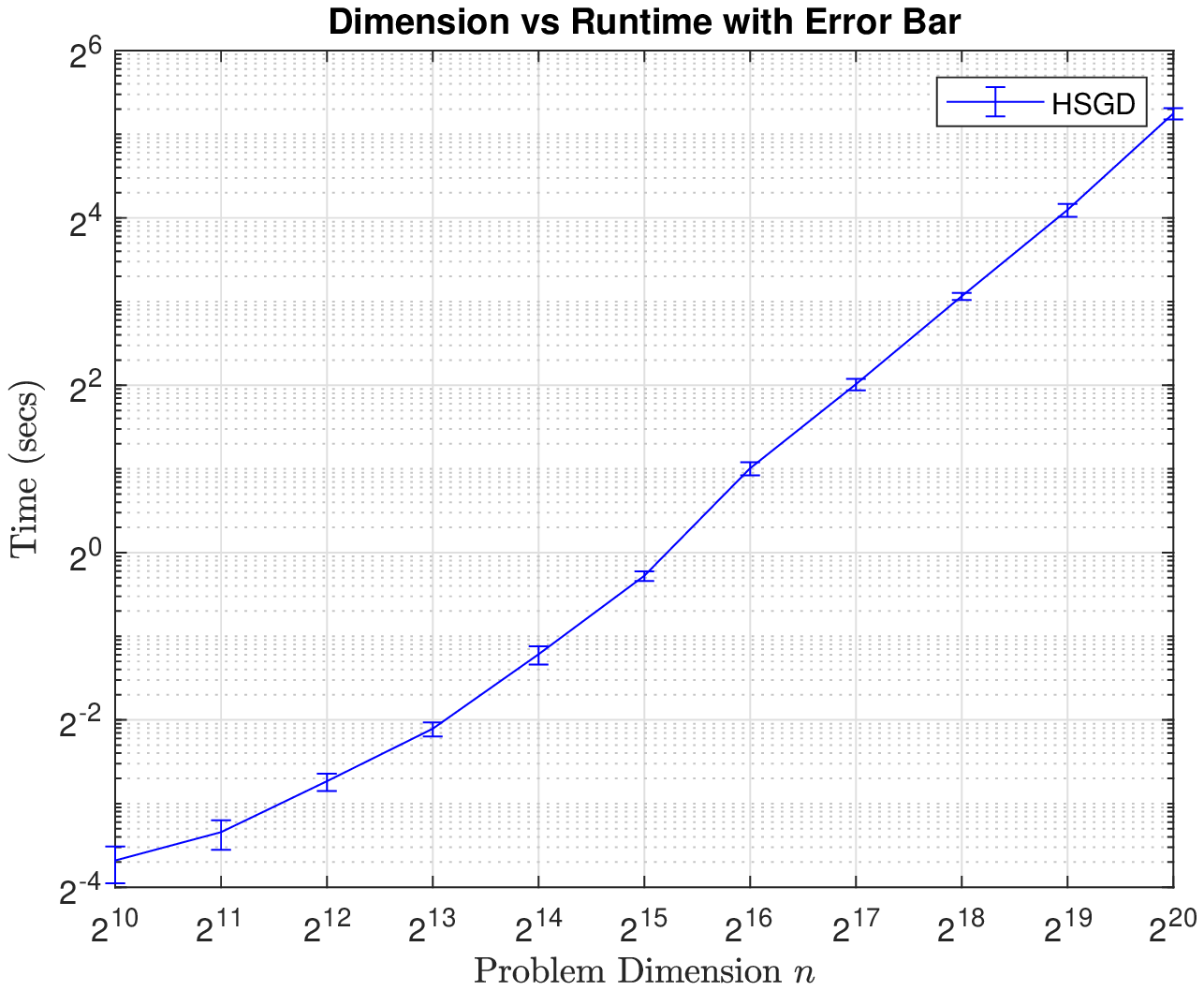}
    \hfill
    \includegraphics[width = 0.3\linewidth]{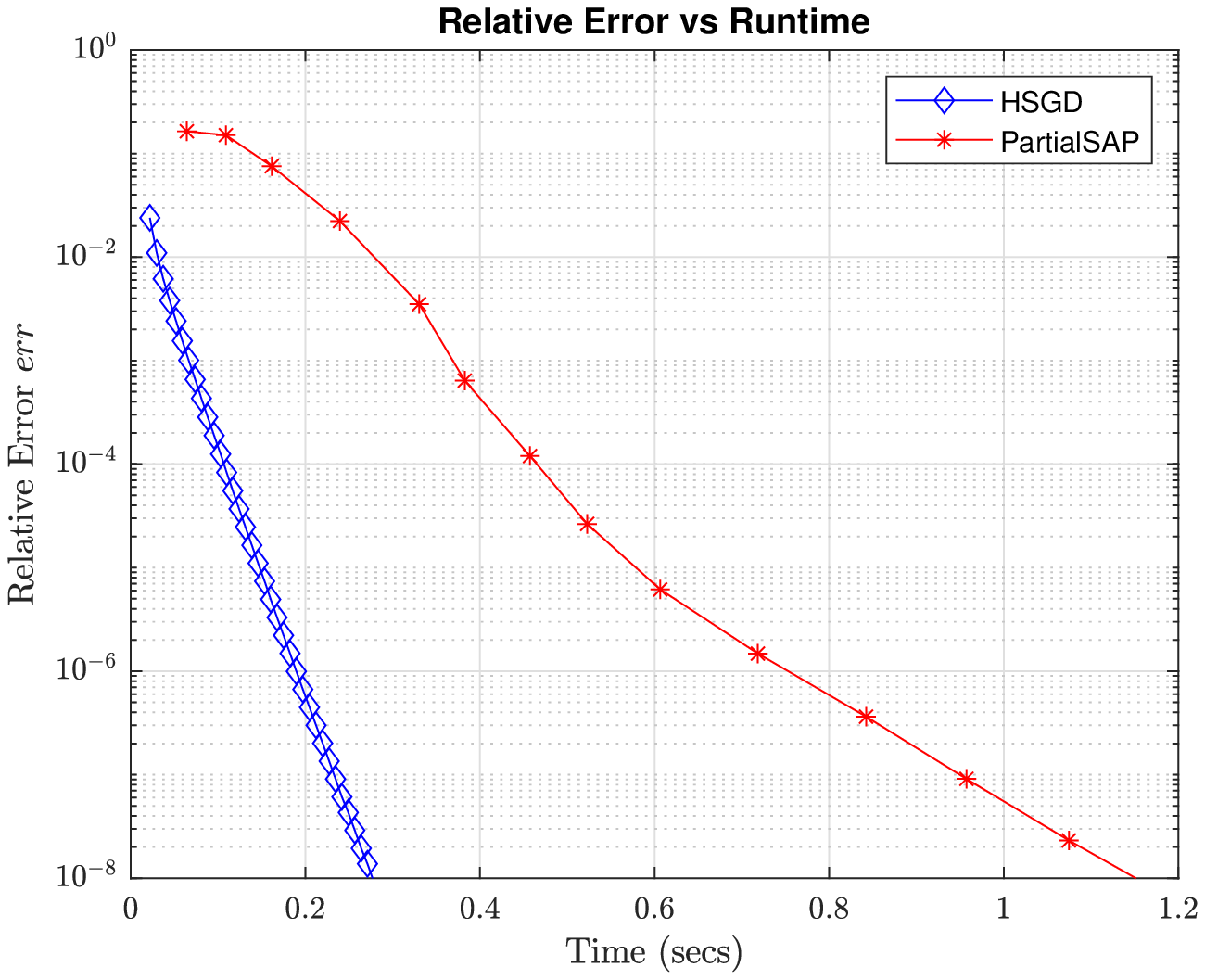}
    \caption{Experimental results for speed tests between HSGD and PartialSAP. \textbf{Left:} Dimension \textit{vs.} runtime. \textbf{Middle:} Dimension \textit{vs.} runtime with error bar for HSGD only. \textbf{Right:} Relative error \textit{vs.} runtime. }
    \label{fig:speed_tests}
\end{figure}
\subsection{Nuclear magnetic resonance spectroscopy}
As we discussed in \cref{sec:introduction}, NMR signal recovery is a widely used real-world benchmark for the problems of low-rank Hankel matrix: Given a clear one-dimensional NMR signal $\bx^\natural$, the corresponding Hankel matrix $\H\bx^\natural$ is rank-$r$ where $r$ is determined by the number of bars in the power spectrum of the signal.
In this section, we apply RHC algorithms to complete the partially observed NMR signal and remove the impulse corruptions, simultaneously. The testing NMR signal has the dimension $n=32,768$ and rank $r\approx 40$, which is prohibitively large size for RobustEMaC. We test HSGD and PartialSAP for recovering this signal under various observation rate $p$ and outlier sparsity $m$.
The runtime comparison results are summarized as \cref{tab:comtimeNMR} where tested algorithms recover the desired signal in all cases. One can see that HSGD maintains his speed advantage in this real-world application, under each of the settings.
Moreover, in \cref{fig:NMR}, we demonstrate the power spectrum of the signal recovered by HSGD. Therein, we not only successfully recovered the NMR signal but also clear the small noise in the original data.
Although it is not theoretically verified, the empirical results suggests that HSGD can also denoise small white noise when it detects extreme outliers.

\begin{table}[t]
\centering
\begin{small}
\caption{Runtime comparison between HSGD and PartialSAP for NMR signal recovery under various observation rate $p$ and outlier sparsity $\alpha$.}\label{tab:comtimeNMR}
\begin{tabular}{c|c c c c c}
\toprule
 ($p$, $\alpha$)  &($0.3$,$0.1$)   &($0.3$, $0.2$) &($0.3$, $0.3$)   &($0.4$, $0.4$)   &($0.5$, $0.5$)   \\
\midrule
PartialSAP   & $335.09$\,s   &$346.328$\,s  &$352.01$\,s     &$357.75$\,s   &$369.32$\,s   \\
HSGD  & $31.142$\,s   & $33.67$\,s  &$45.82$\,s    & $53.69$\,s   &$64.11$\,s  \\
\bottomrule
\end{tabular}
\end{small}
\end{table}

\begin{figure}[t]
\label{fig:NMR}
  \centering
    \includegraphics[clip=true,width = 0.45\linewidth]{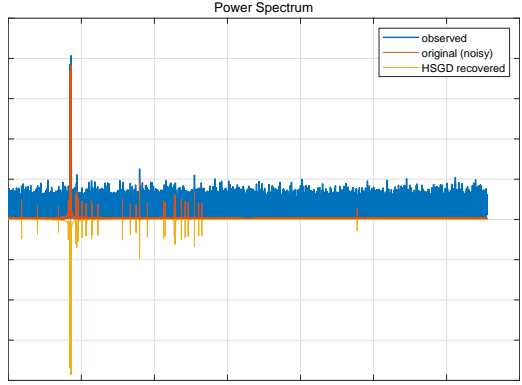}
    \caption{Power spectrum of the noisy original NMR signal, the observed signal ($p=30\%$ and $\alpha=30\%$), and HSGD recovered signal (upside down). Note that the observed signal in picture is rescaled by $1/p$, which is a common method to offset the energy loss due to partial observation.}
\end{figure}

\section{Conclusion remarks}\label{section:conclusion}
In this work, we proposed a novel non-convex algorithm, coined Hankel Structured Gradient Descent (HSGD), for robust Hankel matrix completion problems. HSGD is highly computing- and sample-efficient. In particular, HSGD costs merely $\cO(r^2n+rn\log n)$ flops per iteration while it requires as few as $\cO(\max\{c_s^2\mu^2r^2 \log n, c_s\mu \kappa^3r^2 \log n\})$ samples. HSGD is also robust and tolerates $\cO(1/\max\{c_s \mu\kappa^{3/2}r^{3/2}, c_s \mu r\kappa^2\} )$-fraction outliers. Theoretical recovery guarantees have been established for HSGD, along with a provable linear convergence rate. The superior performance of HSGD, in terms of efficiency and robustness, is verified by numerical experiments on both synthetic and real datasets.

\appendix
\section{Proofs of theoretical results} \label{sec:proofs}
In this section, we provide the analysis for the claimed theoretical results.
All the proofs are under \cref{amp:Bernoulli,amp:incoherence,amp:sparsity}. We start with introducing some addition notation used in the analysis.
We denote the tangent space of rank-$r$ matrix manifold at $\G \bm{z}^{\natural}$ by
\begin{equation*}
T:=\left\{\bm{X}~|~ \bm{X}=\bm{U}^{\natural} \bm{C}^*+\bm{D} \bm{V}^{\natural *}\ \text{where}\ \bm{C}\in\mathbb{C}^{n_1\times r}, \bm{D}\in\mathbb{C}^{n_2\times r}\right\}.
\end{equation*}
For any $\bm{Q}\in \mathbb{Q}_r$, $\left(\bm{L}^{\natural}\bm{Q},\bm{R}^{\natural}\bm{Q}\right)$ is an equivalent solution to $\left(\bm{L}^{\natural},\bm{R}^{\natural}\right)$. Thus, we define the solution set to be
\begin{equation*}
\begin{aligned}
 \E\left(\bm{L}^{\natural},\bm{R}^{\natural}\right)
 :=&\{\left(\bm{L},\bm{R}\right)\in \mathbb{C}^{n_1\times r}\times \mathbb{C}^{n_2\times r}~|~\bm{L}=\bm{L}^{\natural}\bm{Q},\bm{R}=\bm{R}^{\natural}\bm{Q}
 ~\text{where}~ \bm{Q}\in \mathbb{Q}_r\}.
\end{aligned}
\end{equation*}
 The sequence $\{\tilde{d}_k\}_{k\ge0}$ is defined as
$$
\tilde{d}_{k}:=\mathrm{dist}(\tilde{\bm{L}}^{(k)},\tilde{\bm{R}}^{(k)};\bm{L}^{\natural},\bm{R}^{\natural}),
$$
where we use
$\tilde{\bm{L}}^{(k)}:=\BL^{(k-1)}-\eta\nabla_{\BL} \ell(\BL^{(k-1)},\BR^{(k-1)};\bs^{(k)})$ and $\tilde{\bm{R}}^{(k)}:=\BR^{(k-1)}-\eta\nabla_{\BR} \ell(\BL^{(k-1)},\BR^{(k-1)};\bs^{(k)})$ to denote the middle step of \eqref{eq:update_Hankel} before the projection onto $\L$ and $\R$ respectively.
With a slight abuse of notation, we let $\P_{\Omega}$ be the projection operator onto the space which can be represented by an orthonormal basis of Hankel matrices. That is, for any matrix $\bm{Z}\in \mathbb{C}^{n_1\times n_2}$,
\begin{equation*}
\P_{\Omega}\left(\bm{Z}\right)=\sum_{a\in\Omega}\l \bm{Z}, \bm{H}_a\r \bm{H}_a,
\end{equation*}
where $\{\bm{H}_a\}_{a=1}^n$ is the orthonormal basis of Hankel matrices, defined by
\begin{equation*}
\bm{H}_a:=\textstyle{\frac{1}{\sqrt{\varsigma_a}}}\H \bm{e}_a,  \quad \textnormal{where } \bm{e}_a~\text{is the} ~a\text{-th standard basis vector of } \mathbb{R}^n.
\end{equation*}
By this definition, for any Hankel matrix $\G\bm{z}$, we have $\P_{\Omega}\G\bm{z}=\G\Pi_{\Omega}\bm{z}$. $\delta_a$ is defined as
 \begin{equation}  \label{eq:delta_a}
 \delta_a=
 \begin{cases}
 1,\quad &\textnormal{with probability}\ p; \cr
 0,      &\textnormal{otherwise},
 \end{cases}.
 \end{equation}
 for all $a\in [n]$.

\subsection{Technical lemmas} \label{sec:tech_lemma}
We prove some technical lemmas that will be used in the convergence analysis in this subsection.
\begin{lemma}\label{projectionerror}
 There exists some universal constant $c_0>0$ such that
\begin{equation}\label{event:RIP3}
\lV \left(p^{-1}\G \Pi_\Omega -\G\right) \bm{z}^{\natural}\rV_2\le c_0\sqrt{(p^{-1}\mu c_s r\log n)/n}\lV \G \bm{z}^{\natural}\rV_2
\end{equation}
holds with probability at least $1-2n^{-2}$ provided $p\ge \left(\mu c_s r\log n\right)/n$.
\end{lemma}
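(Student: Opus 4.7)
The plan is to write the random matrix $(p^{-1}\G\Pi_\Omega - \G)\bz^\natural$ as a sum of independent, mean-zero matrices and then invoke the matrix Bernstein inequality. Since $\G\bm{e}_a = \bm{H}_a$, we have
\begin{equation*}
(p^{-1}\G\Pi_\Omega - \G)\bz^\natural \;=\; \sum_{a=1}^n \bm{X}_a, \qquad \bm{X}_a := (p^{-1}\delta_a - 1)\, z_a^\natural\, \bm{H}_a,
\end{equation*}
with $\delta_a$ as in \eqref{eq:delta_a}. Each $\bm{X}_a$ is independent, $\mathbb{E}\bm{X}_a = 0$, so I only need to estimate the almost sure uniform bound $M := \max_a \|\bm{X}_a\|_2$ together with the matrix variance proxy $V := \max\bigl\{\|\sum_a \mathbb{E}[\bm{X}_a \bm{X}_a^*]\|_2,\, \|\sum_a \mathbb{E}[\bm{X}_a^* \bm{X}_a]\|_2\bigr\}$.

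For $M$, I would use $|p^{-1}\delta_a - 1| \leq p^{-1}$, $\|\bm{H}_a\|_2 = 1/\sqrt{\varsigma_a}$, and the identity $|z_a^\natural|/\sqrt{\varsigma_a} = |x_a^\natural|$. The incoherence Assumption~\ref{amp:incoherence} bounds each entry $|x_a^\natural| = |(\H\bx^\natural)_{i,j}| \leq \|\bm{U}^\natural\|_{2,\infty}\, \sigma_1^\natural\, \|\bm{V}^\natural\|_{2,\infty} \leq \sigma_1^\natural \mu c_s r/n$, giving $M \leq p^{-1} \sigma_1^\natural \mu c_s r/n$. For $V$, the scalar identity $\mathbb{E}(p^{-1}\delta_a - 1)^2 = (1-p)/p \leq 1/p$ together with $|z_a^\natural|^2 \bm{H}_a\bm{H}_a^* = |x_a^\natural|^2 (\H\bm{e}_a)(\H\bm{e}_a)^*$ reduces the variance to $p^{-1}\bigl\|\sum_a |x_a^\natural|^2 (\H\bm{e}_a)(\H\bm{e}_a)^*\bigr\|_2$. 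The matrix $\sum_a |x_a^\natural|^2 (\H\bm{e}_a)(\H\bm{e}_a)^*$ is diagonal and its $i$-th entry is exactly $\|(\H\bx^\natural)_{i,:}\|_2^2 \leq \sigma_1^{\natural 2}\|\bm{U}^\natural\|_{2,\infty}^2 \leq \sigma_1^{\natural 2} \mu c_s r/n$, so $V \leq p^{-1} \sigma_1^{\natural 2} \mu c_s r/n$; the column-side bound uses $\bm{V}^\natural$ symmetrically.

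Applying matrix Bernstein with $t = c_0 \sigma_1^\natural \sqrt{p^{-1} \mu c_s r \log n /n}$ and a large enough universal constant $c_0$ yields probability at least $1 - 2n^{-2}$. The hypothesis $p \geq \mu c_s r \log n/n$ is exactly what ensures $t \lesssim V/M = \sigma_1^\natural$, so the Bernstein bound is in the sub-Gaussian regime and the dimension factor $(n_1 + n_2) \leq 2n$ is absorbed by $\exp(-\Omega(\log n))$. Dividing through by $\|\G\bz^\natural\|_2 = \sigma_1^\natural$ gives the stated inequality. The main obstacle is the variance calculation: one must recognize $\sum_a |z_a^\natural|^2 \bm{H}_a\bm{H}_a^*$ as a diagonal matrix whose entries are squared row-norms of $\H\bx^\natural$, which is what lets the incoherence assumption directly control the matrix variance; the rest is a textbook matrix Bernstein argument.
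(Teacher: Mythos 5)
Your proposal is correct and follows essentially the same route as the paper: the same decomposition into independent mean-zero summands $(p^{-1}\delta_a-1)z_a^\natural\bm{H}_a$, the same incoherence-based bounds on the uniform norm (via $\|\H\bx^\natural\|_\infty\le\sigma_1^\natural\mu c_s r/n$) and on the variance proxy (recognizing $\sum_a|z_a^\natural|^2\bm{H}_a\bm{H}_a^*$ as the diagonal of row-norms, bounded by $\sigma_1^{\natural 2}\mu c_s r/n$), followed by matrix Bernstein with the same choice of $t$. The paper's proof is identical in substance, just phrased with $\|\G\bz^\natural\|_2$ in place of $\sigma_1^\natural$ and $\|\D^{-1}\bz^\natural\|_\infty$ in place of $|x_a^\natural|$.
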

\begin{proof}
By the definition of $\delta_a$ in \cref{eq:delta_a}, first we have
 \begin{equation*}
   p^{-1}\G \Pi_\Omega \bm{z}^{\natural}-\G \bm{z}^{\natural}=p^{-1}\left(\delta_a-p\right)\bm{z}^{\natural}_a \bm{H}_a.
 \end{equation*}
Denote $\bm{\Z}_a:=p^{-1}\left(\delta_a-p\right) \bm{z}^{\natural}_a\bm{H}_a$. Thus, $\mathbb{E}[\bm{\Z}_a]=0$. By $\lV \bm{H}_a \rV_2 \le \frac{1}{\sqrt{\varsigma_a}}$, we have
\begin{equation*}
\lV \bm{\Z}_a\rV_2\le p^{-1} \lv\bm{z}^{\natural}_a\lv \lV\bm{H}_a \rV_2 \le p^{-1} \frac{\lv \bm{z}^{\natural}_a \lv}{\sqrt{\varsigma_a}}\le p^{-1}\lV \D^{-1}\bm{z}^{\natural}\rV_{\infty}.
\end{equation*}
Note that $\bm{\Z}_a \bm{\Z}_a^*=\left(\textstyle{\frac{\delta_a}{p}}-1\right)^2\lv\bm{z}^{\natural}_a\lv^2\bm{H}_a\bm{H}_a^*$. Thus,
\begin{equation*}
\begin{aligned}
\Big\| \mathbb{E}\sum_a\bm{\Z}_a \bm{\Z}_a^*\Big\|_2
&\leq {\frac{1}{p}}\Big\|\sum_a\lv\bm{z}^{\natural}_a\lv^2\bm{H}_a\bm{H}_a^*\Big\|_2
\le {\frac{1}{p}} \big\|\mathrm{diag}\big(\G\bm{z}^{\natural}\left(\G\bm{z}^{\natural}\right)^*\big)\big\|_2
\le {\frac{1}{p}}\lV\G\bm{z}^{\natural}\rV_{2,\infty}^2.
\end{aligned}
\end{equation*}
Similarly, we have $\| \mathbb{E}\left(\sum_a \bm{\Z}_a^*\bm{\Z}_a\right)\|_2\le \frac{1}{p}\|(\G\bm{z}^{\natural})^*\|_{2,\infty}^2$. Moreover, by the $\mu$-incoherence condition, we have
\begin{equation*}
\lV\G\bm{z}^{\natural}\rV_{2,\infty}^2=\mathop{\max}_i\lV \bm{e}_i^* \bm{U}^{\natural}\bm{\Sigma}^{\natural}\bm{V}^{\natural*}\rV_2^2\le \lV\G\bm{z}^{\natural}\rV_{2}^2\mathop{\max}_i\lV \bm{e}_i\bm{U}^{\natural}\rV\le \frac{c_s\mu r}{n} \lV\G\bm{z}^{\natural}\rV_{2}^2.
\end{equation*}
Similarly, we also have $\|(\G\bm{z}^{\natural})^*\|_{2,\infty}^2\le \frac{c_s\mu r}{n} \|\G\bm{z}^{\natural}\|_{2}^2$. Note that
\begin{equation*}
\|\D^{-1}\bm{z}^{\natural}\|_{\infty}=\lV\G \bm{z}^{\natural}\rV_{\infty}=\mathop{\max}_{i,j}\lv e_i^*\left(\G \bm{z}^{\natural}\right)e_j\lv\le \frac{c_s\mu r}{n} \lV\G\bm{z}^{\natural}\rV_{2}.
\end{equation*}
Using the bounds of $\|\G\bm{z}^{\natural}\|_{2,\infty}$, $\|(\G\bm{z}^{\natural})^*\|_{2,\infty}$ and $\|\D^{-1}\bm{z}^{\natural}\|_{\infty}$, the Bernstein's inequality \cite[Theorem~1.6]{tropp2012user} then yields
\begin{equation}\label{sumz}
\mathbb{P}\Big(\big\| \sum_a \bm{\Z}_a\big\|_2>t\Big)\le \left(n_1+n_2\right)\exp\left(\frac{-n p t^2/2}{ \lV\G\bm{z}^{\natural}\rV_{2}^2 c_s\mu r+t\lV\G\bm{z}^{\natural}\rV_{2}c_s\mu r/3}\right).
\end{equation}
Let $t=c_0\sqrt{\left(\mu c_s r\log n\right)/(pn)}\lV \G \bm{z}^{\natural}\rV_2$ and $p\ge \left(\mu c_s r\log n\right)/n$. We then have
\begin{equation*}
\mathbb{P}\Big(\lV p^{-1}\G \Pi_\Omega \bm{z}^{\natural}-\G \bm{z}^{\natural}\rV_2> c_0\sqrt{p^{-1}\mu c_s r\log n/n}\lV \G \bm{z}^{\natural}\rV_2\Big)\le 2n^{-2},
\end{equation*}
for any constant $c_0\ge4$.
\end{proof}

\begin{lemma}\label{RIP1}
 There exists a constant $c_4$ such that if $p\ge(\varepsilon_0^{-2}c_4 \mu r\log n)/n$, it holds
\begin{equation}\label{eq:projectionRIP1}
\lV p^{-1}\P_T\P_{\Omega}\G\G^*\P_T-\P_T \G\G^*\P_T \rV\le \varepsilon_0
\end{equation}
with probability at least $1-2n^{-2}$.
\end{lemma}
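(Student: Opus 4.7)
The plan is to apply the matrix Bernstein inequality to a sum of independent, mean-zero self-adjoint operators on $\C^{n_1\times n_2}$. First I observe that every $\bm{H}_a$ lies in the range of $\G$, so $\G\G^*\bm{H}_a=\bm{H}_a$ and hence $\P_{\Omega}\G\G^*=\P_{\Omega}$. Combined with $\mathbb{E}[\P_{\Omega}]=p\,\G\G^*$, this rewrites the target as
$$
p^{-1}\P_T\P_{\Omega}\G\G^*\P_T-\P_T\G\G^*\P_T=\sum_{a=1}^{n}\Big(\tfrac{\delta_a}{p}-1\Big)\,\P_T\bm{H}_a\bm{H}_a^*\P_T,
$$
which is a mean-zero sum of rank-one self-adjoint operators on matrix space. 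Each summand has spectral norm at most $|\delta_a/p-1|\cdot\|\P_T\bm{H}_a\|_\fro^2\le p^{-1}\|\P_T\bm{H}_a\|_\fro^2$.

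The key quantitative ingredient is a bound on $\|\P_T\bm{H}_a\|_\fro^2$. Writing $\bm{H}_a=\varsigma_a^{-1/2}\sum_{i+j=a+1}\bm{e}_i\bm{e}_j^*$ and invoking \cref{amp:incoherence}, I compute
$$
\|\bm{U}^{\natural}\bm{U}^{\natural *}\bm{H}_a\|_\fro^2=\frac{1}{\varsigma_a}\sum_{(i,j):i+j=a+1}\|\bm{e}_i^*\bm{U}^{\natural}\|_2^2\le \frac{\mu c_s r}{n},
$$
and symmetrically $\|\bm{H}_a\bm{V}^{\natural}\bm{V}^{\natural *}\|_\fro^2\le\mu c_s r/n$. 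The standard inequality $\|\P_T\bm{Z}\|_\fro^2\le\|\bm{U}^{\natural}\bm{U}^{\natural *}\bm{Z}\|_\fro^2+\|\bm{Z}\bm{V}^{\natural}\bm{V}^{\natural *}\|_\fro^2$ then gives $\|\P_T\bm{H}_a\|_\fro^2\lesssim \mu c_s r/n$.

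For the matrix variance, the identity $(\P_T\bm{H}_a\bm{H}_a^*\P_T)^2=\|\P_T\bm{H}_a\|_\fro^2\cdot\P_T\bm{H}_a\bm{H}_a^*\P_T$ combined with $\sum_a \bm{H}_a\bm{H}_a^*=\G\G^*$ and $\|\P_T\G\G^*\P_T\|\le 1$ yields
$$
\Big\|\sum_a\mathbb{E}\Big[\big(\tfrac{\delta_a}{p}-1\big)^2\big(\P_T\bm{H}_a\bm{H}_a^*\P_T\big)^2\Big]\Big\|\le\frac{1-p}{p}\cdot\max_a\|\P_T\bm{H}_a\|_\fro^2\cdot\|\P_T\G\G^*\P_T\|\lesssim \frac{\mu c_s r}{pn}.
$$
Applying the matrix Bernstein inequality (e.g., \cite[Thm.~1.6]{tropp2012user}) with $t=\varepsilon_0$ and choosing the constant $c_4$ large enough so that $p\ge \varepsilon_0^{-2}c_4\mu r\log n/n$ (absorbing the $c_s$ factor into $c_4$) forces the exponent to exceed $2\log n+\log(n_1 n_2)$, which controls the failure probability by $2n^{-2}$.

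The main obstacle is the tangent-space estimate $\|\P_T\bm{H}_a\|_\fro^2\lesssim \mu c_s r/n$: although each Hankel basis matrix $\bm{H}_a$ may have rank as large as $\min(n_1,n_2)$, the $\varsigma_a^{-1/2}$ normalization interacts with the row-incoherence of $\bm{U}^{\natural}$ and $\bm{V}^{\natural}$ to yield exactly the same single-entry-style control that appears in the generic matrix completion analysis, and this is what keeps the sample complexity at the stated order. Everything else is a routine invocation of the matrix Bernstein inequality.
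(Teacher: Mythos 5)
Your proposal is correct and follows essentially the same route as the paper: both decompose $p^{-1}\P_T\P_{\Omega}\G\G^*\P_T-\P_T\G\G^*\P_T$ into the mean-zero sum $\sum_a(\delta_a/p-1)\langle\cdot,\P_T\bm{H}_a\rangle\P_T\bm{H}_a$, control the summand norm and the variance through the incoherence bound $\|\P_T\bm{H}_a\|_\fro^2\le 2c_s\mu r/n$ together with $\|\P_T\G\G^*\P_T\|\le 1$, and conclude via matrix Bernstein. The only differences are notational (your operator ``outer product'' form versus the paper's supremum formulation of the variance), and your reading that the $c_s$ factor is absorbed into $c_4$ matches what the paper's proof actually requires.
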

\begin{proof}
Notice $\G\G^*\bm{H}_a=\bm{H}_a$. For any matrix $\bm{X}\in \mathbb{C}^{n_1\times n_2}$, we have
\begin{equation*}
\begin{split}
\left(\textstyle{\frac{1}{p}}\P_T\P_{\Omega}\G\G^*\P_T-\P_T\G\G^*\P_T\right)\left(\bm{X}\right)
=&~\P_T\left( \textstyle{\frac{1}{p}}\sum_a \delta_a\l  \P_T\bm{X},\G\G^*\bm{H}_a\r\bm{H}_a- \G\G^*\P_T\bm{X}\right)\cr
=&~\sum_a \left(p^{-1}\delta_a-1\right)\l \bm{X}, \P_T\bm{H}_a\r \P_T\bm{H}_a.
\end{split}
\end{equation*}
Denote $\bm{\Y}_a:\bm{X}\mapsto\left(\textstyle{\frac{\delta_a}{p}}-1\right)\l \bm{X}, \P_T\bm{H}_a\r \P_T\bm{H}_a$. Thus $\mathbb{E}\left[\bm{\Y}_a\right]=0$ and
\begin{align*}
\lV \bm{\Y}_a\rV=&~\mathop{\sup}_{\lV \bm{X}\rV_\fro=1}\lV \left(p^{-1}\delta_a-1\right)\l \bm{X}, \P_T\bm{H}_a\r \P_T\bm{H}_a\rV_\fro \\
\le&~ p^{-1}\mathop{\sup}_{\lV \bm{X}\rV_\fro=1}\lV \bm{X}\rV_\fro \lV \P_T\bm{H}_a\rV_\fro^2
\le 2p^{-1}c_s\mu r/n,
\end{align*}
 where the last inequality follows from
\begin{align*}
\lV\P_T\bm{H}_a\rV_\fro^2\le &~\lV\P_{\bm{U}^{\natural}}\bm{H}_a\rV_\fro^2+\lV\P_{\bm{V}^{\natural}}\bm{H}_a\rV_\fro^2
=\lV\bm{U}^{\natural}\bm{U}^{\natural *}\bm{H}_a\rV_\fro^2+\lV\bm{H}_a\bm{V}^{\natural}\bm{V}^{\natural *}\rV_\fro^2\\
=&~\lV\bm{U}^{\natural *}\bm{H}_a\rV_\fro^2+\lV\bm{H}_a\bm{V}^{\natural}\rV_\fro^2
\le \lV\bm{U}^{\natural}\rV_{2,\infty}^2+\lV\bm{V}^{\natural}\rV_{2,\infty}^2
\le 2c_s\mu r/n.
\end{align*}
 Moreover, we notice that $\bm{\Y}_a$ is self-adjoint
and it holds
\begin{equation*}
\begin{split}
\bm{\Y}_a^{2}(\bm{X})=&~\bm{\Y}_a\left(\left(p^{-1}\delta_a-1\right)\l \bm{X}, \P_T\bm{H}_a\r \P_T\bm{H}_a\right)\\
=&~\left(p^{-1}\delta_a-1\right)^2 \l \l \bm{X}, \P_T\bm{H}_a\r \P_T\bm{H}_a,\P_T\bm{H}_a\r\P_T\bm{H}_a\\
=&~\left(p^{-1}\delta_a-1\right)^2 \lV\P_T\bm{H}_a\rV_\fro^2\l \bm{X}, \P_T\bm{H}_a\r\P_T\bm{H}_a.
\end{split}
\end{equation*}
Therefore, we have
\begin{align*}
\lVert \mathbb{E}(\sum_a \bm{\Y}_a^{2})\lVert
=&\mathop{\sup}_{\lV \bm{X}\rV_\fro=1}\lVert \mathbb{E}\big[\sum_a (p^{-1}\delta_a-1)^2 \lV\P_T\bm{H}_a\rV_\fro^2\l \bm{X}, \P_T\bm{H}_a\r\P_T\bm{H}_a\big]\lVert_\fro\cr
\le&~p^{-1}\textstyle{\mathop{\sup}_{\lV \bm{X}\rV_\fro=1}\lV \left(\sum_a\lV\P_T\bm{H}_a\rV_\fro^2\l \bm{X}, \P_T\bm{H}_a\r\P_T\bm{H}_a\right)\rV_\fro}\cr
\le&~ p^{-1} \mathop{\max}_{a}\textstyle{\lV\P_T\bm{H}_a\rV_\fro^2\mathop{\sup}_{\lV \bm{X}\rV_\fro=1}\lV \left(\sum_a\l \bm{X}, \P_T\bm{H}_a\r\P_T\bm{H}_a\right)\rV_\fro}\cr
= &~p^{-1} \mathop{\max}_{a}\textstyle{\lV\P_T\bm{H}_a\rV_\fro^2\mathop{\sup}_{\lV \bm{X}\rV_\fro=1}\lV\P_T \P_{\Omega}\P_T(\bm{X})\rV_\fro}\cr
\le &~p^{-1}\mathop{\max}_{a}\textstyle{\lV\P_T\bm{H}_a\rV_\fro^2}
~\le 2p^{-1}c_s\mu r/n.
\end{align*}
where the last inequality follows from $\lV\P_T \P_{\Omega}\P_T(\bm{X})\rV_\fro^2=\l \P_T \P_{\Omega}\P_T(\bm{X}),\P_{\Omega}\P_T(\bm{X})\r\le \lV \P_{\Omega}\P_T(\bm{X})\rV_\fro\lV\P_T \P_{\Omega}\P_T(\bm{X})\rV_\fro$ and thus $\lV\P_T \P_{\Omega}\P_T(\bm{X})\rV_\fro\le \lV\P_{\Omega}\P_T(\bm{X})\rV_\fro\le \lV \bm{X}\rV_\fro$. Then, by the Bernstein's inequality \cite[Theorem~1.6]{tropp2012user}, we have
\begin{equation*}
\mathbb{P}\Big(\Big\| \sum_a \bm{\Y}_a\Big\|>t\Big)\le \left(n_1+n_2\right)\exp\left(\frac{-p n t^2 /2}{2c_s \mu r +2c_s\mu r t/3}\right).
\end{equation*}
For any $\varepsilon_0>0$, let $t=\varepsilon_0>0$, and $p\ge c_4 (\varepsilon_0^{-2}c_1 \mu r\log n)/n$ for some universal constant $c_4>12+12\varepsilon_0$. The above inequality then implies $\mathbb{P}(\| \sum_a \bm{\Y}_a\|>\varepsilon_0)\le 2n^{-2}$.
\end{proof}

\begin{lemma}\label{projectionerr1}
For any $\bm{U}\in\mathbb{C}^{n_1\times r}$ and  $\bm{V}\in\mathbb{C}^{n_2\times r}$, if $p\ge \left( \log n\right)/n$, then it holds
\begin{equation}\label{eq:projectionRIP2}
p^{-1}\lV\P_\Omega \left(\bm{UV}^*\right)\rV_\fro^2\le  \lV \bm{UV}^*\rV_\fro^2 +\sqrt{8 p^{-1}n\log n}\lV \bm{U}\rV_\fro\lV \bm{V} \rV_\fro \lV \bm{U}\rV_{2,\infty}\lV \bm{V}\rV_{2,\infty}
\end{equation}
 with probability at least $1-2n^{-2}$.
\end{lemma}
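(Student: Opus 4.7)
The plan is to expand in the orthonormal Hankel basis $\{\bm{H}_a\}$ and apply the scalar Bernstein inequality to the centred sum. Since $\|\P_\Omega(\bm{UV}^*)\|_\fro^2 = \sum_a \delta_a|\langle\bm{UV}^*,\bm{H}_a\rangle|^2$, I decompose
\begin{equation*}
p^{-1}\|\P_\Omega(\bm{UV}^*)\|_\fro^2 = \sum_a |\langle \bm{UV}^*, \bm{H}_a\rangle|^2 + p^{-1}\sum_a (\delta_a - p)|\langle\bm{UV}^*,\bm{H}_a\rangle|^2,
\end{equation*}
where the deterministic first sum is at most $\|\bm{UV}^*\|_\fro^2$ because $\{\bm{H}_a\}$ is an orthonormal basis of the Hankel subspace. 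Hence it suffices to show the zero-mean second sum is bounded above by $\sqrt{8pn\log n}\,\|\bm{U}\|_\fro\|\bm{V}\|_\fro\|\bm{U}\|_{2,\infty}\|\bm{V}\|_{2,\infty}$ with probability at least $1-2n^{-2}$, after which dividing by $p$ produces the stated inequality.

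The main obstacle is obtaining a sufficiently sharp uniform bound $M$ on $|\langle\bm{UV}^*,\bm{H}_a\rangle|^2$: the naive estimate $M \le n\|\bm{U}\|_{2,\infty}^2\|\bm{V}\|_{2,\infty}^2$ makes the sub-exponential term $Mt/3$ in Bernstein blow up once $p$ approaches the assumed lower bound $\log n/n$. Denoting the rows of $\bm{U}$ and $\bm{V}$ by $\bm{U}_{i,:}$ and $\bm{V}_{j,:}$, Cauchy--Schwarz along the $a$-th antidiagonal yields
\begin{equation*}
|\langle\bm{UV}^*,\bm{H}_a\rangle|^2 = \frac{1}{\varsigma_a}\Big|\sum_{i+j-1=a}\bm{U}_{i,:}\bm{V}_{j,:}^*\Big|^2 \le \sum_{i+j-1=a}\|\bm{U}_{i,:}\|_2^2\|\bm{V}_{j,:}\|_2^2,
\end{equation*}
which is bounded by both $\|\bm{U}\|_{2,\infty}^2\|\bm{V}\|_\fro^2$ and $\|\bm{V}\|_{2,\infty}^2\|\bm{U}\|_\fro^2$. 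Taking the geometric mean of these two estimates produces the sharpened bound $M \le \|\bm{U}\|_\fro\|\bm{V}\|_\fro\|\bm{U}\|_{2,\infty}\|\bm{V}\|_{2,\infty}$.

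For the variance, the crude pointwise bound $|\langle\bm{UV}^*,\bm{H}_a\rangle|^2 \le n\|\bm{U}\|_{2,\infty}^2\|\bm{V}\|_{2,\infty}^2$ combined with $\sum_a|\langle\bm{UV}^*,\bm{H}_a\rangle|^2 \le \|\bm{UV}^*\|_\fro^2 \le \|\bm{U}\|_\fro^2\|\bm{V}\|_\fro^2$ gives $\sigma^2 := \sum_a \mathrm{Var}(\delta_a|\langle\bm{UV}^*,\bm{H}_a\rangle|^2) \le pn\|\bm{U}\|_\fro^2\|\bm{V}\|_\fro^2\|\bm{U}\|_{2,\infty}^2\|\bm{V}\|_{2,\infty}^2$. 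Setting $t := \sqrt{8pn\log n}\,\|\bm{U}\|_\fro\|\bm{V}\|_\fro\|\bm{U}\|_{2,\infty}\|\bm{V}\|_{2,\infty}$, a direct arithmetic check gives $\sigma^2 \le t^2/(8\log n)$, and the hypothesis $p \ge \log n/n$ yields $Mt/3 \le t^2/(8\log n)$ as well, so $\sigma^2 + Mt/3 \le t^2/(4\log n)$. Bernstein's inequality (Theorem~1.6 of \cite{tropp2012user}) then bounds the deviation probability by $2\exp(-2\log n) = 2n^{-2}$, completing the argument.
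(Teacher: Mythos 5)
Your Bernstein computation is internally correct: the decomposition into the deterministic Bessel term $\sum_a|\langle\bm{UV}^*,\bm{H}_a\rangle|^2\le\lV\bm{UV}^*\rV_\fro^2$ plus a centred sum, the geometric-mean bound $M\le\lV\bm{U}\rV_\fro\lV\bm{V}\rV_\fro\lV\bm{U}\rV_{2,\infty}\lV\bm{V}\rV_{2,\infty}$ on the summands, the variance bound $\sigma^2\le pn\lV\bm{U}\rV_\fro^2\lV\bm{V}\rV_\fro^2\lV\bm{U}\rV_{2,\infty}^2\lV\bm{V}\rV_{2,\infty}^2$, and the arithmetic $\sigma^2+Mt/3\le t^2/(4\log n)$ under $pn\ge\log n$ (since $\sqrt{pn/(8\log n)}\ge 1/(2\sqrt2)>1/3$) all check out. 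In fact your route delivers the leading term $\lV\bm{UV}^*\rV_\fro^2$ exactly as stated, whereas the paper's own chain --- Cauchy--Schwarz along each antidiagonal followed by its supporting lemma on $p^{-1}\sum_a\sum_{i+j=a+1}\delta_a u_i v_j$ with $u_i=\lV\bm{U}_{(i,:)}\rV_2^2$ --- really produces the larger quantity $\lV\bm{U}\rV_\fro^2\lV\bm{V}\rV_\fro^2$ in that slot.

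There is, however, a substantive difference that amounts to a gap for the purpose this lemma serves. You apply scalar Bernstein to the centred sum $\sum_a(\delta_a-p)|\langle\bm{UV}^*,\bm{H}_a\rangle|^2$, whose summands depend on $(\bm{U},\bm{V})$; the event of probability $1-2n^{-2}$ you obtain therefore depends on the pair $(\bm{U},\bm{V})$, so your conclusion holds only for a fixed pair chosen independently of $\Omega$. The paper instead isolates all the randomness in the single operator norm $\bigl\lVert\sum_a(p^{-1}\delta_a-1)\bm{H}_a\bigr\rVert_2$, a quantity independent of $(\bm{U},\bm{V})$, bounds it once by $\sqrt{8p^{-1}n\log n}$ via matrix Bernstein, and then deduces the inequality deterministically for all $\bm{U},\bm{V}$ on that one event. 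This uniformity is not cosmetic: downstream the lemma is invoked in \cref{PZ-Z} and in the convergence analysis with $\bm{U}=\BDeltaL$ and $\bm{V}=\BDeltaR$ built from the iterates, which depend on $\Omega$, and the paper explicitly notes that these events must hold uniformly. To repair your argument you would need either an $\varepsilon$-net/union-bound layer over the relevant set of factor pairs, or the paper's strategy of pushing the concentration onto a $(\bm{U},\bm{V})$-independent random matrix at the cost of the dimensional prefactor $n_1+n_2$ in matrix Bernstein.
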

\begin{proof}
By the supporting \cref{uv} in \cref{supportlemmas}, we have
\begin{align*}
p^{-1}\lV\P_\Omega \left(\bm{UV}^*\right)\rV_\fro^2=&~p^{-1}\sum_a\bigg| \sum_{i+j=a+1}\frac{\delta_a}{\sqrt{\varsigma_a}}\l \bm{UV}^*,\bm{e}_i \bm{e}_j^\top\r \bigg|^2\\
 \le&~p^{-1}\sum_a\sum_{i+j=a+1}\delta_a\lV \bm{U}_{(i,:)}\rV_2^2 \lV \bm{V}_{(j,:)}\rV_2^2\\
 \le&~ \lV \bm{UV}^*\rV_\fro^2+\sqrt{8p^{-1}n\log n}\sqrt{\sum_i\lV\bm{U}_{(i,:)}\rV_2^4 }\sqrt{\sum_J\lV\bm{V}_{(j,:)}\rV_2^4}\\
 \le&~ \lV \bm{UV}^*\rV_\fro^2+\sqrt{p^{-1}8n\log n}\lV \bm{U}\rV_\fro \lV \bm{U}\rV_{2,\infty}\lV \bm{V} \rV_\fro\lV \bm{V}\rV_{2,\infty}.
\end{align*}
\end{proof}

\begin{lemma}\label{RIP2}
 For any matrix $\bm{A}\in T$,  under event \eqref{eq:projectionRIP1}, it holds
\begin{equation*}
p\left(1-\varepsilon_0\right)\lV \bm{A}\rV_\fro^2\le \lV\P_\Omega\G\G^* \bm{A}\rV_\fro^2 \le p\left(1+\varepsilon_0\right)\lV \bm{A}\rV_\fro^2
\end{equation*}
\end{lemma}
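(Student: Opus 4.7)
\textbf{Proof plan for \cref{RIP2}.} The plan is to derive the restricted isometry property directly from the operator-norm bound of \cref{RIP1} by evaluating the operator inequality on the specific element $\bm{A}\in T$ and identifying inner products with squared Frobenius norms.

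First, I would apply \cref{RIP1} to $\bm{A}$. Since $\bm{A}\in T$ gives $\P_T\bm{A}=\bm{A}$, the operator-norm bound $\lV p^{-1}\P_T\P_\Omega\G\G^*\P_T - \P_T\G\G^*\P_T\rV \le \varepsilon_0$ collapses to
\[
\lV p^{-1}\P_T\P_\Omega\G\G^*\bm{A} - \P_T\G\G^*\bm{A}\rV_\fro \le \varepsilon_0\lV\bm{A}\rV_\fro.
\]
Pairing with $\bm{A}$ via Cauchy--Schwarz, and using self-adjointness of $\P_T$ together with $\P_T\bm{A}=\bm{A}$ to absorb the outer $\P_T$ on both terms, I obtain
\[
\big| p^{-1}\langle \bm{A},\P_\Omega\G\G^*\bm{A}\rangle - \langle \bm{A},\G\G^*\bm{A}\rangle\big| \le \varepsilon_0 \lV\bm{A}\rV_\fro^2.
\]

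Next, I would identify each of the two inner products as a squared Frobenius norm. For the right-hand term, $\langle\bm{A},\G\G^*\bm{A}\rangle = \lV\G^*\bm{A}\rV_2^2 = \lV\G\G^*\bm{A}\rV_\fro^2$ because $\G^*\G=\I$; and since $\bm{A}$ is a (Hankel) element of $T$, $\G\G^*\bm{A}=\bm{A}$, so this equals $\lV\bm{A}\rV_\fro^2$. For the left-hand term, the key algebraic observation is that $\P_\Omega\G\G^*=\P_\Omega$ as operators, because $\G\G^*\bm{H}_a=\bm{H}_a$ implies $\langle \G\G^*\bm{Z},\bm{H}_a\rangle = \langle \bm{Z},\bm{H}_a\rangle$ for every $a$. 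Hence $\P_\Omega\G\G^*\bm{A}=\P_\Omega\bm{A}$, and since $\P_\Omega$ is an orthogonal projection,
\[
\langle\bm{A},\P_\Omega\G\G^*\bm{A}\rangle = \langle \bm{A},\P_\Omega\bm{A}\rangle = \lV\P_\Omega\bm{A}\rV_\fro^2 = \lV\P_\Omega\G\G^*\bm{A}\rV_\fro^2.
\]
Substituting these two identifications gives $\big|p^{-1}\lV\P_\Omega\G\G^*\bm{A}\rV_\fro^2 - \lV\bm{A}\rV_\fro^2\big| \le \varepsilon_0\lV\bm{A}\rV_\fro^2$, which rearranges to the claimed two-sided inequality.

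I do not anticipate a serious technical obstacle: this is essentially a sandwiching step once \cref{RIP1} is in hand. The only subtleties are the two identifications, namely $\G\G^*\bm{A}=\bm{A}$ (the Hankel structure of the relevant tangent element, inherited from $\G\bz^\natural$ being Hankel) and $\P_\Omega\G\G^*=\P_\Omega$ (because the range of $\P_\Omega$ already sits inside the Hankel subspace). These are what let the operator bound in \cref{RIP1} translate cleanly into a Frobenius-norm RIP for $\P_\Omega\G\G^*$ on the tangent space.
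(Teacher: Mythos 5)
Your proposal follows the same route as the paper's proof: both reduce \cref{RIP2} to the operator bound of \cref{RIP1} by pairing with $\bm{A}$, using $\P_T\bm{A}=\bm{A}$ and the self-adjointness of $\P_T$, and applying Cauchy--Schwarz. Your identification $\langle\bm{A},\P_\Omega\G\G^*\bm{A}\rangle=\lV\P_\Omega\G\G^*\bm{A}\rV_\fro^2$ via $\P_\Omega\G\G^*=\P_\Omega$ is exactly the identity the paper uses implicitly in its first display, and your intermediate two-sided inequality $\lvert p^{-1}\lV\P_\Omega\G\G^*\bm{A}\rV_\fro^2-\langle\bm{A},\G\G^*\bm{A}\rangle\rvert\le\varepsilon_0\lV\bm{A}\rV_\fro^2$ is correct and is in fact the form of the lemma that is actually invoked later (e.g.\ in bounding $T_1$).

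The one step you should not assert as written is $\G\G^*\bm{A}=\bm{A}$ for every $\bm{A}\in T$. The tangent space $T$ consists of matrices $\bm{U}^{\natural}\bm{C}^*+\bm{D}\bm{V}^{\natural*}$, and these are \emph{not} Hankel in general --- only the base point $\G\bm{z}^{\natural}$ is. So $\G\G^*\bm{A}$ is merely the orthogonal projection of $\bm{A}$ onto the Hankel subspace, and $\langle\bm{A},\G\G^*\bm{A}\rangle=\lV\G\G^*\bm{A}\rV_\fro^2\le\lV\bm{A}\rV_\fro^2$, possibly with strict inequality. This is harmless for the upper bound, where the inequality points the right way, but it is precisely the point on which the lower bound hinges: what your argument actually delivers is $\lV\P_\Omega\G\G^*\bm{A}\rV_\fro^2\ge p\lV\G\G^*\bm{A}\rV_\fro^2-p\varepsilon_0\lV\bm{A}\rV_\fro^2$, which is strictly weaker than $p(1-\varepsilon_0)\lV\bm{A}\rV_\fro^2$ whenever $\bm{A}$ has a nontrivial component orthogonal to the Hankel subspace (for $r\ge2$ such elements of $T$ exist). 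The paper's own proof hides the same issue behind the word ``Similarly,'' so you have faithfully reproduced its argument; but the justification you supply --- that tangent vectors inherit the Hankel structure --- is incorrect, and the honest statement of the lower bound should either be restricted to Hankel elements of $T$ or be phrased with $\lV\G\G^*\bm{A}\rV_\fro^2$ in place of $\lV\bm{A}\rV_\fro^2$.
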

\begin{proof}
For $\bm{A}\in T$, we have $\P_T \bm{A}=\bm{A}$. Thus, by \cref{RIP1} we have
\begin{equation*}
\begin{aligned}
\lV \P_{\Omega} \G\G^*\bm{A}\rV_\fro^2=\l \P_T\P_{\Omega}\G\G^* \P_T \bm{A},\bm{A}\r
\le \lV \P_T\P_{\Omega}\G\G^* \P_T \bm{A}\rV_\fro  \lV\bm{A}\rV_\fro \le p(1+\varepsilon_0) \lV\bm{A}\rV_\fro^2.
\end{aligned}
\end{equation*}
Similarly, $p\left(1-\varepsilon_0\right)\lV \bm{A}\rV_\fro^2\le \lV\P_\Omega\G\G^* \bm{A}\rV_\fro^2$. 
\end{proof}

\subsection{Proof of \texorpdfstring{\cref{thm:Initialization}}{Theorem~\protect\ref{thm:Initialization}} (guaranteed initialization)}\label{subsec:proofini}

\begin{proof}[Proof of \cref{thm:Initialization}]
We will finish the proof in three steps under event \cref{event:RIP3}. For simplicity, following the notation in \cref{algo:HSGD}, we denote
\begin{equation}\label{def-M}
 \bm{M}^{(0)}:=p^{-1}\G\left(\Pi_{\Omega}\bm{f}-\bm{s}^{(0)}\right),\quad
 \text{and}~\bm{M}^{(0)}_r~\text{ is the top-$r$ SVD of}~ \bm{M}^{(0)}.
\end{equation}
\textbf{Step 1.} We first bound $\lV \bm{M}^{(0)}-\G \bm{z}^{\natural}\rV_2$. By the triangle inequality we have
\begin{equation}\label{term-tri}
\lV \bm{M}^{(0)}-\G \bm{z}^{\natural}\rV_2\le \lV \bm{M}^{(0)}-p^{-1}\G \Pi_\Omega \bm{z}^{\natural}\rV_2 +\lV p^{-1}\G \Pi_\Omega \bm{z}^{\natural}-\G \bm{z}^{\natural}\rV_2.
\end{equation}
The definition in \eqref{def-M} yields
\begin{equation*}
\lV \bm{M}^{(0)}-{\frac{1}{p}}\G \Pi_\Omega \bm{z}^{\natural}\rV_2
=\lV {\frac{1}{p}}\G\left(\Pi_{\Omega}\bm{f}-\bm{s}^{(0)}\right)-{\frac{1}{p}}\G \Pi_\Omega \bm{z}^{\natural}\rV_2
=\lV {\frac{1}{p}}\G\left(\Pi_{\Omega}\bm{s}^{\natural}-\bm{s}^{(0)}\right)\rV_2.
\end{equation*}
Denote the support of $\Pi_{\Omega}\bm{s}^{\natural}$ and $\bm{s}^{(0)}$ by $\Omega_{s}^{\natural}$ and $\Omega_s^{(0)}$ respectively. Notice that $\Omega_{s}^{\natural}\subseteq \Omega$, $\Omega_{s}^{(0)}\subseteq \Omega $. By the definition of $\bm{s}^{(0)}$ in \cref{algo:HSGD}, we have
\begin{equation*}
\begin{aligned}
\left[\D^{-1}(\Pi_{\Omega}\bm{s}^{\natural}-\bm{s}^{(0)})\right]_i&=-(\D^{-1}\bz^{\natural})_i, \quad \text{for}~ i\in \Omega_{s}^{\natural}\cap \Omega_s^{(0)}~ \text{and}~ i\in  \Omega_s^{(0)} \backslash \Omega_{s}^{\natural}.
\end{aligned}
\end{equation*}
Recall $\Pi_{\Omega}\bm{f}=\Pi_{\Omega} \bm{z}^{\natural}+\Pi_{\Omega}\bm{s}^{\natural}$ and $\lV \Pi_{\Omega}\bm{s}^{\natural}\rV_0\le\alpha pn$. We see there are no more than $\alpha pn$ elements in $\D^{-1}\Pi_{\Omega}\bm{f}$ such that $|(\D^{-1}\bf)_i|> \lV \D^{-1}\Pi_{\Omega}\bm{z}^{\natural}\rV_{\infty}$. Also, by the definition of the operator $\Gamma_{\alpha p}$, we know $|(\D^{-1}\bf)_i|\le \lV \D^{-1}\Pi_{\Omega}\bm{z}^{\natural}\rV_{\infty}$ for all $i\in \Omega_{s}^{\natural}\setminus\Omega_s^{(0)}$. Using these facts, we obtain
$$\left[\D^{-1}(\Pi_{\Omega}\bm{s}^{\natural}-\bm{s}^{(0)})\right]_i=(\D^{-1}s^{\natural})_i=\left(\D^{-1}\bm{f}-\D^{-1}\bm{z}^{\natural}\right)_i\le |(\D^{-1}\bf)_i|+|(\D^{-1}\bz^{\natural})_i|\le 2\lV \D^{-1}\Pi_{\Omega}\bm{z}^{\natural}\rV_{\infty}$$
for$\ i\in \Omega_{s}^{\natural}\setminus\Omega_s^{(0)}$. Combing all the pieces for $\D^{-1}\left(\Pi_{\Omega}\bm{s}^{\natural}-\bm{s}^{(0)}\right)$, we obtain
\begin{equation}\label{s-s}
 \big\|\D^{-1}(\Pi_{\Omega}\bm{s}^{\natural}-\bm{s}^{(0)})\big\|_{\infty}\le 2\lV \D^{-1}\Pi_{\Omega}\bm{z}^{\natural}\rV_{\infty}.
 \end{equation}
Since $\G=\H\D^{-1}$, by the definition of $\H$ and $\D$ we know $\lV\G \left(\Pi_{\Omega}\bm{s}^{\natural}-\bm{s}^{(0)}\right)\rV_{\infty}\le 2\lV \G \Pi_{\Omega}\bm{z}^{\natural}\rV_{\infty}$. It then yields
\begin{equation} \label{term1}
\big\|\G \big(\Pi_{\Omega}\bm{s}^{\natural}-\bm{s}^{(0)}\big)\big\|_{2}\le 2\alpha p n\big\| \G\big(\Pi_{\Omega}\bm{s}^{\natural}-\bm{s}^{(0)}\big)\big\|_{\infty}
                                           \le  4\alpha p n\big\| \G \bm{z}^{\natural}\big\|_{\infty}
                                           \le  4\alpha p c_s\mu r \sigma_{1}^{\natural},
\end{equation}
where the first inequality follows from $\lV\Pi_{\Omega}\bm{s}^{\natural}-\bm{s}^{(0)}\rV_{0}\le 2\alpha pn$ and \cref{s-matrix}. And the last inequality follows from
$\big\| \G \bm{z}^{\natural}\big\|_{\infty}\le \big\| \bm{L}^{\natural} \big\|_{2,\infty} \big\|\G \bm{z}^{\natural}\big\|_{2} \big\| \bm{R}^{\natural}\big\|_{2,\infty}\le \mu \frac{c_sr}{n}\lV \G \bm{z}^{\natural}\rV_{2}$. Finally, under event \eqref{event:RIP3}, we combine \eqref{term-tri} and \eqref{term1} to obtain
\begin{equation}\label{term3}
\lV \bm{M}^{(0)}-\G \bm{z}^{\natural}\rV_2
\le 4\alpha c_s\mu r \sigma_{1} ^{\natural} +c_0\sqrt{(p^{-1}\mu c_s r\log n)/n}\sigma_{1} ^{\natural}
\le 4\alpha c_s\mu r\kappa \sigma_{r} ^{\natural}+\frac{c_0\varepsilon_0\sigma_r^{\natural}}{\sqrt{\kappa r}},
\end{equation}
provided $p\ge (\varepsilon_0^{-2}\kappa^3 c_s\mu r^2\log n)/n$

\textbf{Step 2.} The second step is to bound $\tilde{d}_0$. First we have
\begin{align}\label{ineq:M0rGz}
\lV \bm{M}^{(0)}_r-\G \bm{z}^{\natural}\rV_2 &\le \lV \bm{M}^{(0)}_r-\bm{M}^{(0)}\rV_2 +\lV \bm{M}^{(0)}-\G \bm{z}^{\natural}\rV_2 \cr
 &\le 2\lV \bm{M}^{(0)}-\G \bm{z}^{\natural}\rV_2
 ~\le 8\alpha c_s\mu r\kappa \sigma_{r} ^{\natural}+2c_0\varepsilon_0\sigma_r^{\natural}(\kappa r)^{-1/2},
 \end{align}
where the second inequality follows from the definition of $\bm{M}^{(0)}_r$ and Eckart-Young-Mirsky theorem, the third inequality follows from \eqref{term3}. Hence, for any $\varepsilon_0\in \left(0,\frac{\sqrt{\kappa r}}{8c_0}\right)$, as long as $\alpha\le \frac{1}{32 c_s\mu r\kappa}$ one has
\begin{equation}\label{U0V0-M}
\lV \bm{M}^{(0)}_r-\G \bm{z}^{\natural}\rV_2\le {\frac{1}{2}}\sigma_r^{\natural}.
\end{equation}
Then, by the definition of $\tilde{d}_0$ and the inequality in \cite[Lemma~5.14]{tu2016low}, we have
\begin{equation}\label{dbound}
\tilde{d}_0^2\le {\frac{2}{(\sqrt{2}-1)\sigma_r^{\natural}}\lV \bm{M}^{(0)}_r-\G \bm{z}^{\natural}\rV_\fro^2}
~\le {\frac{10r}{\sigma_r^{\natural}}\lV \bm{M}^{(0)}_r-\G \bm{z}^{\natural}\rV_2^2},
\end{equation}
which together with \cref{ineq:M0rGz} reveals
$$\tilde{d}_0\le 26\alpha c_s\kappa \mu r\sqrt{r} \sqrt{\sigma_{r} ^{\natural}}+7c_0\varepsilon_0\sqrt{\sigma_{r} ^{\natural}/\kappa}.$$

\textbf{Step 3.} Now we present the final step, which is to show $d_0 \le \tilde{d}_0$. By \eqref{U0V0-M} and Weyl's theorem \cite{bhatia2013matrix}, we obtain
\begin{equation}\label{ineq:L0sigma}
\sqrt{\sigma_{1} ^{\natural}/2}\le \big\| \tilde{\bm{L}}^{(0)}\big\|_2\le \sqrt{3\sigma_{1} ^{\natural}/2},\  \sqrt{\sigma_{1} ^{\natural}/2}\le \big\| \tilde{\bm{R}}^{(0)}\big\|_2\le \sqrt{3\sigma_{1} ^{\natural}/2}.
\end{equation}
Together with the $\mu$-incoherence of $\G \bm{z}^{\natural}$, it yields $\left(\bm{L}^{\natural}, \bm{R}^{\natural}\right) \in \L \times \R$. Let $\tilde{\bm{Q}}$ be the best align matrix between     $\left(\tilde{\bm{L}}^{(0)},\tilde{\bm{R}}^{(0)}\right)$ and $\left(\bm{L}^{\natural},\bm{R}^{\natural} \right)$. We then have
\begin{align}\label{non-expansion}
 \mathrm{dist}^2\left(\bm{L}^{(0)},\bm{R}^{(0)};\bm{L}^{\natural},\bm{R}^{\natural}\right)
\le&~\lV \Pi_{\L}\left(\tilde{\bm{L}}^{(0)}\right)-\bm{L}^{\natural} \tilde{\bm{Q}} \rV_\fro^2+\lV \Pi_{\L}\left(\tilde{\bm{R}}^{(0)}\right)-\bm{Z_{V} }^{\natural} \tilde{\bm{Q}} \rV_\fro^2 \cr
\le&~ \lV \tilde{\bm{L}}^{(0)}-\bm{L}^{\natural} \tilde{\bm{Q}} \rV_\fro^2+\lV \tilde{\bm{R}}^{(0)}-\bm{R}^{\natural} \tilde{\bm{Q}} \rV_\fro^2\cr
 =&~\mathrm{dist}^2\left(\tilde{\bm{L}}^{(0)},\tilde{\bm{R}}^{(0)};\bm{L}^{\natural},\bm{R}^{\natural} \right),
\end{align}
where the second inequality comes from the non-expansion property of projections onto the convex sets $\L$ and $\R$. Finally, \eqref{non-expansion} implies $d_0 \le \tilde{d}_0$.
\end{proof}

\subsection{Proof of \texorpdfstring{\cref{thm:convergence}}{Theorem~\protect\ref{thm:convergence}} (local convergence)}\label{subsec:prooflocalcon}
Firstly, we present some key lemmas that is essential for the proof of local convergence. Some parts of the proofs in this section follow similar techniques introduced in \cite{yi2016fast,cai2018spectral}. We begin with the following definition.
\begin{definition}\label{def1}
Let $\left(\bm{L},\bm{R}\right)$ be arbitrary matrices in the the space $\left(\L\times \R\right) \cap \B(\sqrt{\sigma_1^{\natural}})$.\footnote{$\B(\cdot)$ is the ball with the centre $\left(\bm{L}^\natural,\bm{R}^\natural\right)$ and a radius defined by the distance in \eqref{def-error}.} Define $\bm{s}$ as
\begin{equation*}
\bm{s}:=\Gamma_{\gamma\alpha p}\left(\Pi_{\Omega}\left(\bm{f}-\G^*\left(\bm{L} \bm{R}^*\right)\right)\right).
\end{equation*}
Define $\left(\bm{L}_{\E},\bm{R}_{\E}\right)$ as the solution set which satisfies
\begin{equation*}
\left(\bm{L}_{\E},\bm{R}_{\E}\right)\in \argmin_{(\widehat{\bm{L}},\widehat{\bm{R}})\in\E\left(\bm{L}^{\natural},\bm{R}^{\natural}\right)} \big\| \widehat{\bm{L}}-\bm{L} \big\|_\fro^2 +\big\| \widehat{\bm{R}}-\bm{R} \big\|_\fro^2.
\end{equation*}
Define $\BDeltaL$, $\BDeltaR$ and $\Delta$ as
\begin{equation*}
\BDeltaL:=\bm{L}-\bm{L}_{\E},\quad \BDeltaR:=\bm{R}-\bm{R}_{\E},\quad \textnormal{and } \Delta:=\big\| \BDeltaL \big\|_\fro^2+ \big\| \BDeltaR \big\|_\fro^2.
\end{equation*}
\end{definition}
Actually, in \cref{def1}, $\left(\bm{L}_{\E},\bm{R}_{\E}\right)$ is aligned with some $\bm{Q}\in \mathbb{Q}_r$ to be the solution that match $\left(\bm{L},\bm{R}\right)$ best. So the error between $\left(\bm{L},\bm{R}\right)$ and the solution set is then defined by $\BDeltaL$, $\BDeltaR$ and $\Delta$. Let the iteration sequence $\left\{\bm{L}^{(k)},\bm{R}^{(k)}\right\}_{k\ge 1}$ be generated by the gradient descent strategy described in \eqref{eq:update_Hankel}. 
Denote $\nabla \ell^{(k)}:=\nabla\ell\left(\bm{L}^{(k)},\bm{R}^{(k)};\bm{s}^{(k+1)}\right)$. In fact, if there is a proper lower bound for the term
$\mathrm{Re}(\langle \nabla_{\bm{L}} \ell^{(k)},\BDeltaL^{(k)}\rangle+\langle  \nabla_{\bm{R}} \ell^{(k)},\BDeltaR^{(k)}\rangle)$
and also a proper upper bound for the term $\|\nabla_{\bm{L}} \ell^{(k)} \|_\fro^2+\| \nabla_{\bm{R}} \ell^{(k)} \|_\fro^2$, we can then show the local convergence of HSGD (see \eqref{error:dk} for details).

In the rest of this section, we give bounds for the above terms in \cref{T1low,T2up}, and the proof of \cref{thm:convergence} follows.
Some crucial lemmas have to be shown first for the bounds.


\begin{lemma}\label{PZ-Z}
Let $\BDeltaL,\BDeltaR, \bm{L}, \bm{R},\bm{L}_{\E},\bm{R}_{\E}$ and $\Delta$ be defined in \cref{def1}.  Then, if provided $p\ge (\varepsilon_0^{-2}c_4 c_s^2\mu^2 r^2 \log n)/n$, $\alpha\le \frac{1}{32 c_s\mu r\kappa}$, $\varepsilon_0\in \left(0,\frac{\sqrt{\kappa r}}{8c_0}\right)$, under events \cref{event:RIP3}, \eqref{eq:projectionRIP1} and \eqref{eq:projectionRIP2} it holds
\begin{align}
 \lV\P_{\Omega}\left(\BDeltaL \BDeltaR^*\right)\rV_\fro^2 \le&~ {\frac{p}{4}}\Delta^2+18\varepsilon_0 p\sigma_1^{\natural}\Delta, \cr
 \lV\P_{\Omega}\G\G^*\left(\bm{L} \bm{R}^*-\bm{L}_{\E}\bm{R}_{\E}^*\right)\rV_\fro^2 \le&~ 4\left(1+10\varepsilon_0\right)p\sigma_1^{\natural}\Delta+{\frac{p}{2}}\Delta^2.
\end{align}
\end{lemma}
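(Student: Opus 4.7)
The plan is to establish the two inequalities in order, with the first serving as a building block for the second. For the first inequality, I would directly invoke \cref{projectionerr1} with $\bm{U}=\BDeltaL$ and $\bm{V}=\BDeltaR$, which gives
\begin{equation*}
p^{-1}\lV\P_\Omega(\BDeltaL\BDeltaR^*)\rV_\fro^2 \le \lV\BDeltaL\BDeltaR^*\rV_\fro^2+\sqrt{8p^{-1}n\log n}\,\lV\BDeltaL\rV_\fro\lV\BDeltaR\rV_\fro\lV\BDeltaL\rV_{2,\infty}\lV\BDeltaR\rV_{2,\infty}.
\end{equation*}
The leading term is $\le \Delta^2/4$ by submultiplicativity of the Frobenius norm and AM--GM. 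For the row norms, $\bm{L}\in\L$ together with the spectral estimate $\lV\tilde{\bm{L}}^{(0)}\rV_2^2\le 3\sigma_1^\natural/2$ from \eqref{ineq:L0sigma} yields $\lV\bm{L}\rV_{2,\infty}^2\le 3\mu c_s r\sigma_1^\natural/n$, while \cref{amp:incoherence} gives $\lV\bm{L}_\E\rV_{2,\infty}^2\le \mu c_s r\sigma_1^\natural/n$; a triangle inequality produces $\lV\BDeltaL\rV_{2,\infty}\lV\BDeltaR\rV_{2,\infty}\lesssim \mu c_s r\sigma_1^\natural/n$ (analogously on the other factor). Since also $\lV\BDeltaL\rV_\fro\lV\BDeltaR\rV_\fro\le \Delta/2$, the sampling hypothesis $p\ge\varepsilon_0^{-2}c_4c_s^2\mu^2 r^2\log n/n$, which rewrites as $\sqrt{p\log n/n}\le \varepsilon_0 p/(c_s\mu r\sqrt{c_4})$, makes the cross term at most $18\varepsilon_0 p\sigma_1^\natural\Delta$ after a sufficiently large choice of the absolute constant $c_4$. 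Multiplying through by $p$ then delivers the first bound.

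For the second inequality, I will split
\begin{equation*}
\bm{L}\bm{R}^*-\bm{L}_\E\bm{R}_\E^* = A+B,\qquad A:=\BDeltaL\bm{R}_\E^*+\bm{L}_\E\BDeltaR^*,\quad B:=\BDeltaL\BDeltaR^*.
\end{equation*}
Because $\bm{L}_\E=\bm{L}^\natural\bm{Q}$ and $\bm{R}_\E=\bm{R}^\natural\bm{Q}$ for some $\bm{Q}\in\mathbb{Q}_r$, each summand of $A$ has the form $\bm{U}^\natural\bm{C}^*$ or $\bm{D}\bm{V}^{\natural *}$, so $A\in T$. The identity $\P_\Omega X=\P_\Omega\G\G^* X$ (which holds for every matrix $X$ because each basis element $\bm{H}_a$ is itself Hankel) then allows me to apply \cref{RIP2} and obtain $\lV\P_\Omega A\rV_\fro^2=\lV\P_\Omega\G\G^* A\rV_\fro^2\le p(1+\varepsilon_0)\lV A\rV_\fro^2$. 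Two triangle inequalities together with $\lV\bm{L}^\natural\rV_2^2=\lV\bm{R}^\natural\rV_2^2=\sigma_1^\natural$ give $\lV A\rV_\fro^2\le 2\sigma_1^\natural\Delta$, while $\lV\P_\Omega B\rV_\fro^2$ is handled by the first inequality. Combining via $(x+y)^2\le 2x^2+2y^2$,
\begin{equation*}
\lV\P_\Omega(A+B)\rV_\fro^2 \le 4p(1+\varepsilon_0)\sigma_1^\natural\Delta+\tfrac{p}{2}\Delta^2+36 p\varepsilon_0\sigma_1^\natural\Delta = 4(1+10\varepsilon_0)p\sigma_1^\natural\Delta+\tfrac{p}{2}\Delta^2,
\end{equation*}
which is exactly the second claim after noting $\P_\Omega=\P_\Omega\G\G^*$ on the left-hand side.

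The main obstacle is the $\ell_{2,\infty}$ bookkeeping: the iterate $\bm{L}$ only inherits the slightly loose bound $3\mu c_s r\sigma_1^\natural/n$ from the projection set $\L$, whereas $\bm{L}_\E$ enjoys the tight $\mu$-incoherence bound $\mu c_s r\sigma_1^\natural/n$; all universal constants must be tracked until the final absorption step, where enlarging the sampling-rate constant $c_4$ produces the exact coefficients $18$ in the first bound and $4(1+10\varepsilon_0)$ in the second. The decomposition $\bm{L}\bm{R}^*-\bm{L}_\E\bm{R}_\E^*=A+B$ is standard in factorized analyses, but it is the pairing with the Hankel-projection identity $\P_\Omega=\P_\Omega\G\G^*$ that makes the tangent-space RIP (\cref{RIP2}) usable on $A$, even though $A$ itself need not be Hankel.
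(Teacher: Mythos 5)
Your proposal is correct and follows essentially the same route as the paper: the first bound is \cref{projectionerr1} applied to $(\BDeltaL,\BDeltaR)$ with the cross term absorbed via the $\ell_{2,\infty}$ incoherence bounds and the sampling rate, and the second bound uses the same decomposition $\bm{L}\bm{R}^*-\bm{L}_\E\bm{R}_\E^*=(\BDeltaL\bm{R}_\E^*+\bm{L}_\E\BDeltaR^*)+\BDeltaL\BDeltaR^*$ with \cref{RIP2} on the tangent-space part and the first bound on the quadratic part. Your explicit justifications that $\BDeltaL\bm{R}_\E^*+\bm{L}_\E\BDeltaR^*\in T$ and that $\P_\Omega=\P_\Omega\G\G^*$ are steps the paper leaves implicit, but the argument and the resulting constants are the same.
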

\begin{proof}
By \cref{projectionerr1}, we have
\begin{equation*}
\begin{aligned}
&~\lV\P_{\Omega}\left( \BDeltaL \BDeltaR^*\right)\rV_\fro^2 \cr
\le&~ p\lV \BDeltaL \rV_\fro^2\lV \BDeltaR \rV_\fro^2+\sqrt{8p n\log n}\lV \BDeltaL\rV_\fro\lV \BDeltaR \rV_\fro \lV \BDeltaL\rV_{2,\infty}\lV \BDeltaR\rV_{2,\infty}\cr
\le &~p\lV \BDeltaL \rV_\fro^2\lV \BDeltaR \rV_\fro^2+36\varepsilon_0 p\sigma_1^{\natural}\lV \BDeltaL\rV_\fro\lV \BDeltaR \rV_\fro
\le  {\frac{p}{4}}\Delta^2+18\varepsilon_0 p\sigma_1^{\natural}\Delta,
\end{aligned}
\end{equation*}
where the second inequality follows from $p\ge (\varepsilon_0^{-2}c_s^2\mu^2 r^2 \log n)/n$ and (by \cref{ineq:L0sigma})
\begin{equation}\label{HU-2norm}
\begin{aligned}
\lV \BDeltaL \rV_{2,\infty}\le \lV \bm{L} \rV_{2,\infty}+\lV \bm{L}_{\E} \rV_{2,\infty}\le 2\sqrt{2\mu r c_s n^{-1}} \lV \bm{\tilde{L}^{(0)}}\rV_2 \le 2\sqrt{3\mu r c_s\sigma^{\natural}_1 n^{-1}}.
\end{aligned}
\end{equation}
Then, we have
\begin{align}\label{ineq:projLR}
&~\lV\P_{\Omega}\G\G^*\left( \bm{L} \bm{R}^*-\bm{L}_{\E}\bm{R}_{\E}^*\right)\rV_\fro^2 \cr
\le&~ 2\lV\P_{\Omega}\G\G^*\left( \bm{L}_{\E} \BDeltaR^*+\BDeltaL \bm{R}_{\E}^*\right)\rV_\fro^2+2\lV\P_{\Omega}\G\G^*\left(\BDeltaL \BDeltaR^*\right)\rV_\fro^2\cr
\le &~ 2\left(1+\varepsilon_0\right)p\lV \bm{L}_{\E} \BDeltaR^*+\BDeltaL \bm{R}_{\E}^*\rV_\fro^2+{\frac{p}{2}}\Delta^2+36\varepsilon_0 p\sigma_1^{\natural}\Delta\cr
\le&~ 4\left(1+\varepsilon_0\right)p\left(\lV \bm{L}_{\E} \BDeltaR^* \rV_\fro^2+\lV \BDeltaL \bm{R}_{\E}^*\rV_\fro^2\right)
+{\frac{p}{2}}\Delta^2+36\varepsilon_0 p\sigma_1^{\natural}\Delta\cr
\le &~ 4\left(1+10\varepsilon_0\right)p\sigma_1^{\natural}\Delta+{\frac{p}{2}}\Delta^2,
\end{align}
where the second inequality follows from \cref{RIP2}.
\end{proof}

\begin{lemma}\label{PZUZV-M}
Let $\bm{L}, \bm{R}, \bm{L}_{\E},\bm{R}_{\E}$ and $\Delta$ be defined in \cref{def1}. Then, under same condition to \cref{PZ-Z}, if $\lv\Omega \lv \le \alpha n$, we have
$$\lV \P_{\Omega}\left(\bm{L} \bm{R}^*-\bm{L}_{\E} \bm{R}_{\E}^*\right) \rV_\fro^2\le 18\alpha c_s\mu r\sigma_1^{\natural}\Delta.$$
\end{lemma}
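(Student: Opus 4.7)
\textbf{Proof plan for \cref{PZUZV-M}.} The core observation is that although $\bm{L}\bm{R}^*-\bm{L}_\E\bm{R}_\E^*$ is not itself Hankel, the projector $\P_\Omega$ only ``reads'' its antidiagonal averages, and there are at most $|\Omega|\le \alpha n$ such averages. I would therefore begin from the basis expansion $\P_\Omega X=\sum_{a\in\Omega}\langle X,\bm{H}_a\rangle \bm{H}_a$ and use orthonormality of $\{\bm{H}_a\}$ to write $\|\P_\Omega X\|_\fro^2=\sum_{a\in\Omega}|\langle X,\bm{H}_a\rangle|^2$. Writing $\langle X,\bm{H}_a\rangle=\varsigma_a^{-1/2}\sum_{i+j=a+1} X_{i,j}$ and applying Cauchy--Schwarz in the inner sum cancels the factor $\varsigma_a$ and yields the entrywise bound
\begin{equation*}
\|\P_\Omega X\|_\fro^2 \;\le\; \sum_{a\in\Omega}\sum_{i+j=a+1}|X_{i,j}|^2.
\end{equation*}

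Next I would split $\bm{L}\bm{R}^*-\bm{L}_\E\bm{R}_\E^* = \BDeltaL \bm{R}^* + \bm{L}_\E \BDeltaR^*$ (or, equivalently, the three-term expansion $\bm{L}_\E\BDeltaR^*+\BDeltaL\bm{R}_\E^*+\BDeltaL\BDeltaR^*$), and for each product $\bm{A}\bm{B}^*$ use $|(\bm{A}\bm{B}^*)_{i,j}|^2\le \|\bm{A}_{(i,:)}\|_2^2\|\bm{B}_{(j,:)}\|_2^2$ together with a counting step: for each $j$, the number of $a\in\Omega$ with $a+1-j\in[1,n_1]$ is at most $|\Omega|$, so
\begin{equation*}
\sum_{a\in\Omega}\sum_{i+j=a+1}\|\bm{A}_{(i,:)}\|_2^2\|\bm{B}_{(j,:)}\|_2^2 \;\le\; |\Omega|\,\|\bm{A}\|_{2,\infty}^2\,\|\bm{B}\|_\fro^2,
\end{equation*}
and similarly with the roles of $\bm{A},\bm{B}$ swapped.

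At this point all that remains is to plug in $\ell_{2,\infty}$ bounds. The incoherence \cref{amp:incoherence} applied to $\bm{L}^\natural=\bm{U}^\natural(\bm{\Sigma}^\natural)^{1/2}$, combined with rotational invariance under $\bm{Q}\in\mathbb{Q}_r$, gives $\|\bm{L}_\E\|_{2,\infty}^2\le \mu c_s r\sigma_1^\natural/n$ and likewise for $\bm{R}_\E$. The projection step in HSGD together with \eqref{ineq:L0sigma} yields $\|\bm{L}\|_{2,\infty}^2, \|\bm{R}\|_{2,\infty}^2 \le 3\mu c_s r\sigma_1^\natural/n$, and the triangle inequality reproduces the bound \eqref{HU-2norm} for $\BDeltaL,\BDeltaR$. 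Combining these with $|\Omega|\le \alpha n$ produces, for each summand, a bound of the form $C\,\alpha c_s\mu r\sigma_1^\natural\,\|\BDeltaL\|_\fro^2$ or $C\,\alpha c_s\mu r\sigma_1^\natural\,\|\BDeltaR\|_\fro^2$; a final $(a+b)^2\le 2(a^2+b^2)$ (or $(a+b+c)^2\le 3(a^2+b^2+c^2)$) aggregates these into the desired $18\alpha c_s\mu r\sigma_1^\natural\Delta$, with room to spare in the absolute constant.

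The only genuinely delicate step is the entrywise-to-antidiagonal reduction in the first paragraph, since $\bm{L}\bm{R}^*-\bm{L}_\E\bm{R}_\E^*$ need not be Hankel and one cannot simply invoke $\P_\Omega\G\bm{v}=\G\Pi_\Omega\bm{v}$; everything else is a careful bookkeeping of incoherence factors against $|\Omega|\le \alpha n$. I do not expect any probabilistic machinery to enter here, because the statement is deterministic given the sparsity of $\Omega$ and the membership $(\bm{L},\bm{R})\in(\L\times\R)\cap \B(\sqrt{\sigma_1^\natural})$.
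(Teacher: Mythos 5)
Your proposal is correct and follows essentially the same route as the paper's proof: reduce $\|\P_\Omega X\|_\fro^2$ to an entrywise sum over the antidiagonals indexed by $\Omega$ via Cauchy--Schwarz (a step the paper uses implicitly when passing to the set $\Phi=\{(i,j):i+j-1\in\Omega\}$), decompose $\bm{L}\bm{R}^*-\bm{L}_\E\bm{R}_\E^*$ into products each carrying one $\BDeltaL$ or $\BDeltaR$ factor, bound entries by $\ell_{2,\infty}$ norms from incoherence and the projection sets, and count at most $\alpha n$ antidiagonals per row and column. The constants you obtain are in fact slightly sharper than the stated $18$, consistent with your remark that there is room to spare.
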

\begin{proof}
By \cref{def1} we have
\begin{equation*}
\begin{aligned}
&~\lv\left[\bm{L} \bm{R}^*-\bm{L}_{\E} \bm{R}_{\E}^*\right]_{i,j}\lv
=\lv\left[\bm{L}_{\E} \left(\bm{R}^*-\bm{R}_{\E}^*\right)+\left(\bm{L}-\bm{L}_{\E}\right) \bm{R}_{\E}^*+\left(\bm{L}-\bm{L}_{\E}\right) \left(\bm{R}^*-\bm{R}_{\E}^*\right)\right]_{i,j}\lv \\
\le&~ \lV \bm{L}_{\E}\rV_{2,\infty} \lV \BDeltaR\left(j,:\right)\rV_2
+\lV \bm{R}_{\E}\rV_{2,\infty} \lV \BDeltaL\left(i,:\right)\rV_2\\
&~+{\frac{1}{2}}\left(\lV \BDeltaL\rV_{2,\infty} \lV  \BDeltaR\left(j,:\right)\rV_2+\lV \BDeltaR\rV_{2,\infty} \lV  \BDeltaL\left(i,:\right)\rV_2 \right)\\
\le&~ \left(1+\sqrt{3}\right)\sqrt{c_s\mu r \sigma_1^{\natural}n^{-1}}\lV \BDeltaR\left(j,:\right)\rV_2+\left(1+\sqrt{3}\right)\sqrt{c_s\mu r \sigma_1^{\natural}n^{-1}}\lV \BDeltaL\left(i,:\right)\rV_2,
\end{aligned}
\end{equation*}
where the last inequality follows from \eqref{HU-2norm}. Denote $\Phi:=\{\left(i,j\right):\left(i+j-1\right)\in\Omega\}$. The Hankel structure yields $\mathrm{card}\left(\Phi\left(i,:\right)\right)\le\alpha n$, $\mathrm{card}\left( \Phi\left(:,j\right)\right)\le\alpha n$. Hence, we use the inequality $(a+b)^2\le 2a^2+2b^2$ and the fact $\sum_{\left(i,j\right)\in \Phi}\lV \BDeltaR\left(j,:\right)\rV_2=\sum_{i\in \Phi\left(:,j\right)}\sum_j\lV \BDeltaR\left(j,:\right)\rV_2$ to obtain
\begin{align*}
&~\lV \P_{\Omega}\left(\bm{L} \bm{R}^*-\bm{L}_{\E} \bm{R}_{\E}^*\right) \rV_\fro^2 \le \sum_{\left(i,j\right)\in \Phi}\lv [\bm{L} \bm{R}^*-\bm{L}_{\E} \bm{R}_{\E}^*]^2_{i,j}\lv\\
    \le&~\sum_{\left(i,j\right)\in \Phi}18c_s\mu r \sigma_1^{\natural}n^{-1}\left(\lV \BDeltaR\left(j,:\right)\rV_2^2+\lV \BDeltaL\left(i,:\right)\rV_2^2\right)~
    \le 18\alpha c_s\mu r \sigma_1^{\natural}\Delta.
\end{align*}
\end{proof}

With \cref{RIP2,PZ-Z,PZUZV-M} in hand, we can give the bounds in the following \cref{T1low,T2up} that lead to the local descent property of $\ell$ .
\begin{lemma}\label{T1low} Let $\BDeltaL,\BDeltaR, \bm{L}, \bm{R},\bm{L}_{\E},\bm{R}_{\E}$, and $\Delta$ be defined in \cref{def1}. Set $\lambda=\frac{1}{16}$. Let $\gamma\in \left[1+\frac{1}{b_0},2\right]$ with any given $1\le b_0<\infty$. If provided $p\ge \left(\varepsilon_0^{-2}c_4 c_s^2\mu^2 r^2 \log n\right)/n$, $\alpha\le \frac{1}{32 c_s\mu r\kappa}$, then for any $\varepsilon_0\in(0,\frac{1}{10})\cap \left(0,\frac{\sqrt{\kappa r}}{8c_0}\right)$, under events \cref{event:RIP3}, \eqref{eq:projectionRIP1} and \eqref{eq:projectionRIP2} we have
\begin{equation*}
\begin{aligned}
&~\mathrm{Re}\left(\l \nabla_{\bm{L}} \ell,\bm{L}-\bm{L}_{\E}\r+\l  \nabla_{\bm{R}} \ell,\bm{R}-\bm{R}_{\E}\r\right) \\
\ge&~{\frac{7}{8}}\lV \bm{L} \bm{R}^*
-\bm{L}_{\E} \bm{R}_{\E}^*\rV_\fro^2+(\frac{1}{64}\sigma_r^\natural-\nu\sigma_1^{\natural})\Delta-\left(2+ \frac{\sqrt{2b_0}}{4}+\frac{b_0}{4\beta}\right)\Delta^2\\
 &~-\left(\sqrt{3b_0}+6\sqrt{c_s\alpha\mu r}+{\frac{\sqrt{2}}{16}}\right)\sqrt{\sigma_1^{\natural}\Delta^3}
 +{\frac{1}{64}}\lV\bm{L}^*\bm{L}-\bm{R}^*\bm{R}\rV_\fro^2,
\end{aligned}
\end{equation*}
where $\nu:=\left(54 +9\beta\right)\alpha\mu r+4b_0\beta^{-1}+\varepsilon_1$ with $\varepsilon_1:=\sqrt{\varepsilon_0}\left(12\sqrt{b_0}+44\sqrt{c_s\alpha \mu r}\right)+41\varepsilon_0$ and  arbitrary $\beta>0$.
\end{lemma}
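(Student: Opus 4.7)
The plan is to compute the inner product $\mathrm{Re}(\langle \nabla_{\BL}\ell, \BDeltaL\rangle + \langle \nabla_{\BR}\ell, \BDeltaR\rangle)$ explicitly and then bound the resulting pieces with the technical lemmas of \cref{sec:tech_lemma}. Writing $\bm{M} := \BL\BR^*$, $\bm{M}^{\natural} := \BL_{\E}\BR_{\E}^* = \G\bz^{\natural}$, and using $\bf = \bz^{\natural}+\bs^{\natural}$ together with $\G\G^*\bm{M}^{\natural} = \bm{M}^{\natural}$, the gradient takes the form $\nabla_{\BL}\ell = \bm{G}\BR + \lambda \BL(\BL^*\BL - \BR^*\BR)$ and $\nabla_{\BR}\ell = \bm{G}^*\BL + \lambda \BR(\BR^*\BR - \BL^*\BL)$, where $\bm{G} = p^{-1}\P_{\Omega}\G\G^*(\bm{M}-\bm{M}^{\natural}) + p^{-1}\G\Pi_{\Omega}(\bs-\bs^{\natural}) + (\I-\G\G^*)(\bm{M}-\bm{M}^{\natural})$. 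The factorization identity $\BDeltaL\BR^* + \BL\BDeltaR^* = (\bm{M}-\bm{M}^{\natural}) + \BDeltaL\BDeltaR^*$ reduces the $\psi$-part of the inner product to $\mathrm{Re}\langle \bm{G}, \bm{M}-\bm{M}^{\natural}\rangle + \mathrm{Re}\langle \bm{G}, \BDeltaL\BDeltaR^*\rangle$; the $\phi$-part becomes $\lambda \mathrm{Re}\langle \BL(\BL^*\BL-\BR^*\BR), \BDeltaL\rangle - \lambda \mathrm{Re}\langle \BR(\BL^*\BL-\BR^*\BR), \BDeltaR\rangle$.

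Next, I would lower-bound $\mathrm{Re}\langle \bm{G}, \bm{M}-\bm{M}^{\natural}\rangle$ piece by piece. The Hankel-penalty piece contributes the nonnegative quantity $\|(\I-\G\G^*)(\bm{M}-\bm{M}^{\natural})\|_{\fro}^2$. For the sampling piece $p^{-1}\|\P_{\Omega}(\bm{M}-\bm{M}^{\natural})\|_{\fro}^2$, split $\bm{M}-\bm{M}^{\natural} = (\BL_{\E}\BDeltaR^* + \BDeltaL\BR_{\E}^*) + \BDeltaL\BDeltaR^*$, where the first summand lies in $T$ (letting us invoke \cref{RIP2}) and the second is second-order (handled by \cref{PZ-Z}); this yields roughly $(1-O(\varepsilon_0))\|\bm{M}-\bm{M}^{\natural}\|_{\fro}^2$ minus controlled errors. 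The outlier piece is the most delicate: the thresholding rule $\bs = \Gamma_{\gamma\alpha p}(\Pi_{\Omega}(\bf-\G^*\bm{M}))$ combined with \cref{amp:sparsity} forces $\|\Pi_{\Omega}(\bs-\bs^{\natural})\|_0 \le (1+\gamma)\alpha p n \le 3\alpha p n$, and a pointwise comparison (in the spirit of Step~1 of the proof of \cref{thm:Initialization}) gives $|[\bs-\bs^{\natural}]_i| \lesssim |[\P_{\Omega}(\bm{M}-\bm{M}^{\natural})]_i|$ on this support; Cauchy--Schwarz together with \cref{PZUZV-M} then produces the $\alpha c_s\mu r\sigma_1^{\natural}\Delta$ and $\sqrt{c_s\alpha\mu r\sigma_1^{\natural}\Delta^3}$ contributions that end up inside $\nu$. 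The second-order term $\mathrm{Re}\langle \bm{G}, \BDeltaL\BDeltaR^*\rangle$ is bounded by Cauchy--Schwarz applied to each of the three pieces of $\bm{G}$, using \cref{PZ-Z} on the sampling piece, yielding the $\Delta^2$ and $\sqrt{\sigma_1^{\natural}\Delta^3}$ coefficients displayed in the statement.

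For the balance-regularization contribution, a direct algebraic expansion gives $\mathrm{Re}\langle \BL(\BL^*\BL-\BR^*\BR), \BDeltaL\rangle - \mathrm{Re}\langle \BR(\BL^*\BL-\BR^*\BR), \BDeltaR\rangle \ge \tfrac{1}{2}\|\BL^*\BL-\BR^*\BR\|_{\fro}^2$ minus higher-order terms in $\Delta$; multiplying by $\lambda = 1/16$ produces the $\tfrac{1}{64}\|\BL^*\BL-\BR^*\BR\|_{\fro}^2$ term in the statement. Finally, to extract the critical $\tfrac{1}{64}\sigma_r^{\natural}\Delta$ descent I would invoke the standard factor-space lower bound (analogous to \cite[Lemma~5.4]{tu2016low}) $\|\bm{M}-\bm{M}^{\natural}\|_{\fro}^2 + \tfrac{1}{2}\|\BL^*\BL-\BR^*\BR\|_{\fro}^2 \ge 2(\sqrt{2}-1)\sigma_r^{\natural}\Delta$, valid under $\Delta \le \sigma_1^{\natural}$; re-absorbing a $1/8$-fraction of $\|\bm{M}-\bm{M}^{\natural}\|_{\fro}^2$ through this inequality yields the $\tfrac{7}{8}\|\bm{M}-\bm{M}^{\natural}\|_{\fro}^2 + \tfrac{1}{64}\sigma_r^{\natural}\Delta$ split. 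The main obstacle is bookkeeping rather than any single deep step: the three free parameters $\varepsilon_0$ (RIP slack), $\beta$ (from Young's inequality splits that redistribute error between first- and second-order terms in $\Delta$), and $b_0$ (controlling the overestimation $\gamma \in [1+1/b_0,2]$) must be threaded consistently so that the composite first-order coefficient $\nu$ has the displayed form $(54+9\beta)\alpha\mu r + 4b_0/\beta + \varepsilon_1$, and so that all specific constants in the statement emerge from the accumulated inequalities.
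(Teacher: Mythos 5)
Your proposal follows essentially the same route as the paper's proof: the split of the $\psi$-gradient inner product into a sampling-deviation term, a main quadratic term, and an outlier term (the paper's $T_1,T_2,T_3$, with the outlier term further split by the supports $\Omega_s$ and $\Omega_\natural\setminus\Omega_s$ exactly as you sketch), bounded via \cref{RIP2,PZ-Z,PZUZV-M}, plus the regularizer contribution supplying the $\tfrac{1}{64}\sigma_r^\natural\Delta$ descent. The only cosmetic difference is that you unpack the balance-regularizer bound (algebraic expansion plus a Procrustes-type inequality trading quadratic terms for $\sigma_r^\natural\Delta$) where the paper simply cites \cite[Lemma~3]{yi2016fast}, which packages those same two steps.
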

\begin{proof}
The proof are divided into two main parts. In the first part of the proof, we establish a lower bound of $\mathrm{Re}\l \nabla_{\bm{L}} \psi,\bm{L}-\bm{L}_{\E} \r+\mathrm{Re}\l \nabla_{\bm{R}} \psi,\bm{R}-\bm{R}_{\E}\r$. In fact,
\begin{align*}
&\mathrm{Re}\l \nabla_{\bm{L}} \psi,\bm{L}-\bm{L}_{\E} \r
=\mathrm{Re}\l p^{-1}\G \Pi_{\Omega}\left(\G^*\left(\bm{L} \bm{R}^*\right)+\bm{s}-\bm{f}\right)+\left(\I-\G\G^*\right)\left(\bm{L} \bm{R}^*\right),\BDeltaL \bm{R}^*\r,\\
&\mathrm{Re}\l \nabla_{\bm{R}} \psi,\bm{R}-\bm{R}_{\E}\r
=\mathrm{Re}\l p^{-1}\G \Pi_{\Omega}\left(\G^*\left(\bm{L} \bm{R}^*\right)+\bm{s}-\bm{f}\right)+\left(\I-\G\G^*\right)\left(\bm{L} \bm{R}^*\right),\bm{L} \BDeltaR^*\r.
\end{align*}
Notice $\left(\I-\G\G^*\right)\left(\bm{L}_{\E} \bm{R}_{\E}^*\right)=0$ and  $\bm{z}^{\natural}=\G^*\left(\bm{L}_{\E} \bm{R}_{\E}^*\right)$. Rearrangement gives
\begin{equation*}
\begin{split}
&~\mathrm{Re}\left(\l \nabla_{\bm{L}} \psi,\bm{L}-\bm{L}_{\E}\r+\l  \nabla_{\bm{R}} \psi,\bm{R}-\bm{R}_{\E}\r\right)\\
=&~\underbrace{\mathrm{Re}\l \G\left(p^{-1}\Pi_{\Omega}-\I\right)\G^*\left(\bm{L} \bm{R}^*-\bm{L}_{\E} \bm{R}_{\E}^*\right),\BDeltaL \bm{R}^*+\bm{L} \BDeltaR^*\r}_{T_1}\\
&~+\underbrace{\mathrm{Re}\l \bm{L} \bm{R}^*-\bm{L}_{\E} \bm{R}_{\E}^*,\BDeltaL \bm{R}^*+\bm{L} \BDeltaR^*\r}_{T_2}
+\underbrace{\mathrm{Re}\l p^{-1}\G \Pi_{\Omega}\left(\bm{s}-\bm{s}^{\natural}\right),\BDeltaL \bm{R}^*+\bm{L} \BDeltaR^*\r}_{T_3}.
\end{split}
\end{equation*}
Now we estimate the bounds of $T_1, T_2, T_3$.
By direct calculation, we have
\begin{equation}\label{Hu}
\BDeltaL \bm{R}^*+\bm{L} \BDeltaR^*
=\bm{L} \bm{R}^*-\bm{L}_{\E} \bm{R}_{\E}^*+\BDeltaL \BDeltaR^*
=\BDeltaL \bm{R}_{\E}^*+\bm{L}_{\E}\BDeltaR^*+2\BDeltaL \BDeltaR^*.
\end{equation}
Then, for $T_1$ we have
\begin{equation*}
\begin{aligned}
 | T_1 |
 =&~\Big\lv\mathrm{Re}\Big\langle \G\left(p^{-1}\Pi_{\Omega}-\I\right)\G^*\left(\BDeltaL \bm{R}_{\E}^*+\bm{L}_{\E}\BDeltaR^*+\BDeltaL \BDeltaR^*\right),\\
                    &~\BDeltaL \bm{R}_{\E}^*+\bm{L}_{\E}\BDeltaR^*+2\BDeltaL \BDeltaR^*\Big\rangle \Big\lv\\
    \le &~ \left\lv \mathrm{Re}\l \G\left(p^{-1}\Pi_{\Omega}-\I\right)\G^*\left(\BDeltaL \bm{R}_{\E}^*+\bm{L}_{\E}\BDeltaR^*\right),\left(\BDeltaL \bm{R}_{\E}^*+\bm{L}_{\E}\BDeltaR^*\right)\r\right\lv\\
    &~+2\left\lv \mathrm{Re}\l\G\left(p^{-1}\Pi_{\Omega}-\I\right)\G^*\left(\BDeltaL \BDeltaR^*\right),\BDeltaL \BDeltaR^*\r\right\lv\\
    &~  +3\lv\l \G\left(p^{-1}\Pi_{\Omega}-\I\right)\G^*\left(\BDeltaL \bm{R}_{\E}^*+\bm{L}_{\E}\BDeltaR^*\right),\BDeltaL \BDeltaR^*\r\lv\\
    \end{aligned}
    \end{equation*}
Notice that  $\P_T\left(\BDeltaL \bm{R}_{\E}^*+\bm{L}_{\E}\BDeltaR^*\right)=\BDeltaL \bm{R}_{\E}^*+\bm{L}_{\E}\BDeltaR^*$, thus
\begin{align}\label{T1}
 | T_1 |  \le &~ \varepsilon_0\lV  \BDeltaL\bm{R}_{\E}^*+\bm{L}_{\E}\BDeltaR^*\rV_\fro^2+2p^{-1}\lV\P_{\Omega}\G\G^*\BDeltaL \BDeltaR^*\rV_\fro^2-2\lV \G\G^*\BDeltaL \BDeltaR^*\rV_\fro^2 \cr
     &~+{\frac{3 \varepsilon_0}{2}}\lV\BDeltaL\bm{R}_{\E}^*+\bm{L}_{\E}\BDeltaR^*\rV_\fro^2+\frac{3}{2}\lV\BDeltaL \BDeltaR^*\rV_\fro^2 \cr
    \le&~ {\frac{5\varepsilon_0}{2}}\lV\BDeltaL\bm{R}_{\E}^*+\bm{L}_{\E}\BDeltaR^*\rV_\fro^2
    +\frac{1}{2}\Delta^2+36\varepsilon_0\sigma_1^{\natural}\Delta+\frac{3}{2}\lV\BDeltaL \BDeltaR^*\rV_\fro^2 \cr
    \le&~ 41\varepsilon_0\sigma_1^{\natural}\Delta+{\frac{7}{8}}\Delta^2,
\end{align}
where the first inequality follows from \cref{RIP2} and $ab\le \frac{a^2+b^2}{2}$, the second inequality follows from $\lV\P_{\Omega}\G\G^*\bm{X}\rV_\fro=\lV\G\G^*\P_{\Omega} \BX \rV_\fro$ for $\BX\in \mathbb{C}^{n_1\times n_2}$, then apply \cref{PZ-Z}, and the fact $\G\G^*$ is a projection operator. The last inequality follows from $\lV\BDeltaL\bm{R}_{\E}^*\rV_\fro^2 \le \sigma_1^{\natural}\lV \BDeltaL\rV_\fro^2$,  $\lV \bm{L}_{\E}\BDeltaR^*\rV_\fro^2\le \sigma_1^{\natural}\lV \BDeltaR\rV_\fro^2$ and the Cauchy-Schwarz inequality which yields $\lV\BDeltaL \BDeltaR^*\rV_\fro\le \lV\BDeltaL\rV_\fro \lV \BDeltaR\rV_\fro \le\frac{\Delta}{2} $.

About $T_2$, by $\BDeltaL \bm{R}^*+\bm{L} \BDeltaR^*=\bm{L} \bm{R}^*-\bm{L}_{\E} \bm{R}_{\E}^*+\BDeltaL \BDeltaR^*$ we have
\begin{align}\label{T2}
T_2&\ge \lV \bm{L} \bm{R}^*-\bm{L}_{\E} \bm{R}_{\E}^*\rV_\fro^2-\lV \bm{L} \bm{R}^*-\bm{L}_{\E} \bm{R}_{\E}^*\rV_\fro\lV \BDeltaL \BDeltaR^*\rV_\fro \cr
&\ge {\frac{15}{16}}\lV \bm{L} \bm{R}^*-\bm{L}_{\E} \bm{R}_{\E}^*\rV_\fro^2-4\lV \BDeltaL \BDeltaR^*\rV_\fro^2 \ge \frac{15}{16}\lV \bm{L} \bm{R}^*-\bm{L}_{\E} \bm{R}_{\E}^*\rV_\fro^2-\Delta^2,
\end{align}
where the second inequality follows from $a^2-ab\ge \frac{15}{16}a^2-4b^2$.

About $T_3$, we denote the support of $\bm{s}$ and $\bm{s}^{\natural}$ by $\Omega_{s}$ and $\Omega_{\natural}$, respectively. We have
\begin{align*}
p\lv T_3\lv &\le \lv \mathrm{Re}\l\G \Pi_{\Omega}\left(\bm{s}-\bm{s}^{\natural}\right),\bm{L} \bm{R}^*-\bm{L}_{\E} \bm{R}_{\E}^*\r\lv +\lv \mathrm{Re}\l \G \Pi_{\Omega}\left(\bm{s}-\bm{s}^{\natural}\right),\BDeltaL \BDeltaR^*\r\lv\\
&\le  \underbrace{\lv\mathrm{Re}\l \P_{\Omega} \G \Pi_{\Omega_s}\left(\bm{s}-\bm{s}^{\natural}\right),\bm{L} \bm{R}^*-\bm{L}_{\E} \bm{R}_{\E}^*\r\lv}_{T_4}\\
&+ \underbrace{\lv\mathrm{Re}\l\P_{\Omega}\G \Pi_{\Omega_{\natural}\setminus \Omega_s}\left(\bm{s}-\bm{s}^{\natural}\right),\bm{L} \bm{R}^*-\bm{L}_{\E} \bm{R}_{\E}^*\r\lv}_{T_5}
+ \underbrace{\lv \mathrm{Re}\l \G \Pi_{\Omega}\left(\bm{s}-\bm{s}^{\natural}\right),\BDeltaL \BDeltaR^*\r\lv}_{T_6},
\end{align*}
Noticing that for $i\in \Omega\cap\Omega_s$, we have $s_i=z^{\natural}_i+s^{\natural}_i-[\G^*\left(\bm{L} \bm{R}^*\right)]_i$, then
\begin{equation*}
\begin{aligned}
 T_4=&~ \lv \mathrm{Re}\l \P_{\Omega}\P_{\Omega_s}[\G \bm{z}^{\natural}-\G\G^*\left(\bm{L} \bm{R}^*\right)],\bm{L} \bm{R}^*-\bm{L}_{\E} \bm{R}_{\E}^*\r\lv \\
    =&~  \lv \mathrm{Re}\l \G\G^*\left(\bm{L}_{\E} \bm{R}_{\E}^*-\bm{L} \bm{R}^*\right),\P_{\Omega_s}\left(\bm{L} \bm{R}^*-\bm{L}_{\E} \bm{R}_{\E}^*\right)\r\lv\\
    =&~  \lv \mathrm{Re}\l \left(\bm{L}_{\E} \bm{R}_{\E}^*-\bm{L} \bm{R}^*\right), \P_{\Omega_s}\G\G^*\P_{\Omega_s}\left(\bm{L} \bm{R}^*-\bm{L}_{\E} \bm{R}_{\E}^*\right)\r\lv\\
    = &~ \lV \G\G^*\P_{\Omega_s}\left(\bm{L}_{\E} \bm{R}_{\E}^*-\bm{L} \bm{R}^*\right)\rV_\fro^2 \le  \lV \P_{\Omega_s}\left(\bm{L}_{\E} \bm{R}_{\E}^*-\bm{L} \bm{R}^*\right)\rV_\fro^2,
\end{aligned}
\end{equation*}
where the second equality follows from $\Omega_s\subseteq \Omega$. The third equality follows from $\P_{\Omega_s}\G\G^*\P_{\Omega_s}X=\G\G^*\P_{\Omega_s}X$ for $X\in \mathbb{C}^{n_1\times n_2}$, which comes from the definition of $\G$ and $\G^*$ . The inequality in the last line follows from the fact $\G\G^*$ is a projection operator. By the definition of $\P_{\Omega_s}$ and $\Omega_s$, we know $\P_{\Omega_s}\left(\bm{M}\right)$ has at most $\gamma \alpha pn$ nonzero elements each row and each column. Then by \cref{PZUZV-M}, we have
\begin{equation}\label{T4}
T_4\le 18 c_s\gamma\alpha p \mu r\sigma_1^{\natural}\Delta.
\end{equation}
About $T_5$, by the definition of $\Gamma_{\gamma \alpha p}$, we know for $i\in \Omega\cap\left(\Omega_{\natural}\setminus \Omega_s\right)$, the term $z^{\natural}_i+s^{\natural}_i-[\G^*\left(\bm{L} \bm{R}^*\right)]_i$ is smaller than the $\gamma\alpha pn$-th largest element in $\Pi_{\Omega}\left(\bm{z}^{\natural}+\bm{s}^{\natural}-\G^*\left(\bm{L} \bm{R}^*\right)\right)$, then it is smaller than $\left(\gamma\alpha pn-\alpha pn\right)$-th largest element in $\Pi_{\Omega}\big(\bm{z}^{\natural}-\G^*(\bm{L} \bm{R}^*)\big)$ as $\lV \Pi_{\Omega}\bm{s}^{\natural}\rV_0\le \alpha pn$. It then implies for $i\in \Omega\cap\left(\Omega_{\natural}\setminus \Omega_s\right)$, we have
\begin{equation}\label{zzs}
\lv z^{\natural}_i+s^{\natural}_i-[\G^*\left(\bm{L} \bm{R}^*\right)]_i\lv^2 \le {\frac{\lV \Pi_{\Omega}\left(\bm{z}^{\natural}-\G^*\left(\bm{L} \bm{R}^*\right)\right)\rV_2^2}{\gamma\alpha pn-\alpha pn}}.
\end{equation}
Thus, for $i\in \Omega\cap\left(\Omega_{\natural}\setminus \Omega_s\right)$, it holds
\begin{equation*}
\begin{aligned}
\lv s_i^{\natural}[\overline{\bm{z}^{\natural}-\G^*\left(\bm{L} \bm{R}^*\right)}]_i\lv
=&~\lv [\bm{s}^{\natural}+\bm{z}^{\natural}-\G^*\left(\bm{L} \bm{R}^*\right)+\G^*\left(\bm{L} \bm{R}^*\right)-\bm{z}^{\natural}]_i [\overline{\bm{z}^{\natural}-\G^*\left(\bm{L} \bm{R}^*\right)}]_i\lv\\
\le &~ \lv [\bm{z}^{\natural}-\G^*\left(\bm{L} \bm{R}^*\right)]_i\lv^2+\lv [\bm{s}^{\natural}+\bm{z}^{\natural}-\G^*\left(\bm{L} \bm{R}^*\right)]_i\lv \lv [\overline{\bm{z}^{\natural}-\G^*\left(\bm{L} \bm{R}^*\right)}]_i\lv\\
\le &~(1+{\frac{\beta}{2}})\lv [\bm{z}^{\natural}-\G^*\left(\bm{L} \bm{R}^*\right)]_i\lv^2
+{\frac{1}{2\beta}}\lv z^{\natural}_i+s^{\natural}_i-[\G^*\left(\bm{L} \bm{R}^*\right)]_i\lv^2\\
\le &~ (1+{\frac{\beta}{2}})\lv [\bm{z}^{\natural}-\G^*\left(\bm{L} \bm{R}^*\right)]_i\lv^2+\frac{\lV \Pi_{\Omega}[\bm{z}^{\natural}-\G^*\left(\bm{L} \bm{R}^*\right)]\rV_2^2}{2\beta\left(\gamma-1\right)\alpha p n}
\end{aligned}
\end{equation*}
for any $\beta>0$, where the second inequality follows form $a^2+ab\le\left(\frac{\beta}{2}+1\right)a^2+\frac{b^2}{2\beta}$.
Denote $\Omega':=\Omega_{\natural}\setminus \Omega_s$. By the above inequality we obtain
\begin{align}\label{T5}
 &~T_5
 =\lv \mathrm{Re}\l\ \Pi_{\Omega'}\left(\bm{0}-\bm{s}^{\natural}\right),\Pi_{\Omega}[\G^*\left(\bm{L} \bm{R}^*\right)-\bm{z}^{\natural}]\r\lv\cr
    \le &~\left(1+{\frac{\beta}{2}}\right)\lV \P_{\Omega}\P_{\Omega'}\G\G^*\left(\bm{L} \bm{R}^*-\bm{L}_{\E} \bm{R}_{\E}^*\right)\rV_\fro^2
    +\sum_{i\in \Omega'}{\frac{\lV \P_{\Omega}\G\G^*\left(\bm{L} \bm{R}^*-\bm{L}_{\E} \bm{R}_{\E}^*\right)\rV_\fro^2}{2\beta\left(\gamma-1\right)\alpha p n}}\cr
    \le&~ \left(18+9\beta\right)c_s\alpha p\mu r\sigma_1^{\natural}\Delta+\frac{4\left(1+10\varepsilon_0\right)p\sigma_1^{\natural}\Delta+\frac{p}{2}\Delta^2}{2\beta\left(\gamma-1\right)},
\end{align}
where the first inequality follows from $\lV \G \bm{z}\rV_\fro^2=\lV\sum_{i=1}^n z_i/\sqrt{\varsigma_i}\H\bm{e}_i\rV_F^2=\lV \bm{z}\rV_2^2, \bm{z}\in\mathbb{C}^n$ and thus $\lV \G^*\bm{Z}\rV_2=\lV \G\G^*\bm{Z}\rV_\fro, \bm{Z}\in \mathbb{C}^{n_1\times n_2}$. The second inequality follows from  the same argument to \eqref{T4},  $\lvert\Omega'\rvert\le \alpha p n$ and \cref{PZ-Z}.

To obtain upper bound of $T_6$, the strategy is similar to $T_4$ and $T_5$. We have
\begin{align*}
T_6
 \le &~\lv\mathrm{Re}\l \P_{\Omega}\G \Pi_{\Omega_s}\left(\bm{s}-\bm{s}^{\natural}\right),\BDeltaL \BDeltaR^*\r\lv+ \lv\mathrm{Re}\l \P_{\Omega}\G \Pi_{\Omega_{\natural}\setminus \Omega_s}\left(\bm{s}-\bm{s}^{\natural}\right),\BDeltaL \BDeltaR^*\r\lv \cr
\le &~\lv\mathrm{Re}\l \G \Pi_{\Omega_s}\left(\G^*\left(\bm{L} \bm{R}^*\right)-\bm{z}^{\natural}\right),\P_{\Omega}\BDeltaL \BDeltaR^*\r\lv+ \lv\mathrm{Re}\l\G \Pi_{\Omega_{\natural}\setminus \Omega_s}\bm{s}^{\natural},\P_{\Omega}\BDeltaL \BDeltaR^*\r\lv \cr
    \le&~  \lV \P_{\Omega_s}\G\G^*\left(\bm{L}_{\E} \bm{R}_{\E}^*-\bm{L} \bm{R}^*\right)\rV_\fro \lV \P_{\Omega}\BDeltaL\BDeltaR^*\rV_\fro\cr
    &~+\lV\Pi_{\Omega_{\natural}\setminus \Omega_s}\left[\bm{z}^{\natural}+\bm{s}^{\natural}-\G^*\left(\bm{L} \bm{R}^*\right)\right]\rV_\fro  \lV\P_{\Omega}\BDeltaL \BDeltaR^*\rV_\fro\cr
   &~ +\lV\Pi_{\Omega_{\natural}\setminus \Omega_s}(\G^*(\bm{L} \bm{R}^*)-\bm{z}^{\natural})\rV_\fro\lV \P_{\Omega}\BDeltaL \BDeltaR^*\rV_\fro\cr
\le &~ \left((\sqrt{\gamma}+1)\sqrt{18c_s\alpha p\mu r\sigma_1^{\natural}\Delta}+{\frac{\lV \P_{\Omega}\G\G^*(\bm{L}           \bm{R}^*-\bm{L}_{\E} \bm{R}_{\E}^*)\rV_\fro}{\sqrt{\gamma-1}}}\right)\lV \P_{\Omega}\BDeltaL \BDeltaR^*\rV_\fro\cr
   \le &~ \left((\sqrt{\gamma}+1)\sqrt{18c_s\alpha p\mu r\sigma_1^{\natural}\Delta}+\textstyle{\sqrt{\frac{4(1+10\varepsilon_0)p\sigma_1^{\natural}\Delta+\frac{p}{2}\Delta^2}{\gamma-1}}}\right) \sqrt{{\frac{p}{4}}\Delta^2+18\varepsilon_0 p\sigma_1^{\natural}\Delta},
\end{align*}
where the fourth inequality follows from \eqref{T4}, \eqref{zzs} and the equality $\lV \G^*Z\rV_2=\lV \G\G^*Z\rV_\fro$. The fifth inequality follows from \cref{PZ-Z}. Let $\gamma$ be some constant satisfying $1+\frac{1}{b_0}\le\gamma\le 2$, where $1\le b_0<\infty$. Then, for all $\varepsilon_0\in\left(0,\frac{1}{10}\right)$, one has the upper bound of $T_6$ given by
\begin{equation*}
p^{-1}T_6\le \frac{\sqrt{2b_0}}{4}\Delta^2+\left(\sqrt{3b_0}+6\sqrt{c_s\alpha\mu r}\right)\sqrt{\sigma_1^{\natural}\Delta^3}
+\sqrt{\varepsilon_0}\left(12\sqrt{b_0}+44\sqrt{c_s\alpha \mu r}\right)\sigma_1^{\natural}\Delta,
\end{equation*}
and thus
\begin{align}
T_3\le&~ [\left(54 +9\beta\right)c_s\alpha\mu r+4b_0\beta^{-1}+\varepsilon_2]\sigma_1^{\natural}\Delta+ \left({\frac{\sqrt{2b_0}}{4}}+ \frac{b_0}{4\beta}\right)\Delta^2 \cr
&~+\left(\sqrt{3b_0}+6\sqrt{c_s\alpha\mu r}\right)\sqrt{\sigma_1^{\natural}\Delta^3},
\end{align}
where $\varepsilon_2=\sqrt{\varepsilon_0}\left(12\sqrt{b_0}+44\sqrt{c_s\alpha \mu r}\right)$, and thus $\varepsilon_1$ can be any small number in $\left(0,\theta\right)$ for some $\theta>0$ when $\alpha\mu r$ is bounded. Combining all the pieces, we obtain
\begin{align}\label{lowerboundF0}
&~T_1+T_2+T_3\ge T_2-\lv T_1\lv-\lv T_3\lv \cr
\ge&~\textstyle{\frac{15}{16}}\lV \bm{L} \bm{R}^*-\bm{L}_{\E} \bm{R}_{\E}^*\rV_\fro^2-\Big(2+ {\frac{\sqrt{2b_0}}{4}}+{\frac{b_0}{4\beta}}\Big)\Delta^2
 -\nu\sigma_1^{\natural}\Delta-\big(\sqrt{3b_0}+6\sqrt{c_s\alpha\mu r}\big)\sqrt{\sigma_1^{\natural}\Delta^3},
\end{align}
where we let $\nu=\left(54 +9\beta\right)c_s\alpha\mu r+4b_0\beta^{-1}+\varepsilon_1$ and $\varepsilon_1=\varepsilon_2+41\varepsilon_0$.
Now in the second part of the proof, we give a lower bound of $\l\nabla \phi,\bm{Z}-\bm{Z}^{\natural}\r$, where $\phi\left(\bm{L},\bm{R}\right)=\frac{1}{4}\lV \bm{L}^* \bm{L}-\bm{R}^* \bm{R}\rV_\fro^2$ is actually a standard regularization term. Therefore, by directly applying the result in \cite[Lemma 3]{yi2016fast}, we have
\begin{align}\label{lowerboundG}
&~\mathrm{Re}\left(\l \nabla_{\bm{L}} \phi,\BDeltaL\r+\l \nabla_{\bm{R}} \phi,\BDeltaR\r\right) \cr
\ge&~ {\frac{1}{4}}\lV\bm{L}^*\bm{L}-\bm{R}^*\bm{R}\rV_\fro^2+{\frac{1}{4}}\sigma_r^\natural\Delta-\sqrt{2\sigma_1^\natural\Delta^3}-\lV\bm{L}\bm{R}^*-\bm{L}_{\E}\bm{R}_{\E}^*\rV_\fro^2,
\end{align}
provided $(\bm{L},\bm{R}) \in \mathcal{B}(\sqrt{\sigma_r^\natural})$. Finally, taking $\lambda=\frac{1}{16}$, combining \eqref{lowerboundG} and the lower bound for $T_1+T_2+T_3$, we have
\begin{align*}
&~\mathrm{Re}\left(\l \nabla_{\bm{L}} \ell,\BDeltaL\r+\l  \nabla_{\bm{R}} \ell,\BDeltaR\r\right)\\
\ge&~\textstyle{\frac{7}{8}}\lV \bm{L} \bm{R}^*
-\bm{L}_{\E} \bm{R}_{\E}^*\rV_\fro^2+{\frac{1}{64}}\sigma_r^\natural\Delta-\nu\sigma_1^{\natural}\Delta-\left(2+ {\frac{\sqrt{2b_0}}{4}}+{\frac{b_0}{4\beta}}\right)\Delta^2\\
 &~-\left(\sqrt{3b_0}+6\sqrt{c_s\alpha\mu r}+\textstyle{\frac{\sqrt{2}}{16}}\right)\sqrt{\sigma_1^{\natural}\Delta^3}
 +\textstyle{\frac{1}{64}}\lV\bm{L}^*\bm{L}-\bm{R}^*\bm{R}\rV_\fro^2.
\end{align*}
\end{proof}

\begin{lemma}\label{T2up} Let $\BDeltaL,\BDeltaR, \bm{L}, \bm{R},\bm{L}_{\E},\bm{R}_{\E}$, and $\Delta$ be defined in \cref{def1}. Set $\lambda=\frac{1}{16}$. Let $\gamma\in \left[1+\frac{1}{b_0},2\right]$. Then under events \cref{event:RIP3}, \eqref{eq:projectionRIP1} and \eqref{eq:projectionRIP2}, for any $\varepsilon_0\in(0,\frac{1}{10})\cap \left(0,\frac{\sqrt{\kappa r}}{8c_0}\right)$ ,
if $p\ge \left(\varepsilon_0^{-2}c_4 c_s^2\mu^2 r^2 \log n\right)/n$, $\alpha\le \frac{1}{32 c_s\mu r\kappa}$, it holds
\begin{equation*}
\begin{split}
&~\lV \nabla_{\bm{L}} \ell\left(\bm{L},\bm{R};\bm{s}\right)\rV_\fro^2+ \lV \nabla_{\bm{R}} \ell\left(\bm{L},\bm{R};\bm{s}\right)\rV_\fro^2   \\
\le&~32\sigma_1^{\natural}\left((9b_0+10)\lV \bm{L} \bm{R}^*-\bm{L}_{\E} \bm{R}_{\E}^*\rV_\fro^2+72(b_0+1)\varepsilon_0 \sigma_1^{\natural}\Delta+{\frac{11}{2}}(b_0+1)\Delta^2\right)\\
\quad &~+{\frac{\sigma_1^{\natural}}{16}}\lV\bm{L}^*\bm{L}-\bm{R}^*\bm{R}\rV_\fro^2
\end{split}
\end{equation*}

\end{lemma}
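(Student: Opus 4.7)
The plan is to split each of $\nabla_\BL \ell$ and $\nabla_\BR \ell$ into four natural pieces, bound each in Frobenius norm, and combine via $\lV \sum_{i=1}^4 \bm{X}_i \rV_\fro^2 \le 4\sum_i \lV \bm{X}_i \rV_\fro^2$. Starting from the gradient formula already recorded in the paper and using $\bm{f}=\bm{z}^\natural+\bm{s}^\natural$, $\bm{z}^\natural=\G^*(\BL^\natural\BR^{\natural *})$, $\G\Pi_\Omega\bm{z}^\natural=\P_\Omega(\BL_\E\BR_\E^*)$, together with the fact that $\BL_\E\BR_\E^*=\G\G^*(\BL_\E\BR_\E^*)$ is Hankel, I rewrite
\begin{equation*}
\nabla_\BL \ell
= \tfrac{1}{p}\P_\Omega\bm{E}\BR
+ (\I-\G\G^*)\bm{E}\BR
+ \tfrac{1}{p}\G\Pi_\Omega(\bm{s}-\bm{s}^\natural)\BR
+ \lambda\BL(\BL^*\BL-\BR^*\BR),
\end{equation*}
where $\bm{E}:=\BL\BR^*-\BL_\E\BR_\E^*$; an analogous decomposition holds for $\nabla_\BR \ell$.

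I then bound each piece separately. For the sampling-residual piece I use the identity $\P_\Omega=\P_\Omega\G\G^*$, the sub-multiplicativity $\lV \P_\Omega\bm{E}\BR\rV_\fro\le \lV \P_\Omega\G\G^*\bm{E}\rV_\fro \lV \BR \rV_2$, and \cref{PZ-Z} to produce contributions of the form $\sigma_1^\natural\lV \bm{E}\rV_\fro^2$, $\varepsilon_0(\sigma_1^\natural)^2\Delta$ and $\sigma_1^\natural\Delta^2$. For the non-Hankel penalty piece the operator $\I-\G\G^*$ is a projection, so $\lV (\I-\G\G^*)\bm{E}\BR \rV_\fro^2 \le \lV \bm{E}\rV_\fro^2 \lV \BR\rV_2^2 \le 2\sigma_1^\natural \lV \bm{E}\rV_\fro^2$. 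The outlier piece is handled by repeating the support partition used for $T_3$ in the proof of \cref{T1low}: on $\Omega\cap\Omega_s$ I invoke \cref{PZUZV-M} to get an $\alpha c_s\mu r\sigma_1^\natural\Delta$ contribution, while on $\Omega_s^\natural\setminus\Omega_s$ I combine the thresholding inequality $\lv s_i^\natural \rvert^2 \le \lV \Pi_\Omega(\bm{z}^\natural-\G^*(\BL\BR^*))\rV_2^2/((\gamma-1)\alpha pn)$ with \cref{PZ-Z}. Finally, for the balance piece $(\BL,\BR)\in\L\times\R$ and the initialization estimate \eqref{ineq:L0sigma} give $\lV \BL\rV_2^2,\lV \BR\rV_2^2 \le \tfrac{3}{2}\sigma_1^\natural$, so $\lV \BL(\BL^*\BL-\BR^*\BR)\rV_\fro^2 \le \tfrac{3}{2}\sigma_1^\natural\lV \BL^*\BL-\BR^*\BR\rV_\fro^2$; with $\lambda=1/16$, the factor $4$ from the Cauchy-Schwarz step, and summation over $\BL$ and $\BR$, this reproduces exactly the $\tfrac{\sigma_1^\natural}{16}\lV \BL^*\BL-\BR^*\BR\rV_\fro^2$ piece of the stated bound.

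I then sum the $\BL$ and $\BR$ contributions and collect constants. The hardest step will be the outlier piece: the same combinatorial bookkeeping as in the proof of \cref{T1low} is required, but now in squared Frobenius norm rather than in inner product form, which essentially doubles the constant tracking. The choice $\gamma\in[1+1/b_0,2]$ yields $1/(\gamma-1)\le b_0$, and this is the origin of the $(b_0+1)$ factors appearing in the target bound; the coefficient $(9b_0+10)$ on $\lV \bm{E}\rV_\fro^2$ and the $72(b_0+1)$ in front of $\varepsilon_0(\sigma_1^\natural)^2\Delta$ both emerge from collecting the contributions of the sampling-residual piece and the outlier piece together. Aside from this bookkeeping, no new technical lemma is needed beyond those already established in \cref{sec:tech_lemma}.
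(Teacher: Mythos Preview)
Your plan differs from the paper's in one structural respect that creates a genuine gap. The paper does \emph{not} separate the sampling residual from the outlier residual; it keeps them together as
\[
T_7 \;=\; \lV\, p^{-1}\G\Pi_\Omega\!\left(\G^*(\BL\BR^*)-\bz^\natural+\bs-\bs^\natural\right)\rV_\fro^2,
\]
precisely so that an exact cancellation can be used: for $i\in\Omega_s$ (the support of $\bs$) one has $s_i=z_i^\natural+s_i^\natural-[\G^*(\BL\BR^*)]_i$, so the bracket vanishes identically there. Hence $T_7$ is supported only on $\Omega\setminus\Omega_s$; the thresholding inequality (which bounds the \emph{combined} quantity $|z_i^\natural+s_i^\natural-[\G^*(\BL\BR^*)]_i|$, not $|s_i^\natural|$ alone as you state it) handles $\Omega_s^\natural\setminus\Omega_s$, and everything reduces to a multiple of $\lV\P_\Omega\G\G^*\bm{E}\rV_\fro^2$, to which \cref{PZ-Z} applies directly. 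This is exactly why the stated bound contains no $\alpha$-dependent term. In your separated treatment, the outlier piece restricted to $\Omega_s$ equals $p^{-2}\lV\P_{\Omega_s}\bm{E}\rV_\fro^2$; applying \cref{PZUZV-M} with $|\Omega_s|\le\gamma\alpha pn$ yields a contribution of order $p^{-1}\alpha c_s\mu r(\sigma_1^\natural)^2\Delta$ (you appear to have dropped a factor $p^{-1}$), which carries no $\varepsilon_0$ and therefore cannot be absorbed into the $72(b_0+1)\varepsilon_0\sigma_1^\natural\Delta$ slot of the lemma. The promised emergence of the exact coefficients $(9b_0+10)$ and $72(b_0+1)$ does not go through without this cancellation.

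Two smaller corrections. First, your flat four-term split puts a coefficient $4\lambda^2$ in front of $\lV\nabla_{\BL}\phi\rV_\fro^2+\lV\nabla_{\BR}\phi\rV_\fro^2$, whereas the paper's two-stage split ($\psi$ versus $\lambda\phi$ first, then $T_7$ versus $T_8$ inside $\psi$) puts only $2\lambda^2$ there; combined with $\lV\BL\rV_2^2+\lV\BR\rV_2^2\le 8\sigma_1^\natural$ this yields $\sigma_1^\natural/8$, not the $\sigma_1^\natural/16$ you claim to recover. Second, your appeal to \eqref{ineq:L0sigma} to bound $\lV\BL\rV_2,\lV\BR\rV_2$ is misplaced: that estimate is for $\tilde{\BL}^{(0)}$, and membership in $\L\times\R$ controls only the $(2,\infty)$-norm. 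The correct bound $\lV\BL\rV_2,\lV\BR\rV_2\le 2\sqrt{\sigma_1^\natural}$ follows instead from $(\BL,\BR)\in\B(\sqrt{\sigma_1^\natural})$, which is part of \cref{def1} and is what the paper uses.
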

\begin{proof}
To bound $\lV \nabla_{\bm{L}} \ell\rV_\fro^2+ \lV \nabla_{\bm{R}} \ell\rV_\fro^2=\lV \nabla_{\bm{L}} \psi+\lambda \nabla_{\bm{L}} \phi\rV_\fro^2+ \lV \nabla_{\bm{R}}\psi+\lambda \nabla_{\bm{R}}\phi\rV_\fro^2$, we first bound $\lV \nabla_{\bm{L}}\psi\rV_\fro^2+ \lV \nabla_{\bm{R}}\psi\rV_\fro^2$. And we have
\begin{align*}
&~\lV \nabla_{\bm{L}} \psi\rV_\fro^2 =\Big\lVert \Big(\frac{1}{p}\G\Pi_{\Omega}[\G^*\left(\bm{L} \bm{R}_{\E}^*\right)-\bm{z}^{\natural}+\bm{s}-\bm{s}^{\natural}]
 +\left(\I-\G\G^*\right)\left(\bm{L} \bm{R}^*-\bm{L}_{\E} \bm{R}_{\E}^*\right)\Big)\bm{R}\Big\lVert_\fro^2\\
\le&~ 2\lV\bm{R}\rV_2^2 \Big(\underbrace{ \lV p^{-1}\G\Pi_{\Omega}[\G^*\left(\bm{L} \bm{R}_{\E}^*\right)-\bm{z}^{\natural}+\bm{s}-\bm{s}^{\natural}]\rV_\fro^2}_{T_7}
+\underbrace{\lV \left(\I-\G\G^*\right)\left(\bm{L} \bm{R}^*-\bm{L}_{\E} \bm{R}_{\E}^*\right)\rV_\fro^2}_{T_8}\Big).
\end{align*}
Then we bound $T_7$ and $T_8$. Let $\Omega_s$ and $\Omega_s^{\natural}$ be the support of $\bm{s}$ and $\Pi_{\Omega}\bm{s}^{\natural}$, then we have
\begin{equation*}
\begin{split}
&\text{for} \ i\in \Omega\cap \Omega_s,~\bm{s}_i=[\bm{z}^{\natural}+\bm{s}^{\natural}-\G^*\left(\bm{L} \bm{R}_{\E}^*\right)]_i \Rightarrow [\G^*\left(\bm{L} \bm{R}_{\E}^*\right)-\bm{z}^{\natural}+\bm{s}-\bm{s}^{\natural}]_i=0,\\
&\text{for} \ i\in \Omega\cap\left(\Omega_s^{\natural}\setminus\Omega_s\right),~ [\bm{z}^{\natural}+\bm{s}^{\natural}-\G^*\left(\bm{L} \bm{R}_{\E}^*\right)]_i^2\le \textstyle{\frac{\lV \Pi_{\Omega}\left(\bm{z}^{\natural}-\G^*\left(\bm{L} \bm{R}^*\right)\right)\rV_2^2}{\gamma\alpha pn-\alpha pn}} , \\
&\text{for other}\ i \ \text{in}\ \Omega, ~\text{we have}\ [\bm{z}^{\natural}+\bm{s}^{\natural}-\bm{s}-\G^*\left(\bm{L} \bm{R}_{\E}^*\right)]_i=[\bm{z}^{\natural}-\G^*\left(\bm{L} \bm{R}_{\E}^*\right)]_i,
\end{split}
\end{equation*}
the equality in the first line follows from the definition of $\bs$ in \cref{def1}, the second line follows from \eqref{zzs}. Thus we have
\begin{align*}
T_7
\le&~ 2\lV p^{-1} \Pi_{\Omega\cap\left(\Omega_s^{\natural}\setminus\Omega_s\right)}[\G^*\left(\bm{L} \bm{R}_{\E}^*\right)-\bm{z}^{\natural}+\bm{s}-\bm{s}^{\natural}]\rV_2^2+2\lV p^{-1}\G\Pi_{\Omega}[\G^*\left(\bm{L} \bm{R}_{\E}^*\right)-\bm{z}^{\natural}]\rV_\fro^2\\
\le & ~2\alpha pn { \frac{\lV p^{-1} \Pi_{\Omega}\left(\bm{z}^{\natural}-\G^*\left(\bm{L} \bm{R}^*\right)\right)\rV_2^2}{\gamma\alpha pn-\alpha pn}}+2p^{-1}\lV \P_{\Omega}\G\G^*\left(\bm{L} \bm{R}^*-\bm{L}_{\E} \bm{R}_{\E}^*\right)\rV_\fro^2\\
\le &~p^{-1}{\frac{2\gamma}{\gamma-1}}\lV \P_{\Omega}\G\G^*\left(\bm{L} \bm{R}^*-\bm{L}_{\E} \bm{R}_{\E}^*\right)\rV_\fro^2\\
\le &~{\frac{4\gamma}{\gamma-1}}\left(1+\varepsilon_0\right)\lV \bm{L}_{\E} \BDeltaR^*+\BDeltaL \bm{R}_{\E}^*\rV_\fro^2+\frac{\gamma}{\gamma-1}\Delta^2+\frac{72\gamma}{\gamma-1}\varepsilon_0 \sigma_1^{\natural}\Delta\\
\le &~{\frac{8\gamma}{\gamma-1}}\left(1+\varepsilon_0\right)\lV \bm{L} \bm{R}^*-\bm{L}_{\E}\bm{R}_{\E}^*\rV_\fro^2+\frac{(5+4\varepsilon_0)\gamma}{\gamma-1}\Delta^2+\frac{72\gamma}{\gamma-1}\varepsilon_0 \sigma_1^{\natural}\Delta,
\end{align*}
where the first to third inequality follows from $\lV \G \bm{z}\rV_\fro^2=\lV \bm{z}\rV_2^2$, the fourth inequality follows from the third line in \eqref{ineq:projLR}, the last inequality follows from \eqref{Hu}. Since $\I-\G\G^*$ is a projection, we have
$$T_8\le \lV \bm{L} \bm{R}^*-\bm{L}_{\E} \bm{R}_{\E}^*\rV_\fro^2.$$
Thus, as $\gamma\in[1+\frac{1}{b_0},2]$ with $b_0\ge 1$, by setting $\varepsilon_0\in\left(0,\frac{1}{10}\right)$, combining $T_7$ and $T_8$ gives
\begin{equation*}
\begin{aligned}
\lV \nabla_{\bm{L}} \psi\rV_\fro^2
\le  2\lV\bm{R}\rV_2^2 \big((9b_0+10)\lV \bm{L} \bm{R}^*-\bm{L}_{\E} \bm{R}_{\E}^*\rV_\fro^2+72(b_0+1)\varepsilon_0 \sigma_1^{\natural}\Delta+{\frac{11}{2}}(b_0+1)\Delta^2\big).
\end{aligned}
\end{equation*}
One can follow the same process to bound $\lV  \nabla_{\bm{R}} \psi\rV_\fro^2 $, and thus we have
\begin{align}\label{upperboundF}
&~\lV \nabla_{\bm{L}} \psi\rV_\fro^2+ \lV \nabla_{\bm{R}} \psi\rV_\fro^2 \cr
\le&~  16\sigma_1^{\natural}\Big((9b_0+10)\lV \bm{L} \bm{R}^*-\bm{L}_{\E} \bm{R}_{\E}^*\rV_\fro^2+72(b_0+1)\varepsilon_0 \sigma_1^{\natural}\Delta+{\frac{11}{2}}(b_0+1)\Delta^2\Big),
\end{align}
where the inequality follows from $\lV\bm{L}\rV_2\le 2\sqrt{\sigma_1^{\natural}}$, $\lV\bm{R}\rV_2\le 2\sqrt{\sigma_1^{\natural}}$ as $(\bm{L}, \bm{R})\in \B(\sqrt{\sigma_1^{\natural}})$. Now we bound $\lV \nabla_{\bm{L}}\phi\rV_\fro^2+ \lV \nabla_{\bm{R}}\phi\rV_\fro^2$. In fact, we have
\begin{align}\label{upperboundG0}
&~\lV \nabla_{\bm{L}}\phi\rV_\fro^2+ \lV \nabla_{\bm{R}} \phi\rV_\fro^2=\lV\bm{L}\left(\bm{L}^*\bm{L}-\bm{R}^*\bm{R}\right)\rV_\fro^2+\lV\bm{R}\left(\bm{R}^*\bm{R}-\bm{L}^*\bm{L}\right)\rV_\fro^2 \cr
\le&~ \left(\lV\bm{L}\rV_2^2+\lV\bm{R}\rV_2^2\right)\lV\bm{L}^*\bm{L}-\bm{R}^*\bm{R}\rV_\fro^2\le  8\sigma_1^{\natural}\lV\bm{L}^*\bm{L}-\bm{R}^*\bm{R}\rV_\fro^2.
\end{align}
Finally, taking $\lambda=\frac{1}{16}$ and combining \eqref{upperboundF} and \eqref{upperboundG0} to obtain
\begin{align*}
&~\lV \nabla_{\bm{L}} \ell\rV_\fro^2+ \lV \nabla_{\bm{R}} \ell\rV_\fro^2
\le 2\left(\lV \nabla_{\bm{L}} \psi\rV_\fro^2+ \lV \nabla_{\bm{R}}\psi\rV_\fro^2\right)+2\lambda^2\left(\lV \nabla_{\bm{L}} \phi\rV_\fro^2+ \lV \nabla_{\bm{R}}\phi\rV_\fro^2\right)\\
\le&~32\sigma_1^{\natural}\left((9b_0+10)\lV \bm{L} \bm{R}^*-\bm{L}_{\E} \bm{R}_{\E}^*\rV_\fro^2+72(b_0+1)\varepsilon_0 \sigma_1^{\natural}\Delta+{\frac{11}{2}}(b_0+1)\Delta^2\right)\\
\quad &~+{\frac{\sigma_1^{\natural}}{16}}\lV\bm{L}^*\bm{L}-\bm{R}^*\bm{R}\rV_\fro^2
\end{align*}
\end{proof}
As discussed above, with \cref{T1low,T2up}, we are now ready to show the local descent property of the proposed method.

\begin{proof}[Proof of \cref{thm:convergence}]
The proof of the theorem is under events \eqref{eq:projectionRIP1} and \eqref{eq:projectionRIP2}. For ease of notation, let $\left(\bm{L}_{\E}^{(k)},\bm{R}_{\E}^{(k)}\right)\in \E\left(\bm{L}^{\natural},\bm{R}^{\natural}\right)$ defined to be matrices aligned with $\left(\bm{L}^{(k)},\bm{R}^{(k)}\right)$. Then, by the definition of $d_k$ we have
\begin{align}\label{error:dk}
d_{k+1}^2 =&~\lV \bm{L}^{(k+1)}-\bm{L}_{\E}^{(k+1)} \rV_\fro^2+\lV \bm{R}^{(k+1)}-\bm{R}_{\E}^{(k+1)} \rV_\fro \cr
    \le&~ \lV \bm{L}^{(k+1)}-\bm{L}_{\E}^{(k)} \rV_\fro^2+\lV \bm{R}^{(k+1)}-\bm{R}_{\E}^{(k)} \rV_\fro \cr
    \le&~ \lV \bm{L}^{(k)}-\eta\nabla_{\bm{L}} \ell^{(k)}-\bm{L}_{\E}^{(k)} \rV_\fro^2+\lV \bm{R}^{(k)}-\eta \nabla_{\bm{R}} \ell^{(k)}-\bm{R}_{\E}^{(k)} \rV_\fro\cr
    \le&~ d_k^2-2\eta\mathrm{Re}\left(\l \nabla_{\bm{L}} \ell^{(k)},\bm{L}^{(k)}-\bm{L}_{\E}^{(k)}\r+\l  \nabla_{\bm{R}} \ell^{(k)},\bm{R}^{(k)}-\bm{R}_{\E}^{(k)}\r\right)\cr
    &~+\eta^2\left(\lV\nabla_{\bm{L}} \ell^{(k)} \rV_\fro^2+\lV \nabla_{\bm{R}} \ell^{(k)} \rV_\fro^2\right),
\end{align}
where the second inequality comes from the non-expansion property of projection onto $\L$ and $\R$. 
Let $\Delta_k:=d_k^2$. Set $\eta\le\frac{\tilde{\theta}}{(b_0+1) \sigma_r^{\natural}}$ for sufficiently small constant $\tilde{\theta}$ and $\gamma_k\in \left[1 + \frac{1}{b_0},2\right]$ with $b_0\ge 1$. By \cref{T1low,T2up}, we then have
\begin{equation*}
\begin{split}
&~-2\eta\mathrm{Re}\left(\l \nabla_{\bm{L}} \ell^{(k)},\bm{L}^{(k)}-\bm{L}_{\E}^{(k)}\r+\l  \nabla_{\bm{R}} \ell^{(k)},\bm{R}^{(k)}-\bm{R}_{\E}^{(k)}\r\right)\cr
&~+\eta^2\left(\lV\nabla_{\bm{L}} \ell^{(k)} \rV_\fro^2+\lV \nabla_{\bm{R}} \ell^{(k)} \rV_\fro^2\right)\\
\le&~\eta\Big[(-\textstyle{\frac{7}{4}}+10\tilde{\theta})\lV \bm{L} \bm{R}^*
-\bm{L}_{\E} \bm{R}_{\E}^*\rV_\fro^2-(\frac{1}{32}\sigma_r^\natural-2\nu\sigma_1^{\natural}-72\tilde{\theta}\varepsilon_0\sigma_1^\natural)\Delta_k\\
 &~+\left(4+ \textstyle{\frac{\sqrt{2b_0}}{2}}+\frac{b_0}{2\beta}+\frac{11\tilde{\theta}}{2}\right)\Delta_k^2+\left(\sqrt{12b_0}+12\sqrt{c_s\alpha\mu r}+\textstyle{\frac{\sqrt{2}}{8}}\right)\sqrt{\sigma_1^{\natural}\Delta_k^3}
 \\
 &~-(\textstyle{\frac{1}{32}}-\frac{\tilde{\theta}}{64})\lV\bm{L}^*\bm{L}-\bm{R}^*\bm{R}\rV_\fro^2\Big]\\
\le &~ -{\frac{1}{64}}\eta\sigma_r^{\natural}\Delta_k,
\end{split}
\end{equation*}
where the in second inequality, we use $\Delta_k \le \frac{\theta_1}{b_0}\frac{\sigma_r^{\natural}}{\kappa}$ wtih $ \frac{\theta_1}{b_0}=c_1^2$ sufficiently small.  Also, in the inequality we set $\beta=\frac{\kappa}{\theta_2}$, $\varepsilon_0=\frac{\theta_3}{\kappa}$, $\varepsilon_1=\frac{\theta_4}{\kappa}$, and assume $\alpha\le\frac{\theta_5}{c_s\mu r \kappa^2}$ for sufficiently small constants $\theta_2,\theta_3,\theta_4,\theta_5$. Then, it implies
\begin{equation*}
d_{k+1}^2\le \left(1-{\frac{\eta \sigma_r^{\natural}}{64}}\right)d_{k}^2.
\end{equation*}
Note that under events \cref{event:RIP3}, \eqref{eq:projectionRIP1} and \eqref{eq:projectionRIP2}, the above inequality holds for all $k\ge 0$. 
\end{proof}

\section{Supporting lemmas}\label{supportlemmas}
\begin{lemma}[{\cite[Lemma~6]{cai2021asap}}] \label{s-matrix}
For any $\bm{z}\in\mathbb{C}^{n}$ such that $\|\bm{z}\|_0\leq\alpha n$, it holds
\begin{equation*}
\lV \H \bm{z}\rV_2 \leq \alpha n \lV \H\bm{z}\rV_{\infty} =\alpha n\lV \bm{z}\rV_{\infty}.
\end{equation*}
\end{lemma}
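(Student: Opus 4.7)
The plan is to exploit the antidiagonal support structure of the Hankel operator together with the sparsity hypothesis on $\bm{z}$. The key idea is that a Hankel matrix whose generating vector has few nonzero entries is simultaneously row-sparse and column-sparse, which allows the spectral norm to be controlled by the induced row-sum and column-sum norms.

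First, I would observe that $[\H\bm{z}]_{i,j}=z_{i+j-1}$, so the support of $\H\bm{z}$ is contained in the union of antidiagonals indexed by the support of $\bm{z}$. Since $\|\bm{z}\|_0\le \alpha n$, there are at most $\alpha n$ such nonzero antidiagonals. The entrywise identity $\|\H\bm{z}\|_\infty=\|\bm{z}\|_\infty$ is immediate from this same expression, which gives the second equality in the lemma.

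Next, because each row of an $n_1\times n_2$ matrix meets each antidiagonal in at most one cell, every row of $\H\bm{z}$ contains at most $\alpha n$ nonzero entries; symmetrically, the same bound holds for every column. Each such nonzero entry has magnitude at most $\|\H\bm{z}\|_\infty$. Consequently, both the induced $\ell_\infty\to\ell_\infty$ norm (maximum row $\ell_1$-sum) and the induced $\ell_1\to\ell_1$ norm (maximum column $\ell_1$-sum) of $\H\bm{z}$ are bounded by $\alpha n\,\|\H\bm{z}\|_\infty$.

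Finally, I would invoke the standard interpolation inequality
\[
\|A\|_2 \le \sqrt{\|A\|_{1\to 1}\,\|A\|_{\infty\to\infty}},
\]
which is a direct consequence of the Riesz--Thorin theorem (or equivalently the Schur test). Applying it to $A=\H\bm{z}$ yields $\|\H\bm{z}\|_2\le \alpha n\,\|\H\bm{z}\|_\infty=\alpha n\,\|\bm{z}\|_\infty$. There is no real obstacle here; the proof is essentially a combinatorial counting argument, and the only subtlety is verifying that each row and each column of $\H\bm{z}$ indeed meets each antidiagonal at most once, which is the defining geometric property of antidiagonals in a rectangular matrix.
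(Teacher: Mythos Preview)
Your argument is correct. The paper itself does not supply a proof of this lemma; it merely records the statement and cites \cite[Lemma~6]{cai2021asap}. Your use of the row/column sparsity of $\H\bm{z}$ together with the Schur test (equivalently, the interpolation bound $\|A\|_2\le\sqrt{\|A\|_{1\to1}\|A\|_{\infty\to\infty}}$) is sound and yields exactly the claimed inequality, and the entrywise identity $\|\H\bm{z}\|_\infty=\|\bm{z}\|_\infty$ is indeed immediate.

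For comparison, the argument one typically sees (and which the cited reference essentially uses) is even more elementary: write $\H\bm{z}=\sum_{a=1}^n z_a\,\H\bm{e}_a$, note that each $\H\bm{e}_a$ is an antidiagonal indicator matrix and hence a partial permutation matrix with $\|\H\bm{e}_a\|_2\le 1$, and apply the triangle inequality to get $\|\H\bm{z}\|_2\le\sum_{a:z_a\neq0}|z_a|\le\|\bm{z}\|_0\,\|\bm{z}\|_\infty\le\alpha n\,\|\bm{z}\|_\infty$. This avoids invoking interpolation and uses only that each antidiagonal basis matrix has unit operator norm. Your route, by contrast, separates the combinatorial sparsity observation (at most $\alpha n$ nonzeros per row and per column) from the analytic step (Schur test), which generalizes cleanly to any matrix with that sparsity pattern, not just Hankel matrices. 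Both approaches are short and valid; yours is slightly more general in scope, while the antidiagonal decomposition is slightly more direct for the Hankel case.
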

\begin{lemma}\label{uv}
For all $\bm{u}\in \mathbb{R}^{n_1}$, $\bm{v}\in \mathbb{R}^{n_2}$, it holds
\begin{equation*}
\frac{1}{p}\sum_a\sum_{i+j=a+1}\delta_a\bm{u}_i\bm{v}_j\le \lV\bm{u}\rV_1 \lV\bm{v}\rV_1+\sqrt{\frac{8n\log n}{p}}\lV\bm{u}\rV_2 \lV\bm{v}\rV_2
\end{equation*}
with probability at least $1-2n^{-2}$, provided $p\ge \left( \log n\right)/n$.
\end{lemma}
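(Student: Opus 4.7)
The plan is to decompose the left-hand side into its mean plus a mean-zero fluctuation and apply Bernstein's inequality. First, replacing $\bm{u},\bm{v}$ with $|\bm{u}|,|\bm{v}|$ only enlarges the LHS while leaving the RHS unchanged, so without loss of generality I assume $\bm{u},\bm{v}\ge 0$. Writing $c_a := \sum_{i+j=a+1} u_i v_j \ge 0$ (i.e., the $a$-th entry of the convolution $\bm{u}*\bm{v}$) and $Y_a := (\delta_a/p-1) c_a$, the LHS becomes $\sum_a c_a + \sum_a Y_a$. Since $\sum_a c_a = (\sum_i u_i)(\sum_j v_j) = \|\bm{u}\|_1 \|\bm{v}\|_1$, it suffices to show $\sum_a Y_a \le \sqrt{8n\log n/p}\, \|\bm{u}\|_2 \|\bm{v}\|_2$ with probability at least $1-2n^{-2}$.

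To apply Bernstein's inequality to the independent mean-zero variables $\{Y_a\}$, I would estimate the uniform and variance bounds. Cauchy--Schwarz on a single antidiagonal gives $c_a \le \|\bm{u}\|_2 \|\bm{v}\|_2$, hence $|Y_a| \le \|\bm{u}\|_2 \|\bm{v}\|_2 /p$. The total variance is controlled via $\mathbb{E}[Y_a^2] \le c_a^2/p$ together with the elementary estimate
\begin{equation*}
\sum_a c_a^2 \le \bigl(\max_a c_a\bigr)\Bigl(\sum_a c_a\Bigr) \le \|\bm{u}\|_2 \|\bm{v}\|_2 \cdot \|\bm{u}\|_1 \|\bm{v}\|_1.
\end{equation*}
Plugging $V = \|\bm{u}\|_1 \|\bm{v}\|_1 \|\bm{u}\|_2 \|\bm{v}\|_2 /p$ and $M = \|\bm{u}\|_2 \|\bm{v}\|_2 /p$ into Bernstein's inequality with deviation $t = \sqrt{8n\log n/p}\, \|\bm{u}\|_2 \|\bm{v}\|_2$ produces an exponent of the form $pt^2/(2V + 2Mt/3)$, which must be shown to exceed $2\log n$ to yield the $2n^{-2}$ failure probability.

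The main obstacle I expect is precisely the constant verification: the exponent condition reduces to the deterministic inequality $2n \|\bm{u}\|_2 \|\bm{v}\|_2 \ge \|\bm{u}\|_1 \|\bm{v}\|_1 + \tfrac{1}{3}\sqrt{8n\log n/p}\, \|\bm{u}\|_2 \|\bm{v}\|_2$. I would handle this by combining $\|\bm{u}\|_1 \|\bm{v}\|_1 \le \sqrt{n_1 n_2}\, \|\bm{u}\|_2 \|\bm{v}\|_2 \le \tfrac{n+1}{2}\|\bm{u}\|_2 \|\bm{v}\|_2$ (Cauchy--Schwarz plus $n_1 + n_2 = n+1$) with $\sqrt{8n\log n/p} \le 2\sqrt{2}\, n$ under the hypothesis $p \ge (\log n)/n$, which bounds the RHS by roughly $\bigl(\tfrac{1}{2} + \tfrac{2\sqrt{2}}{3}\bigr) n \|\bm{u}\|_2 \|\bm{v}\|_2 < 2n \|\bm{u}\|_2 \|\bm{v}\|_2$. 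The coefficient $8$ in the statement is essentially the smallest clean value that makes this numerical check go through with the chosen variance bound.
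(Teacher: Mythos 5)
Your decomposition into mean plus fluctuation is exactly the paper's first step ($\sum_a c_a=\|\bm{u}\|_1\|\bm{v}\|_1$ plus a centered sum), and your constant check for the fixed-pair Bernstein bound is essentially right. But the way you control the fluctuation leaves a genuine gap: applying scalar Bernstein to $\sum_a Y_a$ with $Y_a=(\delta_a/p-1)c_a$ produces an exceptional event that \emph{depends on} $(\bm{u},\bm{v})$, so you have only proved ``for each fixed $\bm{u},\bm{v}$, with probability $\ge 1-2n^{-2}$ \dots''. The lemma is used (via Lemma~\ref{projectionerr1} inside Lemma~\ref{PZ-Z}) on the matrices $\bm{\Delta_L},\bm{\Delta_R}$, which are built from the algorithm's iterates and hence depend on the random sample set $\Omega$; the paper explicitly treats \eqref{eq:projectionRIP2} as a single event under which all subsequent deterministic bounds hold, and counts it once in the overall $1-6n^{-2}$ probability. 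A pointwise-in-$(\bm{u},\bm{v})$ bound cannot be invoked at a random, $\Omega$-dependent pair, and there is no countable union to take over $\mathbb{R}^{n_1}\times\mathbb{R}^{n_2}$ without an additional covering/net argument that you have not supplied.

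The fix is small and is what the paper actually does: note that your fluctuation is a bilinear form, $\sum_a Y_a=\bm{u}^\top\big(\sum_a(\delta_a/p-1)\H\bm{e}_a\big)\bm{v}$, and bound the single random quantity $\big\|\sum_a(\delta_a/p-1)\H\bm{e}_a\big\|_2$ by the \emph{matrix} Bernstein inequality. Each summand has spectral norm at most $1/p$ (the $a$-th antidiagonal indicator matrix has orthonormal-type support, so $\|\H\bm{e}_a\|_2\le 1$) and the matrix variance is at most $n/p$, giving $\big\|\sum_a(\delta_a/p-1)\H\bm{e}_a\big\|_2\le\sqrt{8n\log n/p}$ with probability $\ge 1-2n^{-2}$ when $p\ge(\log n)/n$. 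On that one event the claimed inequality holds simultaneously for all $\bm{u},\bm{v}$ via $|\bm{u}^\top M\bm{v}|\le\|M\|_2\|\bm{u}\|_2\|\bm{v}\|_2$. Your per-term and variance estimates ($c_a\le\|\bm{u}\|_2\|\bm{v}\|_2$ and $\sum_a c_a^2\le\|\bm{u}\|_1\|\bm{v}\|_1\|\bm{u}\|_2\|\bm{v}\|_2$) become unnecessary once you work at the matrix level, since the uniform bound absorbs them.
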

\begin{proof}
The result of this lemma is similar to \cite[Lemma 5]{cai2018spectral} except the slightly different sampling models. We provide the proof for completeness. Let $\widehat{\bm{H}}_a:=\sqrt{\varsigma_a}\bm{H}_a=\H \bm{e}_a$, then
\begin{equation*}
\begin{aligned}
\frac{1}{p}\sum_a\sum_{i+j=a+1}\delta_a\bm{u}_i\bm{v}_j=& \frac{1}{p}\sum_a \delta_a\bm{u}\bm{H}_a\bm{v}\\
=&~\frac{1}{p}\bm{u}^\top\left(\sum_a \left(\delta_a-p\right)\bm{H}_a\right)\bm{v}+\bm{u}^\top \left(\bm{1}_{n_1} \bm{1}^\top_{n_2}\right)\bm{v}\\
\le&~\lV\left(\sum_a \left(\frac{\delta_a}{p}-1\right)\bm{H}_a\right)\rV_2 \lV\bm{v}\rV_2 \lV\bm{v}\rV_2+\lV\bm{u}\rV_1 \lV\bm{v}\rV_1.
\end{aligned}
\end{equation*}
Denote $\bm{\R}_a:=\left(\frac{\delta_a}{p}-1\right)\bm{H}_a$. We have $\mathbb{E}[\bm{\R}_a]=0$, and
\begin{equation*}
\lV \bm{\R}_a\rV_2 \le \frac{1}{p}\lV \bm{H}_a\rV_2\le \frac{1}{p}.
\end{equation*}
Moreover, we have
\begin{equation*}
\lV \mathbb{E}\left[\sum_a\bm{\R}_a \bm{\R}_a^*\right]\rV_2=\lV \sum_a \mathbb{E}\left(\frac{\delta_a}{p}-1\right)^2\bm{H}_a\bm{H}_a^* \rV_2\le \frac{n}{p}\lV\bm{H}_a\bm{H}_a^* \rV_2\le\frac{n}{p}.
\end{equation*}
Similarly, $\lV \mathbb{E}\left[\sum_a\bm{\R}_a^*\bm{\R}_a \right]\rV_2\le\frac{n}{p}$.
Hence, by Bernstein's inequality \cite[Theorem 1.6]{tropp2012user}, we have
\begin{equation*}
\mathbb{P}\left(\lV \sum_a \bm{\R}_a\rV_2>t\right)\le \left(n_1+n_2\right)\exp\left(\frac{-p t^2/2}{n+t/3}\right).
\end{equation*}
Letting $t=\sqrt{\left(8 n\log n\right)/p}$, we have
\begin{equation*}
\mathbb{P}\left(\lV \sum_a \bm{\R}_a\rV_2>t\right)\le 2n^{-2},
\end{equation*}
provided $p\ge \left( \log n\right)/n$.
\end{proof}

\section*{Acknowledgment}
The work of HanQin Cai is partially supported by NSF DMS 2304489. The work of J.-F. Cai is partially supported by Hong Kong Research Grants Council (HKRGC) GRF grants 16309518, 16309219, 16310620, and 16306821.
\bibliographystyle{abbrv}
\bibliography{robustHMR}

\begin{thebibliography}{10}

\bibitem{bhaskar2013atomic}
B.~N. Bhaskar, G.~Tang, and B.~Recht.
\newblock Atomic norm denoising with applications to line spectral estimation.
\newblock {\em IEEE Trans. Signal Process.}, 61(23):5987--5999, 2013.

\bibitem{bhatia2013matrix}
R.~Bhatia.
\newblock {\em Matrix analysis}, volume 169.
\newblock Springer Science \& Business Media, 2013.

\bibitem{cai2021asap}
H.~Cai, J.-F. Cai, T.~Wang, and G.~Yin.
\newblock Accelerated structured alternating projections for robust spectrally
  sparse signal recovery.
\newblock {\em IEEE Trans. Signal Process.}, 69:809--821, 2021.

\bibitem{cai2019accaltproj}
H.~Cai, J.-F. Cai, and K.~Wei.
\newblock Accelerated alternating projections for robust principal component
  analysis.
\newblock {\em J. Mach. Learn. Res.}, 20(1):685--717, 2019.

\bibitem{cai2021rtcur}
H.~Cai, Z.~Chao, L.~Huang, and D.~Needell.
\newblock Fast robust tensor principal component analysis via fiber {CUR}
  decomposition.
\newblock In {\em Proceedings of the IEEE/CVF International Conference on
  Computer Vision Workshops}, pages 189--197, 2021.

\bibitem{cai2021ircur}
H.~Cai, K.~Hamm, L.~Huang, J.~Li, and T.~Wang.
\newblock Rapid robust principal component analysis: {CUR} accelerated inexact
  low rank estimation.
\newblock {\em IEEE Signal Process. Lett.}, 28:116--120, 2021.

\bibitem{cai2021rcur}
H.~Cai, K.~Hamm, L.~Huang, and D.~Needell.
\newblock Robust {CUR} decomposition: Theory and imaging applications.
\newblock {\em SIAM J. Imaging Sci.}, 14(4):1472--1503, 2021.

\bibitem{cai2021lrpca}
H.~Cai, J.~Liu, and W.~Yin.
\newblock Learned robust {PCA}: A scalable deep unfolding approach for
  high-dimensional outlier detection.
\newblock In {\em Advances in Neural Information Processing Systems},
  volume~34, pages 16977--16989, 2021.

\bibitem{cai2018spectral}
J.-F. Cai, T.~Wang, and K.~Wei.
\newblock Spectral compressed sensing via projected gradient descent.
\newblock {\em SIAM J. Optim.}, 28(3):2625--2653, 2018.

\bibitem{cai2019fast}
J.-F. Cai, T.~Wang, and K.~Wei.
\newblock Fast and provable algorithms for spectrally sparse signal
  reconstruction via low-rank {H}ankel matrix completion.
\newblock {\em Appl. Comput. Harmon. Anal.}, 46(1):94--121, 2019.

\bibitem{candes2011robust}
E.~J. Cand{\`e}s, X.~Li, Y.~Ma, and J.~Wright.
\newblock Robust principal component analysis?
\newblock {\em Journal of the ACM}, 58(3):1--37, 2011.

\bibitem{candes2009exact}
E.~J. Cand{\`e}s and B.~Recht.
\newblock Exact matrix completion via convex optimization.
\newblock {\em Found. Comut. Math.}, 9(6):717--772, 2009.

\bibitem{chandrasekaran2011rank}
V.~Chandrasekaran, S.~Sanghavi, P.~A. Parrilo, and A.~S. Willsky.
\newblock Rank-sparsity incoherence for matrix decomposition.
\newblock {\em SIAM J. Optim.}, 21(2):572--596, 2011.

\bibitem{chen2021exact}
J.~Chen, W.~Gao, and K.~Wei.
\newblock Exact matrix completion based on low rank {H}ankel structure in the
  fourier domain.
\newblock {\em Appl. Comput. Harmon. Anal.}, 55:149--184, 2021.

\bibitem{chen2014robust}
Y.~Chen and Y.~Chi.
\newblock Robust spectral compressed sensing via structured matrix completion.
\newblock {\em IEEE Trans. Inf. Theory}, 60(10):6576--6601, 2014.

\bibitem{chen2013low}
Y.~Chen, A.~Jalali, S.~Sanghavi, and C.~Caramanis.
\newblock Low-rank matrix recovery from errors and erasures.
\newblock {\em IEEE Trans. Inf. Theory}, 59(7):4324--4337, 2013.

\bibitem{cherapanamjeri2017nearly}
Y.~Cherapanamjeri, K.~Gupta, and P.~Jain.
\newblock Nearly optimal robust matrix completion.
\newblock In {\em International Conference on Machine Learning}, pages
  797--805, 2017.

\bibitem{chi2019nonconvex}
Y.~Chi, Y.~M. Lu, and Y.~Chen.
\newblock Nonconvex optimization meets low-rank matrix factorization: An
  overview.
\newblock {\em IEEE Trans. Signal Process.}, 67(20):5239--5269, 2019.

\bibitem{hamm2022riemannian}
K.~Hamm, M.~Meskini, and H.~Cai.
\newblock Riemannian cur decompositions for robust principal component
  analysis.
\newblock {\em arXiv preprint arXiv:2206.09042}, 2022.

\bibitem{holland2011fast}
D.~J. Holland, M.~J. Bostock, L.~F. Gladden, and D.~Nietlispach.
\newblock Fast multidimensional {NMR} spectroscopy using compressed sensing.
\newblock {\em Angew. Chem. Int. Ed.}, 50(29):6548--6551, 2011.

\bibitem{jacob2020structured}
M.~Jacob, M.~P. Mani, and J.~C. Ye.
\newblock Structured low-rank algorithms: Theory, magnetic resonance
  applications, and links to machine learning.
\newblock {\em IEEE Signal Process. Mag.}, 37(1):54--68, 2020.

\bibitem{jin2016general}
K.~H. Jin, D.~Lee, and J.~C. Ye.
\newblock A general framework for compressed sensing and parallel {MRI} using
  annihilating filter based low-rank hankel matrix.
\newblock {\em IEEE Trans. Comput. Imaging}, 2(4):480--495, 2016.

\bibitem{larsen2004propack}
R.~M. Larsen.
\newblock {PROPACK}-software for large and sparse {SVD} calculations.

\bibitem{lee2020non}
B.~Lee and A.~Lamperski.
\newblock Non-asymptotic closed-loop system identification using autoregressive
  processes and {H}ankel model reduction.
\newblock In {\em 2020 59th IEEE Conference on Decision and Control (CDC)},
  pages 3419--3424, 2020.

\bibitem{liao2016music}
W.~Liao and A.~Fannjiang.
\newblock Music for single-snapshot spectral estimation: Stability and
  super-resolution.
\newblock {\em Appl. Comput. Harmon. Anal.}, 40(1):33--67, 2016.

\bibitem{lobos2020autoregression}
R.~A. Lobos, R.~M. Leahy, and J.~P. Haldar.
\newblock Autoregression and structured low-rank modeling of sinograms.
\newblock In {\em IEEE 17th International Symposium on Biomedical Imaging
  (ISBI)}, pages 1--4, 2020.

\bibitem{ma2021beyond}
C.~Ma, Y.~Li, and Y.~Chi.
\newblock Beyond procrustes: Balancing-free gradient descent for asymmetric
  low-rank matrix sensing.
\newblock {\em IEEE Trans. Signal Process.}, 69:867--877, 2021.

\bibitem{netrapalli2014non}
P.~Netrapalli, U.~Niranjan, S.~Sanghavi, A.~Anandkumar, and P.~Jain.
\newblock Non-convex robust {PCA}.
\newblock In {\em Advances in Neural Information Processing Systems}, pages
  1107--1115, 2014.

\bibitem{nguyen2013denoising}
H.~M. Nguyen, X.~Peng, M.~N. Do, and Z.-P. Liang.
\newblock Denoising {MR} spectroscopic imaging data with low-rank
  approximations.
\newblock {\em IEEE Trans. Biomed. Eng.}, 60(1):78--89, 2013.

\bibitem{oropeza2011simultaneous}
V.~Oropeza and M.~Sacchi.
\newblock Simultaneous seismic data denoising and reconstruction via
  multichannel singular spectrum analysis.
\newblock {\em Geophysics}, 76(3):V25--V32, 2011.

\bibitem{qu2015accelerated}
X.~Qu, M.~Mayzel, J.-F. Cai, Z.~Chen, and V.~Orekhov.
\newblock Accelerated {NMR} spectroscopy with low-rank reconstruction.
\newblock {\em Angew. Chem. Int. Ed.}, 54(3):852--854, 2015.

\bibitem{tang2013compressed}
G.~Tang, B.~N. Bhaskar, P.~Shah, and B.~Recht.
\newblock Compressed sensing off the grid.
\newblock {\em IEEE Trans. Inf. Theory}, 59(11):7465--7490, 2013.

\bibitem{tropp2012user}
J.~A. Tropp.
\newblock User-friendly tail bounds for sums of random matrices.
\newblock {\em Foundations of computational mathematics}, 12(4):389--434, 2012.

\bibitem{tropp2009beyond}
J.~A. Tropp, J.~N. Laska, M.~F. Duarte, J.~K. Romberg, and R.~G. Baraniuk.
\newblock Beyond {N}yquist: Efficient sampling of sparse bandlimited signals.
\newblock {\em IEEE Trans. Inf. Theory}, 56(1):520--544, 2009.

\bibitem{tu2016low}
S.~Tu, R.~Boczar, M.~Simchowitz, M.~Soltanolkotabi, and B.~Recht.
\newblock Low-rank solutions of linear matrix equations via procrustes flow.
\newblock In {\em International Conference on Machine Learning}, pages
  964--973. PMLR, 2016.

\bibitem{wright2008robust}
J.~Wright, A.~Y. Yang, A.~Ganesh, S.~S. Sastry, and Y.~Ma.
\newblock Robust face recognition via sparse representation.
\newblock {\em IEEE Trans. Pattern Anal. Mach. Intell.}, 31(2):210--227, 2008.

\bibitem{xi2008baseline}
Y.~Xi and D.~M. Rocke.
\newblock Baseline correction for {NMR} spectroscopic metabolomics data
  analysis.
\newblock {\em BMC Bioinformatics}, 9(1):1--10, 2008.

\bibitem{yi2016fast}
X.~Yi, D.~Park, Y.~Chen, and C.~Caramanis.
\newblock Fast algorithms for robust {PCA} via gradient descent.
\newblock In {\em Advances in neural information processing systems}, pages
  4152--4160, 2016.

\bibitem{zhang2018correction}
S.~Zhang and M.~Wang.
\newblock Correction of simultaneous bad measurements by exploiting the
  low-rank {H}ankel structure.
\newblock In {\em Proc. IEEE Int. Symp. Inf. Theory}, pages 646--650, 2018.

\bibitem{zhang2019correction}
S.~Zhang and M.~Wang.
\newblock Correction of corrupted columns through fast robust hankel matrix
  completion.
\newblock {\em IEEE Trans. Signal Process.}, 67(10):2580--2594, 2019.

\end{thebibliography}





\end{document}